\documentclass[prx,twocolumn,english,superscriptaddress,floatfix,longbibliography]{revtex4-2}

\usepackage[utf8]{inputenc}
\usepackage[T1]{fontenc}
\usepackage{lmodern}
\usepackage{amsmath,amssymb,amsthm}
\usepackage{bm}
\usepackage{physics}
\usepackage{mathtools}
\usepackage{graphicx}
\usepackage{booktabs}
\usepackage{array}
\usepackage{placeins}
\usepackage{xcolor}
\usepackage{hyperref}
\hypersetup{
  pdfdisplaydoctitle = true,
  pdftitle           = {Crystallographic Symmetry Generates Phononic Holonomic Gates with Biased-Erasure Channels},
  pdfauthor          = {El Mustapha Mansouri; Keigo Arai},
  pdfsubject         = {Symmetry-generated phononic Lambda control, holonomic gates, biased-erasure channels, and decoder-aware solid-state quantum information processing},
  pdfkeywords        = {phononic quantum control, holonomic quantum gates, biased-erasure noise, XZZX surface code, quantum error correction, nitrogen-vacancy center, silicon-vacancy center, strain-mediated control, crystallographic symmetry, Lambda systems, superadiabatic transitionless driving, very strong paper, solid-state quantum processors, NV Centers, SiV Centers, quantum error correction, quantum decoding, quantum channels, quantum noise bias, quantum gate design, quantum control, quantum information processing, quantum sensing, quantum metrology},
  colorlinks  = true,
  linkcolor   = blue!70!black,
  citecolor   = green!50!black,
  urlcolor    = blue!80!black,
  bookmarks   = true
}
\usepackage{siunitx}
\DeclareSIUnit{\strain}{strain}
\DeclareSIUnit{\gauss}{G}
\AtBeginDocument{\RenewCommandCopy\qty\SI}
\usepackage{tikz}
\usetikzlibrary{decorations.pathmorphing,arrows.meta}
\graphicspath{{figures/}}

\definecolor{CBblue}{HTML}{648FFF}
\definecolor{CBpink}{HTML}{DC267F}
\definecolor{CBamber}{HTML}{FFB000}
\definecolor{CBorange}{HTML}{FE6100}
\definecolor{CBviolet}{HTML}{785EF0}

\newcommand{\htwofive}{h_{25}}
\newcommand{\htwosix}{h_{26}}
\newcommand{\honesix}{h_{16}}

\newcommand{\Sx}{S_x}
\newcommand{\Sy}{S_y}
\newcommand{\Sz}{S_z}
\newcommand{\exx}{\varepsilon_{xx}}
\newcommand{\eyy}{\varepsilon_{yy}}
\newcommand{\exy}{\varepsilon_{xy}}
\newcommand{\exz}{\varepsilon_{xz}}
\newcommand{\eyz}{\varepsilon_{yz}}

\newcommand{\omD}{\omega_{\mathrm{d}}}
\newcommand{\Omm}{\Omega_{\mathrm{m}}}
\newcommand{\Dbend}{D_{\mathrm{bend}}}
\newcommand{\kmech}{k_{\mathrm{mech}}}
\newcommand{\tauc}{\tau_{\mathrm{c}}}
\newcommand{\ghop}{\gamma_{\mathrm{hop}}}
\newcommand{\tgate}{T_{\mathrm{gate}}}

\newcommand{\aCD}{\alpha_{\mathrm{CD}}}
\newcommand{\Favg}{F_{\mathrm{avg}}}
\newcommand{\Geff}{G_{\mathrm{eff}}}

\newtheorem{theorem}{Theorem}
\newtheorem{proposition}{Proposition}
\newtheorem{corollary}{Corollary}
\theoremstyle{remark}
\newtheorem{remark}{Remark}

\begin{document}

\title{Crystallographic Symmetry Generates Phononic Holonomic Gates with Biased-Erasure Channels}

\author{El Mustapha Mansouri}
\email{mansouri.e.2224@m.isct.ac.jp}
\affiliation{School of Engineering, Institute of Science Tokyo, Yokohama, Kanagawa, 226-8501, Japan}

\author{Keigo Arai}
\email{arai.k.835f@m.isct.ac.jp}
\affiliation{School of Engineering, Institute of Science Tokyo, Yokohama, Kanagawa, 226-8501, Japan}
\date{\today}

\begin{abstract}
Fault-tolerant solid-state processors require control layers whose
errors are not merely small, but legible to quantum-error-correction
decoders.  We show that crystallographic symmetry can provide such a
layer in strain-active solid-state $\Lambda$ manifolds.  When the
projected strain tensor and $\Lambda$-transition operators share a
multiplicity-one two-dimensional irreducible representation, symmetry
fixes the linear strain interaction to a scalar dot product.  Driving
two phase-locked mechanical modes in quadrature then synthesizes a
circular strain field, enabling complex phononic $\Lambda$-leg control
without local microwave near fields.  On this symmetry-generated
manifold we construct a superadiabatic echo-lune holonomic gate using
$\Lambda$-leg control and a resonant double-quantum counterdiabatic
tone.  Rotating-frame open-system simulations of a nitrogen-vacancy
center give $99.88\%$ conditional average fidelity in
\SI{1.833}{\micro\second}, or $99.40\%$ when leaked population is
counted as error; a separate resonant gigahertz high-overtone bulk
acoustic resonator analysis translates the same Hamiltonian into
Rabi-rate, linewidth, and envelope-tracking requirements.  The same
bright-state structure organizes the residual noise: $A_2$-sector
perturbations are parity-filtered into an optically distinguishable
auxiliary state, whereas transverse $E$-sector faults are echo
suppressed and retained as an explicit decoder stress axis.  The
extracted channel has $0.47\%$ erasure probability and $0.168\%$
residual $Z$ error, with transverse error at the extraction floor.  In
XZZX code-capacity simulations, this biased-erasure model yields a
nominal $64\%$ fit-extrapolated reduction in data-qubit count relative
to an unstructured Rabi baseline, while repeated-round detector-model
diagnostics preserve the nominal distance-$9$ proxy and identify
missed erasures, finite transverse floors, leakage/flag timing, and
strong crosstalk as circuit-level validation limits.  Extensions to
orbital $\Lambda$ systems, single-shot non-Abelian control, and
bright-projector phonon-bus diagnostics identify crystallographic
symmetry as a hardware-level principle for co-designing phononic
actuation, leakage, noise bias, and quantum decoding.
\end{abstract}

\maketitle

\begin{figure*}[t!]
  \centering
  \includegraphics[width=\textwidth]{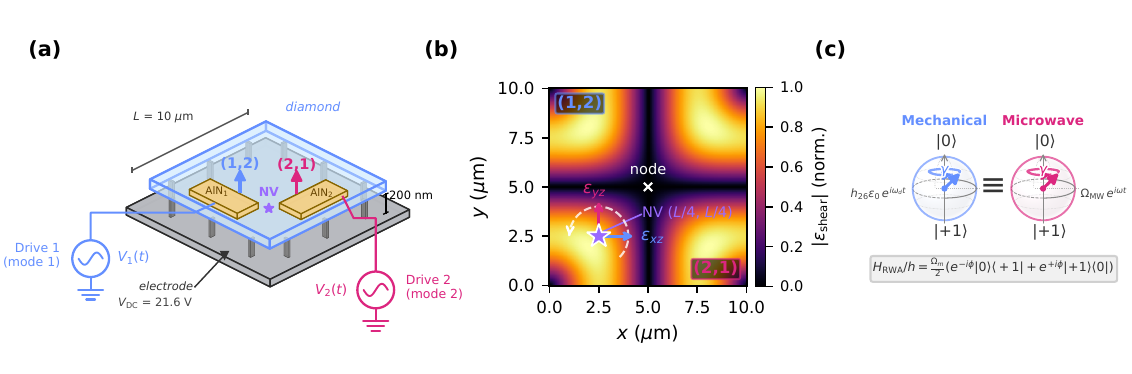}
  \caption{\label{fig:architecture}%
  Synthetic rotation isomorphism and hardware blueprint.
  (a)~Three-dimensional schematic: a simply-supported diamond membrane
  ($10\times 10\times \SI{0.2}{\micro\metre}$) suspended above a
  full-area electrode with a \SI{200}{\nano\metre} vacuum gap.
  Two AlN piezoelectric transducers independently excite the degenerate
  $(1{,}2)$ and $(2{,}1)$ flexural modes.
  (b)~Top-down strain topology showing orthogonal shear components
  $\exz$ and $\eyz$ combining in quadrature at the NV site
  ($L/4,\,L/4$) to produce a circularly rotating strain field.
  (c)~Bloch-sphere comparison: the mechanical rotating strain field
  drives the NV $|0\rangle\!\leftrightarrow\!|{+1}\rangle$ transition
  identically to a circularly polarized microwave field.
  The single-quantum transition shown here illustrates the fundamental
  strain--spin coupling mechanism; the holonomic compilers of
  Sec.~\ref{sec:gate} operate in the
  $\{|{-1}\rangle,|{+1}\rangle\}$ computational subspace via
  $\Lambda$-system traversal through~$|0\rangle$.
  The shear strain amplitude plotted in panel~(b) corresponds
  to $\varepsilon_0$ defined in Eq.~\eqref{eq:rotating_strain}.
  This figure illustrates one quadrature-mode topology; the resonant
  gigahertz high-overtone bulk acoustic resonator (GHz-HBAR)
  implementation is treated separately in Regime~B and
  Appendix~\ref{app:ghz}.  Regime~A is the rotating-frame
  $\Lambda$ + DQ-SATD channel benchmark; Regime~B is the resonant
  GHz-HBAR implementation route.
  }
\end{figure*}

\section{Introduction}
\label{sec:intro}

Fault-tolerant quantum processors require more than high-fidelity
physical gates.  They require control mechanisms whose locality,
native error channels, and leakage signatures are compatible with
decoding.  In most solid-state platforms, these layers are engineered
separately: the Hamiltonian is designed at the device level, while
erasure structure and error bias are imposed later through encoding,
measurement, and decoder choice.  Here we show that, in a broad class
of strain-active crystalline $\Lambda$ manifolds, crystallographic
symmetry can unify these layers.

The concrete motivation is the microwave delivery problem in dense
solid-state registers~\cite{degen2017_rmp,childress2013_review}.
State-of-the-art dynamical gates on
individually addressed qubits have achieved extraordinary fidelities,
including $99.999\%$ for nitrogen-vacancy (NV$^-$) centers in
diamond~\cite{bartling2025_nv}, $99.99\%$+ for trapped
ions~\cite{harty2014_trapped_ion}, and sub-$10^{-4}$ single-qubit
gate error in superconducting circuits~\cite{li2023_superconducting_gate},
alongside continuing superconducting-qubit architecture
development~\cite{hyyppa2022_unimon}; yet local microwave delivery in
dense layouts introduces cryogenic heat-load and fan-out constraints
in control stacks~\cite{hornibrook2015_cryocontrol,krinner2019_cryogenic,lecocq2021_photoniclink},
as well as spectral-crowding and crosstalk concerns for dense spin
control~\cite{kunne2024_spinbus} and microwave-field
delivery/characterization in NV platforms~\cite{mariani2020_mwfields}.
A phononic control layer replaces local near-field microwave delivery
by strain coupling at the qubit site.  GHz electronics may still drive
the piezoelectric transducer, but the strain field itself is local,
frequency-selective, and confined by the acoustic wavelength.

Mechanical control is already an experimental resource:
Barfuss~\textit{et~al.}~\cite{barfuss2015_natphys} demonstrated
coherent mechanical driving of a single NV spin,
MacQuarrie~\textit{et~al.}~\cite{macquarrie2013_prl,macquarrie2015_optica} and
Chen~\textit{et~al.}~\cite{chen2020_pra,chen2018_prl,chen2019_nanolett}
engineered acoustic control of spin and orbital transitions,
Lee~\textit{et~al.}~\cite{lee2025_aplquantum} showed multimode
longitudinal strain tuning, and
Cornell~\textit{et~al.}~\cite{cornell2025} achieved mechanical
coherence protection of silicon-vacancy (SiV$^-$) centers with Rabi
frequencies reaching $\SI{800}{\mega\hertz}$.
The missing ingredient for geometric quantum information processing is
complex $\Lambda$ control: a single acoustic mode produces linearly
polarized strain and cannot by itself select one arm of a
three-level $\Lambda$ manifold.

We show that this obstacle is removed whenever the defect's point
group possesses a two-dimensional irreducible representation
$\Gamma_E$: group theory \emph{mandates} a dot-product coupling
$H = g(\varepsilon_1 \mathcal{O}_1 + \varepsilon_2
\mathcal{O}_2)$, so that quadrature driving of two degenerate
modes produces a circularly rotating strain field
mathematically identical to a circularly polarized electromagnetic
drive (Theorem~\ref{thm:synrot}).
This \textit{synthetic rotation} mechanism is a shared-irrep
criterion on the effective device symmetry: it applies to any
strain-active solid-state $\Lambda$ manifold whose projected strain
pair and transition operators share a common two-dimensional irrep.
Defect centers provide the experimentally mature examples, including
$C_{3v}$ defects (NV$^-$, divacancy in SiC) via spin-strain and
$D_{3d}$ defects (SiV$^-$, GeV$^-$, SnV$^-$) via orbit-strain;
representative platforms are summarized in Sec.~\ref{sec:selrule}.

The microwave analogy is deliberately limited.  At the rotating-frame
Hamiltonian level the strain drive reproduces a circular microwave
drive, but its chirality is not imposed by electromagnetic
polarization.  It is selected by a symmetry-protected strain/operator
scalar of the crystal group.  The same representation structure that
fixes the control Hamiltonian also classifies perturbations into
irreducible sectors, linking Hamiltonian design to the effective error
channel rather than merely replacing one actuator by another.
We test that link explicitly: a sector-injection diagnostic shows
that the $\Lambda$ bright manifold routes $A_2$ perturbations toward
the detectable auxiliary level $|0\rangle$, whereas $E$-sector
perturbations remain in the logical bright direction and are removed
to leading order by the echo topology.

The protection statement is local and representation-theoretic:
point-group symmetry protects the allowed strain-defect tensor against
$\Geff$-preserving perturbations, but the mechanical mode degeneracy,
quadrature balance, bath spectral weights, and erasure-detection
efficiency remain device-level assumptions.  This distinction lets us
separate what is symmetry-protected from what is benchmarked
numerically.

The symmetry theorem supplies the essential control resource: a
protected complex $\Lambda$-system drive generated mechanically from a
$\Gamma_E$ strain doublet.  Once this $\Lambda$ manifold is available,
there are two complementary holonomic compilers.  The first is the
superadiabatic transitionless driving (SATD) echo-lune compiler used
below for the Regime-A NV channel benchmark and the quantum error
correction (QEC)-channel extraction.  The second is a
single-shot bright-state compiler in which the two $\Lambda$-leg
amplitudes and scalar detuning remain proportional to one common
envelope,
$u(t)=(\Omega_0,\Omega_1,\Delta)^T=\Omega(t)u_0$.  The SATD compiler
is therefore the channel-engineering route, while the single-shot
compiler is the compact universal-control route.

\begin{figure*}[t!]
  \centering
  \begin{tikzpicture}[
    box/.style={
      line width=0.42pt,
      rounded corners=2pt,
      minimum height=0.98cm,
      text width=2.75cm,
      align=center,
      font=\sffamily\scriptsize,
      inner xsep=5pt,
      inner ysep=4pt
    },
    tag/.style={
      font=\sffamily\tiny\bfseries,
      text=black!58,
      align=center
    },
    qectag/.style={
      tag,
      text=CBpink!70!black
    },
    symtag/.style={
      tag,
      text=red!65!black
    },
    unittag/.style={
      tag,
      text=CBblue!70!black
    },
    enttag/.style={
      tag,
      text=CBviolet!70!black
    },
    symbox/.style={
      box,
      draw=red!65!black,
      fill=red!5
    },
    rulebox/.style={
      box,
      draw=CBblue!70!black,
      fill=CBblue!6
    },
    manifoldbox/.style={
      box,
      draw=CBviolet!68!black,
      fill=CBviolet!6
    },
    compilerbox/.style={
      box,
      draw=black!46,
      fill=black!2
    },
    qecbox/.style={
      box,
      draw=CBpink!72!black,
      fill=CBpink!7
    },
    gatebox/.style={
      box,
      draw=CBblue!72!black,
      fill=CBblue!5
    },
    entbox/.style={
      box,
      draw=CBviolet!72!black,
      fill=CBviolet!5
    },
    arr/.style={
      -{Stealth[length=2.4mm, width=1.7mm]},
      line width=0.45pt,
      draw=black!52
    },
    qecarr/.style={
      arr,
      draw=CBpink!70!black
    },
    unitarr/.style={
      arr,
      draw=CBblue!70!black
    },
    entarr/.style={
      arr,
      draw=CBviolet!70!black
    }
  ]
    \node[symtag] at (0,1.05) {SYMMETRY};
    \node[symbox] (sym) at (0,0)
      {\textbf{Crystal/device}\\effective symmetry};

    \node[tag] at (3.20,1.05) {SELECTION RULE};
    \node[rulebox] (irrep) at (3.20,0)
      {\textbf{Shared 2D irrep}\\strain and operators};

    \node[tag] at (6.40,1.05) {CONTROL MANIFOLD};
    \node[manifoldbox] (drive) at (6.40,0)
      {\textbf{Synthetic circular}\\$\Lambda$ drive};

    \node[qectag] at (10.25,2.42) {ERROR-CORRECTION PATH};
    \node[compilerbox] (satd) at (10.25,1.55)
      {\textbf{Transitionless}\\echo-lune gate};
    \node[qecbox] (channel) at (13.45,1.55)
      {\textbf{Biased-erasure}\\decoder diagnostic};

    \node[unittag] at (10.25,0.87) {UNIVERSAL GATE PATH};
    \node[compilerbox] (single) at (10.25,0)
      {\textbf{Single-shot}\\bright-state compiler};
    \node[gatebox] (su2) at (13.45,0)
      {\textbf{Non-Abelian}\\$\mathrm{SU}(2)$ control};

    \node[enttag] at (10.25,-0.68) {ENTANGLING PATH};
    \node[compilerbox] (bus) at (10.25,-1.55)
      {\textbf{Bright-projector}\\phonon bus};
    \node[entbox] (cz) at (13.45,-1.55)
      {\textbf{CZ-class}\\phase-biased channel};

    \draw[arr] (sym) -- (irrep);
    \draw[arr] (irrep) -- (drive);
    \draw[arr] (drive.east) to[out=16,in=180] (satd.west);
    \draw[arr] (drive.east) -- (single.west);
    \draw[arr] (drive.east) to[out=-16,in=180] (bus.west);
    \draw[qecarr] (satd) -- (channel);
    \draw[unitarr] (single) -- (su2);
    \draw[entarr] (bus) -- (cz);
  \end{tikzpicture}
  \caption{\label{fig:principle}%
  Symmetry-to-decoder logic of the proposal.
  A multiplicity-one shared two-dimensional irrep fixes the linear
  strain interaction to a scalar dot product.  Quadrature mechanical
  driving then realizes a protected complex $\Lambda$ manifold.  On
  that manifold, the superadiabatic transitionless driving (SATD)
  echo-lune compiler engineers the
  biased-erasure channel used for the quantum error correction (QEC)
  analysis, while the
  single-shot bright-state compiler supplies compact non-Abelian
  $\mathrm{SU}(2)$ control.  The sector-injection diagnostic in
  Sec.~\ref{sec:erasure} and Appendix~\ref{app:noise_sectors} verifies the intermediate
  sector-to-channel map on the SATD/QEC path.  A Regime-F effective
  bright-projector bus extension, discussed in
  Sec.~\ref{sec:universality} and Appendix~\ref{app:2q_projector_extension},
  tests the corresponding CZ-class entangling-channel diagnostic but
  is not part of the single-qubit compiler stack.}
\end{figure*}

\begin{table*}[t!]
\caption{\label{tab:regimes}Validation hierarchy.
Each row names the model layer and its role in the paper; numerical
performance is reported in the corresponding Results sections.}
\begin{ruledtabular}
\begin{tabular}{llll}
Regime & Platform / model & Control resources & What this establishes \\
\colrule
A & NV rotating-frame channel & $\Lambda$ legs + DQ-SATD &
Biased-erasure channel and sector map \\
B & NV GHz-HBAR projection & Same protocol at resonant $\Omm$ &
Resonant route and envelope tracking \\
C & SiV orbital $\Lambda$ benchmark & $D_{3d}$ orbit-strain $\Lambda$ legs &
Synthetic-rotation transfer \\
D & Generic/NV single-shot compiler & $\Lambda$ legs + scalar detuning &
Compact non-Abelian $\mathrm{SU}(2)$ control \\
E & SiV single-shot compiler & SiV $\Lambda$ legs + scalar detuning &
Fast orbital universal-control route \\
F & Bright-projector phonon bus & Bright projectors + phonon loop &
CZ-class phase-biased diagnostic \\
\end{tabular}
\end{ruledtabular}
\end{table*}

A matched-environment 3C-SiC $C_{3v}$ comparison is reported in
Sec.~\ref{sec:c3v_platform_rule} and Appendix~\ref{app:extended_sweeps_sm},
Sec.~7, as a
platform-design check, not as a separate measured-device benchmark.

One NV membrane mode topology that can instantiate the rotating-frame
channel benchmark is summarized in Fig.~\ref{fig:architecture}.
The erasure and decoder analysis below use the extracted
$\Lambda$ + DQ-SATD channel, not the membrane strain model by itself.
The resulting symmetry-to-decoder logic and validation hierarchy are
summarized in Fig.~\ref{fig:principle} and Table~\ref{tab:regimes}.

For the SATD compiler, we trace the fidelity ceiling of the adiabatic
Orange Slice holonomic protocol to a geometric singularity at the
Bloch-sphere South Pole and eliminate it with a composite
nonadiabatic geometric quantum computation (NGQC) two-loop topology
combined with superadiabatic transitionless driving
(SATD, $\lambda \equiv 1.0$)~\cite{zhu2002_prl,berry2009_transitionless}.
This echo-lune protocol removes the adiabatic speed limit and shapes
the residual channel into the biased-erasure form used for the QEC
analysis.
For the single-shot compiler, the same complex $\Lambda$ legs are held
in a fixed bright-state direction while a common envelope closes one
cyclic excursion; this gives a simpler universal gate layer, but its
residual channel is not merged with the SATD QEC channel.

We validate the framework through a controlled hierarchy of models.
Regime~A establishes the symmetry-generated rotating-frame NV channel
and decoder-facing biased-erasure noise model; Regime~B separately
maps the resonant GHz-HBAR implementation route, where Rabi rate,
linewidth, and envelope tracking become resonator-design parameters.
Regime~A is a rotating-frame NV channel benchmark.  The membrane mode
topology of Fig.~\ref{fig:architecture} supplies an illustrative
quadrature strain source, while the benchmark itself tests the
complete $\Lambda$ + DQ-SATD Hamiltonian at
$\Omm=\SI{2.22}{\mega\hertz}$ and supplies the biased-erasure channel
used for the decoder analysis.
This benchmark reaches conditional $\Favg=99.88\%$ at
$\tgate=\SI{1.833}{\micro\second}$, with
$\Favg^{\mathrm{eff}}=99.40\%$ after leakage is counted as error.
Regime~B evaluates the corresponding resonant GHz-HBAR implementation
route at the experimentally projected $\Omm=2.83$, $28.3$, and
$\SI{141.5}{\kilo\hertz}$ scale.
Regime~C is a conservative SiV $\Lambda$-leg benchmark that tests
$D_{3d}$ synthetic-rotation control without an additional
lower-doublet SATD channel, giving $\Favg=96.32\%$ at
$\tgate=\SI{46.2}{\nano\second}$.
Regime~D is the single-shot universal suite on the same
symmetry-generated $\Lambda$ manifold, using a
$\tgate=\SI{0.9009}{\micro\second}$ bright-state pulse with the
two $\Lambda$ legs plus scalar detuning and reaching
$F_{\mathrm{eff}}\simeq99.86\%$ across representative
noncommuting single-qubit gates.
Regime~E applies the same single-shot bright-state compiler directly
to the SiV orbital $\Lambda$ manifold using only the two orbit-strain
$\Lambda$ legs and a synchronized scalar detuning.
No lower-doublet SATD matrix element is assumed.
At the primary millikelvin point, $\Omega_{\mathrm{peak}}
= \SI{300}{\mega\hertz}$ gives
$\tgate=\SI{6.667}{\nano\second}$ and
$F_{\mathrm{eff}}=99.52\%$--$99.67\%$ across the representative
four-gate suite.
The dominant residual error is orbital-$T_1$ depolarization during
auxiliary occupation, not erasure-convertible leakage.
Regime~F is an effective $\Lambda$-level architecture diagnostic:
the same logical bright projectors couple to a detuned phonon loop to
test whether a CZ-class two-qubit channel can remain phase-biased and
heralded-leakage compatible.  Regime~F is evaluated at the effective
projector-force level and supplies the entangling-channel diagnostic.

The same $\Gamma_E$-enabled $\Lambda$ structure also organizes the
leading noise operators into $A_1$, $A_2$, and $E$ sectors.
For weak noise with comparable irrep-sector bath weights, $A_1$
perturbations are common-mode, $A_2$ perturbations preferentially
leak toward the optically distinguishable auxiliary state, and
$E$-sector bit-flip errors are echo-suppressed.
In the Regime-A rotating-frame NV channel this gives a strongly
biased extracted model with $p_{\mathrm{era}}=0.47\%$ and
$p_Z=0.168\%$, while the transverse component lies at the numerical
floor of the channel extraction.  In a code-capacity model of this
extracted Regime-A channel, XZZX and Calderbank-Shor-Steane (CSS)
codes lie on opposite sides
of the finite-size scaling boundary, yielding a substantial nominal
overhead reduction.  We treat this number as a logical-channel
diagnostic rather than a complete architecture threshold, and promote
finite $p_{XY}$ floors and reduced erasure-detection efficiency to
explicit decoder stress axes (Appendix~\ref{app:qec_montecarlo}).

This hierarchy also defines the scope of the simulations.
Regime~A is the rotating-frame NV $\Lambda$ + DQ-SATD channel
extraction, with the membrane mode topology serving as one
strain-source realization; Regime~B separately evaluates the resonant
GHz-HBAR implementation route and its transfer-function constraints.
Regime~C tests the $D_{3d}$ $\Lambda$-control mechanism using only
the $\Lambda$ legs established by the orbit-strain selection rule.
Regime~D establishes compact universal non-Abelian control on the
same symmetry-generated $\Lambda$ manifold.
Regime~E establishes the corresponding SiV-compatible single-shot
universal-control route, again using only $\Lambda$ legs plus scalar
detuning.
Regime~F closes the architecture only at the effective
projector-force level, as an entangling-channel diagnostic.
The primary QEC simulations isolate the code-capacity consequence of
the Regime~A channel and then sweep finite transverse-error floors as
a decoder validation envelope.  We additionally run a lightweight
scheduled two-sector XZZX detector-model stress diagnostic with
repeated syndrome rounds, erasure-aware weights, measurement/reset
faults, explicit transverse $X/Y$ faults, leakage persistence,
delayed flags, and local crosstalk.  This diagnostic preserves the
nominal $d=9$ biased-erasure proxy under the extracted channel
assumptions and identifies strong local crosstalk as the dominant
architecture-specific validation target.  A hardware-calibrated
detector-error model remains the next step toward a device-level
threshold.

The paper is organised as follows.
Section~\ref{sec:synrot_theory} derives the synthetic rotation
selection rule and the platform-independent $\Lambda$ Hamiltonian.
Section~\ref{sec:gate} presents the holonomic compilers:
the adiabatic baseline and its South Pole limitation, the composite
NGQC + SATD channel-engineering route, the single-shot bright-state
universal-control route, and the gate-time compression mechanism.
Sections~\ref{sec:nv} and~\ref{sec:siv} provide full
platform-specific models for NV$^-$ and SiV$^-$, respectively.
Section~\ref{sec:results} reports the simulation results, error
budget, biased-erasure noise analysis, decoder estimates, and scheduled
stress diagnostic.
Section~\ref{sec:discussion} discusses the cross-platform scaling
law, universal and entangling extensions, and experimental outlook.

\section{Symmetry-to-\texorpdfstring{$\Lambda$}{Lambda} theory}
\label{sec:synrot_theory}

\subsection{Selection rule for strain-mediated $\Lambda$-system control}
\label{sec:selrule}

We begin with the central theoretical result: a shared-irrep selection
rule that identifies when an effective solid-state manifold supports
complex $\Lambda$-leg control via mechanical strain.

\begin{theorem}[Shared-irrep synthetic circular-strain interface]
\label{thm:synrot}
Let $\mathcal H_\Lambda$ be a projected low-energy solid-state
$\Lambda$ manifold with effective device symmetry
$\Geff$ and a real two-dimensional irreducible representation
$\Gamma_E$.  For an ideal point defect, $\Geff$ reduces to the defect
point group; in a device it is the little group left by crystal
symmetry, confinement, static fields, and the acoustic mode basis.
For $D_{3d}$ centers, replace $E,A_1,A_2$ below by
$E_g,A_{1g},A_{2g}$.  Suppose:
\begin{enumerate}
  \item The projected symmetric strain tensor contains a pair
  $\boldsymbol{\epsilon}=(\epsilon_1,\epsilon_2)$ transforming as
  $\Gamma_E$.
  \item The transition-operator space of the encoded manifold contains
  a pair
  $\mathbf{O}=(O_1,O_2)$, spin or orbital, transforming under the
  same $\Gamma_E$.
  \item The mechanical resonator supplies two degenerate phase-locked
  modes whose local strain projections are $\epsilon_1$ and
  $\epsilon_2$ at the defect site.
  \item The selected shared-irrep channel is multiplicity-one in the
  projected linear strain coupling.
\end{enumerate}
Then:
\begin{enumerate}
  \item The only $\Geff$-invariant linear strain-defect coupling in
  this channel is
  \[
    H_{\mathrm{int}}=g(\epsilon_1O_1+\epsilon_2O_2).
  \]

  \item The non-scalar bilinears transform as
  \[
    B_{A_2}=\epsilon_1O_2-\epsilon_2O_1,
  \]
  and
  \[
    \mathbf{B}_E=
    \left(
    \epsilon_1O_1-\epsilon_2O_2,\;
    \epsilon_1O_2+\epsilon_2O_1
    \right).
  \]
  Thus they cannot appear in the Hamiltonian unless accompanied by
  symmetry-breaking spurions transforming as $A_2$ or $E$.

  \item A $\Geff$-preserving perturbation can only renormalize the scalar
  coefficient $g\rightarrow g+\delta g$ or generate $A_1$ detuning
  shifts.  It cannot generate ellipticity, handedness mixing, or
  anisotropic strain coupling at first order.  In this precise sense,
  the circular strain selection rule is symmetry-protected.

  \item For quadrature driving,
  \[
    \epsilon_1(t)=\epsilon_0\cos\omega_d t,\qquad
    \epsilon_2(t)=\epsilon_0\sin\omega_d t,
  \]
  and circular operators $O_\pm=O_1\pm iO_2$, the interaction becomes
  \[
    H_{\mathrm{int}}(t)=\frac{g\epsilon_0}{2}
    \left(
    O_+e^{-i\omega_d t}+O_-e^{+i\omega_d t}
    \right).
  \]
  Under the rotating-wave approximation, tuning $\omega_d$ to one arm
  of the $\Lambda$ system retains a single circular component, yielding
  the single-arm rotating Hamiltonian used below.
\end{enumerate}
\end{theorem}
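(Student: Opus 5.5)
The argument is representation-theoretic: decompose $\Gamma_E\otimes\Gamma_E$ into irreducibles, use Schur's lemma to isolate the unique invariant, and finish with direct algebra for the drive.

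\textbf{Choice of basis.} Work in a real orthonormal basis where $\Gamma_E$ acts by orthogonal matrices $D(g)$. For $C_{3v}$ and $D_{3d}$ these are rotations $R(\theta)$ for the threefold elements and reflections $\mathrm{diag}(1,-1)$ for the mirror/twofold elements. Hypotheses~1--2 let us choose $\boldsymbol\epsilon$ and $\mathbf O$ in this same basis; if they transform by equivalent but different bases, an orthogonal intertwiner aligns them. For $D_{3d}$ the inversion parity is $g\otimes g=g$, so only the labels change.

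\textbf{Part 1.} A linear strain--defect coupling in the selected channel is a bilinear $\sum_{ij}M_{ij}\epsilon_iO_j$. It is $\Geff$-invariant iff $D(g)^TMD(g)=M$ for all $g$. Since $\Gamma_E$ is real, $D^T=D^{-1}$, so $M$ commutes with every $D(g)$. Because $\Gamma_E$ is absolutely irreducible for these groups, Schur's lemma gives $M=g\,\mathbb 1$; for a real irrep that is not absolutely irreducible there would be a second commutant element $J$, which is exactly what hypothesis~4 excludes. I would make this explicit via the Clebsch--Gordan decomposition $\Gamma_E\otimes\Gamma_E=A_1\oplus A_2\oplus E$, where $A_1$ appears once, so the invariant subspace is one-dimensional.

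\textbf{Part 2.} Check directly the transformation of the three remaining combinations. Under $R(\theta)$, $B_{A_2}=\boldsymbol\epsilon\times\mathbf O$ is invariant, while under the reflection it flips sign; this is the $A_2$ character. The pair $\mathbf B_E$ rotates by $R(-2\theta)$ and has reflection matrix $\mathrm{diag}(1,-1)$. Since $2\theta\equiv-\theta \pmod{2\pi}$ for $\theta=2\pi/3$, this reproduces $\Gamma_E$ up to an equivalent basis. A term $c\,B_{A_2}$ or $\mathbf c\cdot\mathbf B_E$ is invariant only if its coefficient transforms in the same irrep, that is, if it is a symmetry-breaking spurion.

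\textbf{Part 3.} A $\Geff$-preserving perturbation $V$ keeps the coupling invariant. To first order it therefore shifts $M$ within the commutant ($\delta g$) and adds invariant operators on the manifold, which are $A_1$ terms such as detunings. Ellipticity or handedness mixing would require $B_{E}$ or $B_{A_2}$ coefficients, which are forbidden by Part~2.

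\textbf{Part 4.} Substitute the drive and expand:
\[
\epsilon_0(\cos\omega_dt\,O_1+\sin\omega_dt\,O_2)=\tfrac{\epsilon_0}{2}\bigl(O_+e^{-i\omega_dt}+O_-e^{i\omega_dt}\bigr),
\]
using $\cos=\tfrac12(e^{i}+e^{-i})$ and $\sin=\tfrac1{2i}(e^{i}-e^{-i})$. In the interaction picture of the bare $\Lambda$ Hamiltonian, $O_\pm$ carry definite transition frequencies; for example, $O_+$ raises $m_s$ for the NV spin. Tuning $\omega_d$ to one arm leaves one term stationary, and the RWA discards the terms oscillating at sum and offset frequencies, provided $g\epsilon_0$ is much smaller than the splitting between the arms.

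\textbf{Main obstacle.} The hard step is making hypothesis~4 rigorous for the real (non-complex) setting, so that the commutant is exactly one-dimensional. A related difficulty is justifying that mode degeneracy in hypothesis~3 permits a common basis. I would handle both with the explicit $C_{3v}$ matrices rather than abstractly.
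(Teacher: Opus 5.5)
Your proposal is correct and follows the paper's proof step for step: the invariance condition $D(g)^{T}MD(g)=M$ plus Schur's lemma forces $M=g\,I_2$, the decomposition $\Gamma_E\otimes\Gamma_E=A_1\oplus A_2\oplus E$ identifies the non-scalar bilinears that can enter only through spurions, and direct substitution plus the RWA gives the circular form. You also make explicit that the real form of Schur's lemma requires absolute irreducibility, which holds for $E$ of $C_{3v}$ and $E_g$ of $D_{3d}$ and which the paper leaves implicit. One harmless slip: $\mathbf{B}_E$, the real and imaginary parts of $(\epsilon_1+i\epsilon_2)(O_1+iO_2)$, rotates by $R(2\theta)$, not $R(-2\theta)$, and that is what your subsequent $2\theta\equiv-\theta$ step actually uses.
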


\begin{proof}[Proof sketch]
Write the most general linear coupling in the selected doublet channel
as $H_\epsilon=\boldsymbol{\epsilon}^{T}C\mathbf{O}$.  Invariance
under $g\in\Geff$ requires
$R_E(g)^T C R_E(g)=C$.  Because the shared irrep is irreducible and
appears once in the selected channel, Schur's lemma gives
$C=g_0I_2$, yielding the dot product.  For the defect point groups used
below, the familiar decompositions
$\Gamma_E\otimes\Gamma_E=A_1\oplus A_2\oplus E$ for $C_{3v}$ and
$E_g\otimes E_g=A_{1g}\oplus A_{2g}\oplus E_g$ for $D_{3d}$ give the
listed non-scalar bilinears explicitly.  They can enter only when
multiplied by symmetry-breaking spurions in the corresponding irreps.
Substitution of the quadrature drive gives the circular decomposition
above, and the RWA selects one component on resonance.  A complete
projector derivation and spurion classification are given in
Appendix~\ref{app:noise_sectors}.
The representation-theory notation and selection-rule logic follow
standard condensed-matter group-theory conventions~\cite{dresselhaus2008_group}.
\end{proof}

\begin{remark}[Effective symmetry beyond point defects]
Theorem~\ref{thm:synrot} is not a statement about defect chemistry.
For a non-defect solid-state subsystem, such as an acceptor,
valley-orbital manifold, hole-spin manifold, or gate-defined quantum
dot, the relevant group is the effective little group
\[
  \Geff
  =
  G_{\mathrm{crystal}}
  \cap G_{\mathrm{confinement}}
  \cap G_{\mathrm{device}}
  \cap G_{\mathrm{field}} .
\]
The criterion is the same: the projected strain tensor and the
transition-operator space of the encoded $\Lambda$ manifold must share
a multiplicity-one two-dimensional irrep.  Thus the mechanism
generalizes to strain-active crystalline $\Lambda$ manifolds
satisfying this shared-irrep criterion.  The present work focuses on
defect centers because they provide natural high-symmetry realizations
and experimentally established strain couplings.
\end{remark}

The theorem protects the local strain-defect tensor, not the mechanical
device by itself.  The resonator must either supply a
symmetry-degenerate doublet or be tuned and phase-locked so that
residual splitting and quadrature imbalance are perturbative on the
gate timescale.  Point groups with only one-dimensional irreps, such
as $C_{2v}$, $C_{2h}$, or $C_2$, cannot support this synthetic circular
mechanism because no strain pair transforms as a single
two-dimensional irrep.

The NV-center platform is reviewed in
Refs.~\cite{childress2013_review,doherty2013_nvreview}, while
group-IV split-vacancy centers in diamond are reviewed in
Ref.~\cite{bradac2019_groupiv}; the microscopic group-IV electronic
structure is treated in Ref.~\cite{thiering2018_groupiv}.
Table~\ref{tab:platforms} summarizes the verified platforms.

\begin{table}[b]
\caption{\label{tab:platforms}%
Solid-state platforms satisfying the synthetic rotation selection
rule (Theorem~\protect\ref{thm:synrot}).  All entries have a
verified $A_1$($A_{1g}$)-type dot-product coupling.}
\begin{ruledtabular}
\begin{tabular}{lcccc}
\textrm{Platform} & \textrm{Group} &
  \textrm{$\Lambda$ type} &
  \textrm{Coupling} &
  \textrm{Splitting} \\
\colrule
NV$^-$ (diamond) & $C_{3v}$ & Spin & 2.8~GHz/str. & 2.87~GHz \\
VV$^0$ (3C-SiC) & $C_{3v}$ & Spin & 1.8~GHz/str. & 1.33~GHz \\
SiV$^-$ (diamond) & $D_{3d}$ & Orbital & 1.3~PHz/str. & 48~GHz \\
GeV$^-$ (diamond) & $D_{3d}$ & Orbital & $\sim$1~PHz/str. & 170~GHz \\
SnV$^-$ (diamond) & $D_{3d}$ & Orbital & $\sim$1~PHz/str. & 850~GHz \\
\end{tabular}
\end{ruledtabular}
\end{table}

Among $C_{3v}$ platforms, the neutral divacancy in 3C-SiC shares the
NV spin-strain tensor structure
identically~\cite{falk2014_sic,udvarhelyi2018_sic_pra}, differing
only in coupling magnitudes and zero-field splitting;
Density-functional theory (DFT) calculations yield
$|h_{16}/h_{26}| \approx 0.75$ for 3C-SiC
(compared with $6.95$ for NV$^-$), so the parasitic AC~Stark
channel that arises in NV (Sec.~\ref{sec:acstark}) is intrinsically
weaker in this host.
This ratio reappears below as a $C_{3v}$ platform-selection
parameter: at fixed $\Omm$, the parasitic DQ Stark scale is
proportional to $|h_{16}/h_{26}|^2/D$.
Among $D_{3d}$ platforms, SiV$^-$
(spin-orbit splitting $\Delta_{\mathrm{SO}} = 48$~GHz) is within reach of demonstrated
acoustic resonator frequencies~\cite{cornell2025}, while
GeV$^-$ (170~GHz~\cite{wan2018_gev}) and
SnV$^-$ (850~GHz~\cite{trusheim2020_snv}) require nanophotonic
or optomechanical transduction.

\subsection{Synthetic rotation via degenerate mechanical modes}
\label{sec:synrot_mechanism}

The selection rule guarantees the \emph{form} of the coupling; here
we show how a concrete resonator geometry realizes it.
Consider a mechanical resonator supporting two degenerate modes
whose strain fields project onto orthogonal components
$\varepsilon_1$ and $\varepsilon_2$ at the defect site.
Driving these modes in quadrature with equal amplitude produces a
circularly rotating strain field:
\begin{equation}
  \varepsilon_1(t) = \varepsilon_0 \cos(\omD t),
  \quad
  \varepsilon_2(t) = \varepsilon_0 \sin(\omD t),
  \label{eq:rotating_strain}
\end{equation}
where $\varepsilon_0$ is the peak strain amplitude and
$\omega_d = 2\pi f_0$ the mechanical drive angular frequency,
tuned to the $\Lambda$-system splitting.
Substituting into the dot-product coupling of Theorem~\ref{thm:synrot}
and applying the rotating-wave approximation yields an effective
$\sigma_+$-type Hamiltonian
\begin{equation}
  \frac{H_{\mathrm{RWA}}}{h} = \frac{\Omm}{2}
  \left(e^{-i\phi}|0\rangle\langle{+1}|
  + e^{+i\phi}|{+1}\rangle\langle 0|\right),
  \label{eq:HRWA}
\end{equation}
mathematically identical to a circularly polarized microwave drive,
where $\Omm = g\,\varepsilon_0$ is the mechanical Rabi frequency.
The physical origin of $\Omm$ differs by platform
(spin-strain coupling $h_{26}$ for $C_{3v}$; orbit-strain
coupling $f_\perp$ for $D_{3d}$), but the rotating-frame
Hamiltonian is universal.

\subsection{Effective $\Lambda$-system Hamiltonian}
\label{sec:lambda_ham}

The holonomic gate protocols of Sec.~\ref{sec:gate} require two
independently controlled circular drives of opposite chirality:
$\sigma_+$ near $D_+$ and $\sigma_-$ near $D_-$
(where $D_\pm \equiv D \pm \gamma_e B_z$ are the
Zeeman-split transition frequencies), to
simultaneously address both legs of the $\Lambda$ manifold.
The identical spatial mode topology provides the required
orthogonal coupling for each chirality with independent amplitude
and phase control.
The complete control Hamiltonian decomposes as
\begin{multline}
  \frac{H_{\Lambda}(t)}{h} = \frac{\Omm}{2}\!
  \Biggl[\sin\!\frac{\theta}{2}\,
    \bigl(|0\rangle\langle{-1}| + \text{h.c.}\bigr) \\
  - \cos\!\frac{\theta}{2}\,
    \bigl(e^{-i\phi}|0\rangle\langle{+1}| + \text{h.c.}\bigr)
  \Biggr],
  \label{eq:Hlambda}
\end{multline}
where $\theta(t)$ and $\phi(t)$ are the time-dependent mixing angle
and azimuthal phase. The zero eigenstate of $H_\Lambda$ is
the dark state
\[
  |D\rangle =
  \cos\frac{\theta}{2}|{-1}\rangle
  +
  e^{i\phi}\sin\frac{\theta}{2}|{+1}\rangle ,
\]
while the coupled bright state is
\[
  |B\rangle =
  \sin\frac{\theta}{2}|{-1}\rangle
  -
  e^{i\phi}\cos\frac{\theta}{2}|{+1}\rangle .
\]
With this convention the Hamiltonian reduces to
\[
  H_\Lambda/h = \frac{\Omm}{2}
  \left(|0\rangle\langle B|+|B\rangle\langle 0|\right),
\]
so that $|D\rangle$ is exactly dark and $|B\rangle$ couples to
$|0\rangle$ with strength $\Omm/2$. Cyclic traversal of
$|D\rangle$ on the Bloch sphere generates the geometric holonomy.

The subsequent gate protocols (Sec.~\ref{sec:gate}) operate entirely
on this Hamiltonian, parameterised solely by $\Omm$, $\theta(t)$,
and $\phi(t)$.
The physical realization of $\Omm$ differs by platform
(Secs.~\ref{sec:nv} and~\ref{sec:siv}), but the gate construction
is platform-independent.

\section{Holonomic compilers on the symmetry-generated \texorpdfstring{$\Lambda$}{Lambda} manifold}
\label{sec:gate}

The symmetry-to-$\Lambda$ reduction supplies the common control
manifold.  The remaining choice is the compiler: how the available
complex $\Lambda$ legs, and when needed a scalar detuning or auxiliary
counterdiabatic actuator, are turned into a logical gate.  We use two
compilers with different purposes.  The SATD echo-lune compiler is the
error-channel engineering route used for the Regime-A NV channel benchmark and
the QEC extraction.  The single-shot bright-state compiler is the
compact universal-control route on the same manifold.
This holonomic-control language builds on non-Abelian geometric
phases and holonomic quantum computation~\cite{wilczek1984_gauge,zanardi1999_holonomic},
with the nonadiabatic $\Lambda$-system formulation closely related to
Ref.~\cite{sjoqvist2012_nhqci}.

\subsection{Adiabatic baseline and the South Pole singularity}
\label{sec:orangeslice}

The adiabatic Orange Slice protocol is the useful foil: it sweeps the
dark state $|D(\theta,\phi)\rangle$ of Eq.~\eqref{eq:Hlambda} around a
two-leg loop, but the path crosses the South-Pole coordinate
singularity where the azimuthal phase jumps.  In our benchmark this
leaves a $99.07\%$ noiseless ceiling even after first-order
derivative-removal-by-adiabatic-gate (DRAG)-type correction, showing
that the bottleneck is geometric rather
than a simple drive-amplitude limit.

The relevant counter-diabatic operator basis is
\begin{equation}
  H_{\mathrm{CD}} = -\lambda\,\frac{\dot{\theta}}{2}
  \left[\sin\phi\; \mathrm{Op}_{\mathrm{re}}
  + \cos\phi\; \mathrm{Op}_{\mathrm{im}}\right],
  \label{eq:HCD}
\end{equation}
Here $\lambda$ is the counter-diabatic correction strength, and
$\mathrm{Op}_{\mathrm{re}} = |{-1}\rangle\langle{+1}| + |{+1}\rangle\langle{-1}|$
and
$\mathrm{Op}_{\mathrm{im}} = i(|{+1}\rangle\langle{-1}| - |{-1}\rangle\langle{+1}|)$
are lower-manifold Pauli-like operators.
This term suppresses first-order non-adiabatic leakage from the dark
state.
At $\lambda=1.0$, it improves the bare Orange-Slice fidelity from
$98.67\%$ to $99.07\%$, but it does not remove the longitudinal phase
artifact generated at the pole~\cite{barnes2021_geometric}.  This
motivates changing the trajectory topology before applying the exact
SATD correction.

\subsection{Composite NGQC + SATD protocol}
\label{sec:ngqc}

The Zhu--Wang two-loop construction~\cite{zhu2002_prl} provides the
required topological change.
The trajectory consists of two symmetric sub-loops
(Fig.~\ref{fig:composite}),
each tracing a thin orange-slice lune on the parameter sphere,
where $T \equiv \tgate$ denotes the total gate time.

\noindent\textbf{Loop~1} ($0 \leq t \leq T/2$, starting azimuth $\phi_1 = 0$):
\begin{equation}
  \theta(t) = \theta_{\max}\sin\!\left(\frac{\pi t}{T/2}\right),
  \label{eq:loop1}
\end{equation}
evolving from the North Pole through the South Pole
($\theta_{\max} = \pi$) and back.
The azimuth shifts by $\delta\phi = \pi/4$ via a narrow
$\tanh$ step centered at the South Pole
($\theta = \pi$, $\sin\theta = 0$), closing the lune with
finite enclosed area.

\noindent\textbf{Loop~2} ($T/2 \leq t \leq T$, starting azimuth $\phi_2 = \pi$):
An identical lune displaced to the complementary azimuth, with
$\phi$ sweeping from $\pi$ to $\pi + \delta\phi$ at the South Pole.

Each lune encloses a solid angle
$\Omega_0 = \delta\phi\,(1 - \cos\theta_{\max}) = \pi/2$.
The dark-state Berry phase per lune is
$\gamma_D = -\Omega_0/2 = -\pi/4$, while the bright state
acquires $\gamma_B = +\Omega_0/2 = +\pi/4$
(their Berry curvatures carry opposite signs).
Over the full two-loop path, the dark state accumulates
$\gamma_D^{\mathrm{tot}} = -\pi/2$ and the bright state
$\gamma_B^{\mathrm{tot}} = +\pi/2$; the relative geometric phase
$\gamma_{\mathrm{geo}} = \gamma_D^{\mathrm{tot}} - \gamma_B^{\mathrm{tot}} = -\pi$
implements the target $Z(\pi)$ rotation.

This topology provides three simultaneous advantages:
\begin{enumerate}
  \item \textbf{Singularity neutralization.}
  Although each sub-loop traverses the South Pole
  ($\theta = \pi$), the azimuthal transition is confined to
  this point where $\sin\theta = 0$.  Consequently the
  $\dot{\phi}$-induced non-adiabatic coupling
  $\langle B|\partial_t|D\rangle \propto \dot{\phi}\sin\theta$
  vanishes identically at the pole, and the $\dot{\theta}$-only
  SATD counter-diabatic Hamiltonian remains exact.

  \item \textbf{Phase echo.}
  The $\pi$ separation between loop starting azimuths causes
  deterministic dynamical phases to cancel by destructive interference:
  $+\gamma_d$ in loop~1 and $-\gamma_d$ in loop~2.  The same echo
  symmetry cancels the first- and second-order Berry-connection
  corrections generated by quasi-static $E$-sector perturbations.  The
  leading surviving geometric correction from this sector is therefore
  $O(\sigma_E^3/\Omm^3)$, while the $A_2$ sector is suppressed by the
  parity filter discussed in Sec.~\ref{sec:erasure} and proved in
  Appendix~\ref{app:noise_sectors}.

  \item \textbf{Gate-time compression.}
  When combined with counter-diabatic driving, the adiabatic speed
  limit is removed, enabling aggressive temporal compression.
\end{enumerate}

\begin{figure}[t]
  \centering
  \includegraphics[width=\columnwidth]{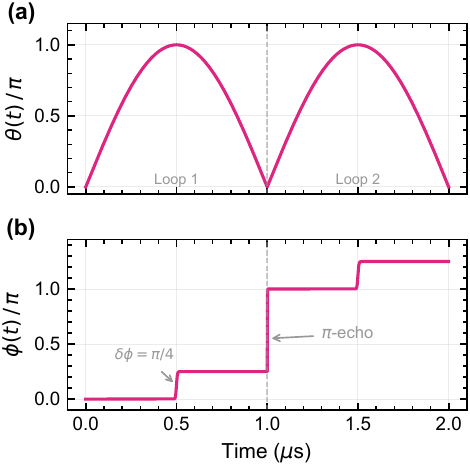}
  \caption{\label{fig:composite}%
  Composite NGQC trajectory.
  (a)~Mixing angle $\theta(t)$ executes two symmetric sub-loops,
  each traversing the full $\theta_{\max} = \pi$
  (North Pole $\to$ South Pole $\to$ North Pole).
  (b)~Azimuthal phase $\phi(t)$: within each sub-loop $\phi$ shifts
  by $\delta\phi = \pi/4$ at the South Pole, and the starting
  azimuth advances by $\pi$ between loops.
  All $\phi$ transitions occur where $\sin\theta = 0$,
  neutralising the coordinate singularity.
  }
\end{figure}

\subsection{Superadiabatic transitionless driving (SATD)}
\label{sec:satd}

Executing two sub-loops within $\tgate \leq \SI{2}{\micro\second}$
demands extreme angular velocities, far exceeding the adiabatic
limit.
Without counter-diabatic correction, the bare composite protocol
yields $F = 22.92\%$, confirming that the compressed path is not
usable without the counterdiabatic generator.

The SATD framework~\cite{berry2009_transitionless,baksic2016_dressed,zhou2017_superadiabatic}
constructs an exact
counter-diabatic Hamiltonian (Appendix~\ref{app:drag}):
\begin{equation}
  H_{\mathrm{CD}} = -\frac{\dot{\theta}}{2}
  \left[\sin\phi\;\mathrm{Op}_{\mathrm{re}}
  + \cos\phi\;\mathrm{Op}_{\mathrm{im}}\right],
  \label{eq:SATD}
\end{equation}
with operators $\mathrm{Op}_{\mathrm{re}}$ and
$\mathrm{Op}_{\mathrm{im}}$ as defined in Eq.~\eqref{eq:HCD}.
Equation~\eqref{eq:SATD} has the same operator structure as the
DRAG Hamiltonian (Eq.~\eqref{eq:HCD}), but with the critical
distinction that the correction strength $\lambda \equiv 1.0$ is
now derived \textit{analytically} from Berry's
theorem~\cite{berry2009_transitionless} rather than determined
empirically, requiring no tuning.
This exactly cancels all non-adiabatic transitions, enforcing strict
parallel transport of the instantaneous eigenstates and completely
removing the adiabatic speed limit.
For NV$^-$ the operator in Eq.~\eqref{eq:SATD} is physically
synthesized by a resonant double-quantum strain tone at
$2\gamma_e B_z$, with amplitude and phase proportional to
$|\dot{\theta}|$ and $\phi+\pi/2$, respectively
(Appendix~\ref{app:satd_hardware}).
In the DQ rotating frame this tone supplies
$H_{\mathrm{DQ}}^{\mathrm{res}}/\hbar =
[\Omega_{\mathrm{re}}\mathrm{Op}_{\mathrm{re}}+
\Omega_{\mathrm{im}}\mathrm{Op}_{\mathrm{im}}]/2$, and the SATD
Hamiltonian is obtained by choosing
$\Omega_{\mathrm{re}}=-\dot{\theta}\sin\phi$ and
$\Omega_{\mathrm{im}}=-\dot{\theta}\cos\phi$.

With SATD active, the noiseless fidelity of the composite protocol
reaches $F = 99.76\%$, breaking the $99.07\%$ Orange Slice ceiling.
The residual $0.24\%$ error is dominated by higher-order coherent
imperfections in the composite loop geometry.

\subsection{Single-shot bright-state compiler and non-Abelian SU(2) gates}
\label{sec:single_shot_compiler}

The same symmetry-generated $\Lambda$ manifold also supports a
nonadiabatic bright-state compiler that does not use adiabatic
following, the SATD counterdiabatic actuator, or direct logical-state
coupling.  In a rotating frame and using cycle-frequency units
$K=H/h$, write
\begin{align}
K(t)=&\;
\Delta(t)\ket{a}\bra{a}
\nonumber\\
&+\frac{1}{2}
\left[
\Omega_0(t)\ket{a}\bra{0_L}
+\Omega_1(t)\ket{a}\bra{1_L}
+{\rm h.c.}
\right].
\label{eq:ss_lambda_hamiltonian}
\end{align}
The single-shot condition is proportional control:
\begin{align}
\Omega_0(t)&=\Omega(t)\cos\alpha\cos\frac{\vartheta}{2},\\
\Omega_1(t)&=\Omega(t)\cos\alpha e^{-i\phi}
              \sin\frac{\vartheta}{2},\\
\Delta(t)&=\Omega(t)\sin\alpha .
\label{eq:ss_controls}
\end{align}
In the Regime-D validation, $\Delta(t)$ is treated as a synchronized
rotating-frame detuning command; Stark or DC detuning can be
substituted only after calibration as the third waveform channel.
Equivalently,
\begin{equation}
u(t)=(\Omega_0,\Omega_1,\Delta)^T=\Omega(t)u_0 .
\end{equation}
Defining
\begin{align*}
\ket{b_{\mathbf n}}&=
\cos\frac{\vartheta}{2}\ket{0_L}
\!+e^{i\phi}\sin\frac{\vartheta}{2}\ket{1_L},
\\
\ket{d_{\mathbf n}}&=
\sin\frac{\vartheta}{2}\ket{0_L}
\!-e^{i\phi}\cos\frac{\vartheta}{2}\ket{1_L},
\end{align*}
with
$\mathbf n=(\sin\vartheta\cos\phi,\sin\vartheta\sin\phi,
\cos\vartheta)$, the Hamiltonian factorizes as
\begin{equation}
K(t)=\Omega(t)M_{\alpha,\mathbf n},
\label{eq:ss_factorization}
\end{equation}
where
\begin{equation}
M_{\alpha,\mathbf n}
=\sin\alpha\ket{a}\bra{a}
+\frac{\cos\alpha}{2}
\left(\ket{a}\bra{b_{\mathbf n}}+
\ket{b_{\mathbf n}}\bra{a}\right).
\end{equation}
Thus $[K(t),K(t')]=0$ and the pulse shape enters only through its
area.  The cyclic area condition is
\begin{equation}
\int_0^T\Omega(t)\,dt=1 .
\label{eq:ss_area}
\end{equation}
It returns the logical subspace to itself and induces
\begin{equation}
U_L(\mathbf n,\gamma)=
\ket{d_{\mathbf n}}\bra{d_{\mathbf n}}
+e^{-i\gamma}\ket{b_{\mathbf n}}\bra{b_{\mathbf n}},
\qquad
\gamma=\pi(1+\sin\alpha).
\label{eq:ss_gate}
\end{equation}
Up to a global phase, this is
$\exp[-i\gamma\,\mathbf n\cdot\boldsymbol\sigma/2]$.
The family is non-Abelian because changing $\mathbf n$ changes the
bright-state projector:
\begin{equation}
[U(\mathbf n_1,\gamma_1),U(\mathbf n_2,\gamma_2)]
\propto
(\mathbf n_1\times\mathbf n_2)\cdot\boldsymbol\sigma .
\label{eq:ss_commutator}
\end{equation}
The operational diagnostic used in Regime~D is the pair
$X_{\pi/2}Z_{\pi/2}$ and $Z_{\pi/2}X_{\pi/2}$: each sequence matches
its own target, while the two composed unitaries have average fidelity
$0.5$ relative to one another.  The full proof, transfer-matrix
condition $G(\omega)u_0\simeq g(\omega)u_0$, detuning sensitivity, and
protocol-stack comparison are collected in Appendix~\ref{app:single_shot}.

\begin{table}[tb]
\footnotesize
\caption{\label{tab:compiler_comparison}
Two compilers on the same symmetry-generated $\Lambda$ manifold.  The
SATD compiler is the channel-engineering route used for the Regime-A
QEC analysis; the single-shot compiler is the compact universal-control
route used for Regime~D.}
\setlength{\tabcolsep}{3pt}
\begin{tabular}{@{}ll@{}}
\toprule
Compiler & Controls, role, and output \\
\midrule
SATD echo-lune &
\begin{tabular}[t]{@{}l@{}}
$\Lambda$ legs + resonant DQ SATD \\
Error-channel shaping \\
Regime-A biased-erasure channel
\end{tabular} \\
\addlinespace
Single-shot bright-state &
\begin{tabular}[t]{@{}l@{}}
$\Lambda$ legs + scalar detuning \\
Universal SU(2) gates \\
Regime-D non-Abelian gate suite
\end{tabular} \\
\bottomrule
\end{tabular}
\end{table}

\subsection{Gate-time compression}
\label{sec:compression}

Because SATD eliminates coherent leakage to a negligible
$\sim\!0.003\%$ residual, the error budget is overwhelmingly
dominated by dissipative processes that scale linearly with gate time.
Compressing $\tgate$ from $\SI{5.0}{\micro\second}$ to the
${\sim}\SI{1}{\micro\second}$ scale truncates the system's temporal
exposure to all destructive channels ($T_1$ lattice relaxation,
$T_{1\rho}$ rotating-frame depolarization, and quasi-static surface
noise), providing the primary engineering lever for
maximizing gate fidelity.

The NGQC construction eliminates dark-state dynamical phase
by design; the bright-state dynamical phase
$\gamma_B \approx \Omm \tgate$ persists and produces
oscillations in the process-level fidelity $\Favg$.
The fidelity peaks sharply at ``magic'' gate times where
$\gamma_B = 2\pi n$ ($n$ integer), and drops as low as $90\%$
between them.
The optimal operating point must be selected from these discrete
magic-time peaks (Sec.~\ref{sec:sweep7}).

The gate-time compression mechanism predicts that performance is
governed by the ratio $T_1/\tgate$ rather than by absolute $T_1$:
any platform with sufficient strain coupling to compress $\tgate$
well below $T_1$ should achieve near-unity fidelity.
This scaling is illustrated by the NV channel benchmark and the conservative
SiV benchmark within their respective modeled control assumptions.

\section{NV$^-$ spin-qubit implementation and channel benchmark}
\label{sec:nv}

\subsection{Ground-state Hamiltonian and computational encoding}
\label{sec:hamiltonian}

The ground-state spin-1 triplet (${}^3A_2$) of the NV center in
diamond is described by the static Hamiltonian
\begin{equation}
  H_0 / h = D\, \Sz^2 + \gamma_e B_z \Sz \,,
  \label{eq:H0}
\end{equation}
where all frequencies are ordinary (cycle) frequencies consistent
with the $H/h$ convention;
$D \approx \SI{2.87}{\giga\hertz}$ is the zero-field splitting,
$\Sz$ is the $z$-component of the spin-1 angular momentum operator,
$\gamma_e \approx \SI{2.80}{\mega\hertz\per\gauss}$ is the electron
gyromagnetic ratio, and $B_z = \SI{50}{\gauss}$ lifts the
$|\pm 1\rangle$ degeneracy via the electronic Zeeman effect.
The architecture encodes the computational qubit within the
$\{|-1\rangle,\,|+1\rangle\}$ doublet, utilizing the $|0\rangle$
state as the necessary auxiliary level to facilitate
$\Lambda$-system holonomic traversal.

\subsection{$C_{3v}$ spin-strain coupling}
\label{sec:spinstrain}

Six independent coupling parameters characterize the spin-strain
interaction of the NV center, which operates under the constraints
of $C_{3v}$ point-group symmetry~\cite{udvarhelyi2018_prb}.
The single-quantum ($\Delta m_s = \pm 1$) interaction Hamiltonian,
driving transitions between $|0\rangle$ and $|\pm 1\rangle$, is
\begin{multline}
  \frac{H_{\varepsilon 1}}{h} =
  \frac{1}{2}\!\left[\htwosix\,\exz
    - \tfrac{1}{2}\htwofive(\exx - \eyy)\right]\!\{\Sx,\Sz\}  \\
  + \frac{1}{2}\!\left[\htwosix\,\eyz
    + \htwofive\,\exy\right]\!\{\Sy,\Sz\} ,
  \label{eq:Heps1}
\end{multline}
where the shear coupling $\htwosix$ and in-plane normal coupling
$\htwofive$ are summarized with their DFT uncertainties in
Table~\ref{tab:strain_tensor}.

The same dynamic strain field simultaneously drives the
double-quantum ($\Delta m_s = \pm 2$) interaction, directly coupling
$|-1\rangle$ and $|+1\rangle$ via the transverse operators
$(S_x^2 - S_y^2)$ and $\{S_x, S_y\}$:
\begin{equation}
  \frac{H_{\varepsilon 2}}{h} = \honesix
  \bigl[(\exx - \eyy)(S_x^2 - S_y^2) + 2\exy\{S_x, S_y\}\bigr],
  \label{eq:Heps2}
\end{equation}
with $\honesix = \Xi_\perp = \SI{19660\pm90}{\mega\hertz\per\strain}$,
approximately seven times larger than the primary single-quantum
coupling.
This large double-quantum interaction produces a parasitic
AC Stark shift (Sec.~\ref{sec:acstark}).

\begin{table}[b]
\caption{\label{tab:strain_tensor}$C_{3v}$ spin-strain coupling tensor
components and their roles in the full open-system model.
All values from DFT calculations of Ref.~\cite{udvarhelyi2018_prb}.}
\begin{ruledtabular}
\begin{tabular}{lcl}
  Component & Magnitude & Transition \\
  & (MHz/strain) & \\
  \hline
  $\htwosix$\footnote{Mechanical Rabi drive} & $-2830(70)$ & $\Delta m_s = \pm 1$ \\
  $\htwofive$\footnote{Minor ellipticity ($\eta \lesssim 4.3\%$)} & $-2600(80)$ & $\Delta m_s = \pm 1$ \\
  $\honesix$ ($\Xi_\perp$)\footnote{AC Stark shift ($\Sz|_\mathcal{Q} = \sigma_z$): canceled by dynamic compensation; see Theorem~\ref{thm:Sz2}} & $19\,660(90)$ & $\Delta m_s = \pm 2$ \\
  $h_{43}$\footnote{Longitudinal detuning} & $2300$ & $\Delta m_s = 0$ \\
\end{tabular}
\end{ruledtabular}
\end{table}

\subsection{Mechanical implementation and rotating-frame parametrization}
\label{sec:membrane}

A (100)-oriented single-crystal diamond membrane~\cite{burek2014_nanoletter}
($L = \SI{10}{\micro\meter}$, $h_m = \SI{200}{\nano\meter}$)
supports nearly degenerate $(1,2)$ and $(2,1)$ flexural modes at
$f_0 = \SI{78.8}{\mega\hertz}$.  At an off-center NV site
$(L/4,L/4)$, the ideal thin-plate mode topology gives two orthogonal
local strain-channel directions of equal symmetry weight, so that
quadrature driving realizes the rotating strain field of
Eq.~\eqref{eq:rotating_strain}.
We use this membrane geometry as a hardware blueprint for the
synthetic-rotation topology, while treating
$\Omm = \SI{2.22}{\mega\hertz}$ as an effective rotating-frame
Rabi-rate target for the Regime-A channel benchmark rather than as a
calibrated three-dimensional membrane prediction.  Reaching the NV
resonance at $D\approx\SI{2.87}{\giga\hertz}$ requires a resonant
GHz-HBAR implementation, treated separately in Regime~B and
Appendix~\ref{app:ghz}.

\begin{table*}[t]
\caption{\label{tab:nv_implementations}NV operating regimes used in
the simulations.}
\begin{ruledtabular}
\begin{tabular*}{\textwidth}{@{\extracolsep{\fill}}llll}
\begin{tabular}[c]{@{}l@{}}Implementation\\ / role\end{tabular} &
\begin{tabular}[c]{@{}l@{}}Topology or\\ resonant route\end{tabular} &
$\Omm$ &
\begin{tabular}[c]{@{}l@{}}Role in\\ paper\end{tabular} \\
\colrule
Regime A channel package &
\begin{tabular}[c]{@{}l@{}}Membrane topology illustration;\\ rotating-frame $\Omm$ target\end{tabular} &
$\SI{2.22}{\mega\hertz}$ &
\begin{tabular}[c]{@{}l@{}}Rotating-frame\\ channel extraction\end{tabular} \\
Regime B GHz-HBAR &
\begin{tabular}[c]{@{}l@{}}Resonant near $\SI{2.87}{\giga\hertz}$\\ NV transition\end{tabular} &
$2.83$--$\SI{141.5}{\kilo\hertz}$ &
\begin{tabular}[c]{@{}l@{}}Resonant route\\ + envelope tracking\end{tabular} \\
\end{tabular*}
\end{ruledtabular}
\end{table*}

The same rotating-frame Hamiltonian applies once $\Omm$, drive
phases, and envelopes are specified; the achievable $\Omm$,
bandwidth, heating, phase noise, and multitone drive constraints are
implementation-specific and are therefore reported separately
(Appendix~\ref{app:ghz}).  Consequently the biased-erasure channel is
not claimed as a calibrated prediction of the flexural membrane strain
model.  It is extracted from the full Regime-A open-system control
package, while HBAR transfer-function residuals are treated as
explicit transverse-floor stress axes in the decoder analysis.

Under the RWA, the opposite-chirality matrix element oscillates at
${\sim}2D$ and is far off-resonant; a single circular drive
addresses only one arm of the $\Lambda$ system.
The holonomic protocol requires two independent drives of opposite
chirality ($\sigma_+$ near $D_+ \equiv D + \gamma_e B_z$ and
$\sigma_-$ near $D_- \equiv D - \gamma_e B_z
\approx \SI{2.73}{\giga\hertz}$), realising the full
$H_\Lambda(t)$ of Eq.~\eqref{eq:Hlambda}.

\subsection{AC Stark shift and dynamic compensation}
\label{sec:acstark}

The macroscopic strain $\varepsilon_0 \approx 7.84 \times 10^{-4}$
required for $\Omm = \SI{2.22}{\mega\hertz}$ acts simultaneously on
the large $\honesix$ tensor, producing an effective double-quantum
coupling
\begin{equation}
  \Omega_{\mathrm{DQ}} = \Xi_\perp\,\varepsilon_0
  \approx \SI{15.41}{\mega\hertz}.
  \label{eq:OmegaDQ}
\end{equation}
Although detuned by
$D \approx \SI{2.87}{\giga\hertz}$, this off-resonant drive generates
a deterministic AC Stark shift via second-order perturbation
theory~\cite{cohen_tannoudji1998_api}:
\begin{align}
  \delta_{\mathrm{AC}} &= \frac{\Omega_{\mathrm{DQ}}^2}{4D}
  = \frac{(15.41)^2}{4 \times 2870}\;\text{MHz}
  \notag\\
  &\approx \SI{20.72}{\kilo\hertz}.
  \label{eq:DeltaBS}
\end{align}
More generally, for a $C_{3v}$ spin-1 platform at fixed target
single-quantum rate,
\begin{equation}
  \varepsilon_0=\frac{\Omm}{|\htwosix|},\qquad
  \Omega_{\mathrm{DQ}}=|\honesix|\varepsilon_0
  =\left|\frac{\honesix}{\htwosix}\right|\Omm ,
\end{equation}
and therefore
\begin{equation}
  \delta_{\mathrm{AC}}
  =
  \frac{1}{4D}
  \left|\frac{\honesix}{\htwosix}\right|^2
  \Omm^2 .
  \label{eq:c3v_stark_scaling}
\end{equation}
Thus the same $C_{3v}$ symmetry class can have very different
uncompensated Stark burden depending on the tensor ratio
$|\honesix/\htwosix|$; this motivates the matched 3C-SiC comparison
in Sec.~\ref{sec:c3v_platform_rule} and Appendix~\ref{app:extended_sweeps_sm},
Sec.~7.

The double-quantum operators connect only
$|{+1}\rangle \leftrightarrow |{-1}\rangle$ and have identically
zero matrix elements with $|0\rangle$.
The shift acts exclusively within the computational manifold:

\begin{theorem}[AC Stark structure and cancellation]
\label{thm:Sz2}
The effective DQ AC Stark Hamiltonian in the rotating frame is
\begin{equation}
  H_{\mathrm{Stark}}(t) = -2\pi\,\delta_{\mathrm{AC}}
  \!\left(\cos\theta(t) + \frac{\gamma_e B_z}{D}\right)\Sz\,,
  \label{eq:Hstark_Sz}
\end{equation}
where $\theta(t)$ is the polar angle of the Bloch-sphere trajectory
and $\Sz|_{\{|{+1}\rangle,|{-1}\rangle\}} = \sigma_z$.
For the sinusoidal NGQC trajectory
$\theta(t) = \pi\sin(\pi t/\tau)$, the accumulated differential
phase over the full gate is
\begin{equation}
  \Phi_z = -2\pi\,\delta_{\mathrm{AC}}
  \!\left[J_0(\pi) + \frac{\gamma_e B_z}{D}\right]\tgate\,,
  \label{eq:stark_residual}
\end{equation}
where $J_0$ is the zeroth Bessel function
($J_0(\pi) \approx -0.304$).
Because $\theta(t)$ is prescribed by the control protocol,
a dynamic compensation field
$H_{\mathrm{comp}}(t) = -H_{\mathrm{Stark}}(t)$ cancels this
shift at every instant, removing all residual differential phase.
\end{theorem}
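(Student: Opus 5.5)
The argument has three parts: derive the DQ Stark Hamiltonian by second-order (Schrieffer--Wolff / Floquet) elimination of the off-resonant double-quantum drive, integrate it along the prescribed trajectory, and observe that a known, deterministic $\Sz$ term can be cancelled pointwise.

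\emph{Deriving the Stark Hamiltonian.} Start from Eq.~\eqref{eq:Heps2} evaluated on the rotating strain field. Eq.~\eqref{eq:Heps2} involves the in-plane components $\exx-\eyy$ and $\exy$, so I assume the driven modes also carry an in-plane $E$ pair rotating at $\omD\approx D_\pm$ with the same amplitude $\varepsilon_0$. Because $(S_x^2-S_y^2)$ and $\{S_x,S_y\}$ are also an $E$ doublet, Theorem~\ref{thm:synrot} gives the circular form
\[
\frac{\Omega_{\mathrm{DQ}}}{2}\bigl(|{+1}\rangle\langle{-1}|e^{-i\omD t}+\text{h.c.}\bigr),
\]
together with its counter-rotating partner.

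In the lab frame, the $|{-1}\rangle\leftrightarrow|{+1}\rangle$ splitting is $2\gamma_e B_z$, so each chirality is detuned from resonance by approximately $\omD\mp2\gamma_e B_z\approx D\pm\gamma_e B_z$. Second-order perturbation theory (time-averaged effective Hamiltonian, $\Omega^2/4\Delta$ per component) then shifts $|{+1}\rangle$ and $|{-1}\rangle$ by opposite amounts, and $|0\rangle$ is untouched since its DQ matrix elements vanish. The differential shift between the two states is therefore proportional to $\Sz$.

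The two legs act with weights $\sin^2(\theta/2)$ and $\cos^2(\theta/2)$, following the leg amplitudes of Eq.~\eqref{eq:Hlambda}, and the relevant detunings are $D\mp\gamma_eB_z$. Expanding $1/(D\pm\gamma_eB_z)$ to first order in $\gamma_eB_z/D$, the weight difference $\cos^2(\theta/2)-\sin^2(\theta/2)=\cos\theta$ produces the $\cos\theta$ term, and the detuning asymmetry produces the $\gamma_eB_z/D$ offset. The overall prefactor is $\delta_{\mathrm{AC}}=\Omega_{\mathrm{DQ}}^2/4D$, which reproduces Eq.~\eqref{eq:Hstark_Sz}.

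\emph{Main obstacle.} The hardest step is getting exactly this structure (a $\cos\theta$ term plus a constant $\gamma_eB_z/D$, with unit relative weight) and not some other combination. Doing so requires tracking precisely which strain amplitude drives which chirality in each leg, and keeping only terms to first order in $\gamma_eB_z/D$. My first attempt would be to write the four Floquet sidebands explicitly and sum their $\Omega^2/4\Delta$ contributions.

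\emph{Accumulated phase.} Integrating over the gate gives
\[
\int_0^{\tgate}\cos(\pi\sin(\pi t/\tau))\,dt .
\]
Each half-loop of length $\tau$ covers a full half-period of the sine. Using the Jacobi--Anger expansion, or equivalently the integral representation $J_0(x)=\frac1\pi\int_0^\pi\cos(x\sin u)\,du$, this integral equals $\tgate J_0(\pi)$. Adding the constant term gives Eq.~\eqref{eq:stark_residual}. The result holds for both loops because the phase offset $\phi$ does not enter $\cos\theta$.

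\emph{Cancellation.} Since $\theta(t)$ is a prescribed control waveform, $H_{\mathrm{Stark}}(t)$ is a known c-number times $\Sz$. Adding $H_{\mathrm{comp}}=-H_{\mathrm{Stark}}$, for example through a synchronized $B_z$ or detuning modulation, makes the net correction vanish identically. This holds within the same second-order effective-Hamiltonian approximation, with residuals of order $\Omega_{\mathrm{DQ}}^3/D^2$.
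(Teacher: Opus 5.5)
Your three-part structure is the paper's: dispersive second-order elimination of the off-resonant DQ coupling, with $\delta E_0=0$ because the DQ operators have no matrix elements with $|0\rangle$, giving an antisymmetric $S_z$ shift; then $\int_0^\tau\cos\theta\,dt=\tau J_0(\pi)$ per sub-loop via $J_0(z)=\pi^{-1}\int_0^\pi\cos(z\sin u)\,du$; then pointwise cancellation. The last two steps match the paper.

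The gap is in the step you flag as the main obstacle, and the mechanism you sketch for it would fail. A circularly polarized drive couples the DQ doublet through a \emph{single} rotating component, and the $\Omega_{\mathrm{DQ}}^2/4D$ shift exists only because of that. Suppose each drive carried two DQ sidebands of comparable weight, as your ``counter-rotating partner'' and ``four Floquet sidebands'' suggest. At $\omega_d=\omega_+$ these sidebands sit at detunings $D-\gamma_eB_z$ and $D+3\gamma_eB_z$, and they shift $S_z$ with opposite signs. Their $O(1/D)$ parts then cancel, leaving only an $O(\gamma_eB_z/D^2)$ remnant and no $\delta_{\mathrm{AC}}\cos\theta$ term. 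The same example shows that $\omega_d\mp2\gamma_eB_z\approx D\pm\gamma_eB_z$ is not correct for a single $\omega_d$. The pair $D\mp\gamma_eB_z$ arises only as one sideband taken from each of the two different drives. Treating $D+3\gamma_eB_z$ as ``approximately'' $D+\gamma_eB_z$ corrupts exactly the $\gamma_eB_z/D$ offset you are after.

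The paper closes this step by committing to one DQ sideband per drive, with opposite chirality on the two legs. The $\omega_+$ drive, with $g_+=(\Omega_{\mathrm{DQ}}/2)\cos(\theta/2)$, oscillates at $\Omega_-=2\pi(D-\gamma_eB_z)$ in the doubly rotating frame. The $\omega_-$ drive, with $g_-=(\Omega_{\mathrm{DQ}}/2)\sin(\theta/2)$, oscillates at $\Omega_+=2\pi(D+\gamma_eB_z)$. The resulting shifts are $\delta E_{\pm1}=\mp g_+^2/\hbar\Omega_-\pm g_-^2/\hbar\Omega_+$.

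Both choices are load-bearing. The opposite signs of the two legs are what turn the weights into $\cos^2(\theta/2)-\sin^2(\theta/2)=\cos\theta$; equal signs would leave a $\theta$-independent leading term. The pairing $\cos^2(\theta/2)\leftrightarrow D-\gamma_eB_z$ fixes the sign of the offset, and the opposite pairing gives $\cos\theta-\gamma_eB_z/D$. With these in place, combining over $D^2-\gamma_e^2B_z^2$ yields Eq.~\eqref{eq:Hstark_Sz} to first order in $\gamma_eB_z/D$.

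Separately, you do not need to posit an extra rotating in-plane strain pair. The paper's appendix writes the DQ term in the shear components $\varepsilon_{xz},\varepsilon_{yz}$, so the same circular shear that drives each $\Lambda$ leg supplies $\Omega_{\mathrm{DQ}}=h_{16}\varepsilon_0$.
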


\begin{proof}
\textit{Part~1 (operator structure).}---Each physical strain drive
(at frequency $\omega_\pm = 2\pi(D \pm \gamma_e B_z)$) simultaneously
excites the DQ channel with amplitude
$g_\pm = (\Omega_{\mathrm{DQ}}/2)\{\cos(\theta/2),\,\sin(\theta/2)\}$.
In the doubly-rotating frame, these DQ couplings oscillate at
detunings $\Omega_\mp = 2\pi(D \mp \gamma_e B_z)$, respectively.
Standard dispersive perturbation
theory~\cite{blais2004_cqed,cohen_tannoudji1998_api,grimm2000_dipole} gives the
energy shifts
\begin{equation}
  \delta E_{\pm 1} =
  \mp\frac{g_+^2}{\hbar\Omega_-}
  \pm\frac{g_-^2}{\hbar\Omega_+}\,,
  \qquad
  \delta E_0 = 0\,,
  \label{eq:stark_shifts}
\end{equation}
where $\delta E_0 = 0$ follows from the vanishing DQ matrix elements
with $|0\rangle$.
Since $\delta E_{+1} = -\delta E_{-1}$, the shift is purely
antisymmetric, the signature of an $\Sz$ operator.
Combining over a common denominator and
using $\sin^2(\theta/2) - \cos^2(\theta/2) = -\cos\theta$ yields
Eq.~\eqref{eq:Hstark_Sz}.
The complete derivation is given in Appendix~\ref{app:dqstark}.

\textit{Part~2 (trajectory averaging).}---The accumulated differential
phase is
$\Phi_z = -2\pi\delta_{\mathrm{AC}}\!\int_0^{\tgate}\!
[\cos\theta(t') + \gamma_e B_z/D]\,dt'$.
Each NGQC sub-loop uses $\theta(t) = \pi\sin(\pi t/\tau)$ with
$\tau = \tgate/2$.  Substituting $u = \pi t/\tau$ and applying
the integral representation
$J_0(z) = \pi^{-1}\!\int_0^\pi\!\cos(z\sin u)\,du$ gives
$\int_0^\tau\!\cos\theta\,dt = \tau\,J_0(\pi)$.
Both sub-loops have identical $\theta(t)$ profiles, so they add,
yielding Eq.~\eqref{eq:stark_residual}.

\textit{Part~3 (dynamic compensation).}---Since $\theta(t)$ is
prescribed by the control protocol, the instantaneous Stark
coefficient $\cos\theta(t) + \gamma_e B_z/D$ is analytically known.
Applying $H_{\mathrm{comp}}(t) =
+2\pi\,\delta_{\mathrm{AC}}\,[\cos\theta(t) + \gamma_e B_z/D]\,\Sz$
via a DC electrode cancels the shift at every instant.
\end{proof}

Over a $\tgate = \SI{5.0}{\micro\second}$ gate, the uncompensated
bare Stark scale $\delta_{\mathrm{AC}} = \SI{20.72}{\kilo\hertz}$
could accumulate up to ${\sim}0.65\;\text{rad}$ of differential
$\sigma_z$ phase if $\cos\theta \equiv 1$.
Trajectory averaging reduces the effective coupling by
$|J_0(\pi)| \approx 0.30$, and partial cancellation between
$J_0(\pi) < 0$ and $\gamma_e B_z/D > 0$ further lowers the net
coefficient to $|J_0(\pi) + \gamma_e B_z/D| \approx 0.26$,
yielding an uncompensated residual of ${\sim}0.17\;\text{rad}$
at $\tgate = \SI{5.0}{\micro\second}$ and
${\sim}0.06\;\text{rad}$ at the optimal
$\tgate = \SI{1.833}{\micro\second}$.
Dynamic compensation eliminates this residual entirely.

\paragraph{Eigenstate protection.}
For computational basis inputs ($|{\pm 1}\rangle$), a static
$\Sz$ phase is exactly unobservable:
$|\langle{\pm 1}|e^{-i\alpha\sigma_z}|{\pm 1}\rangle|^2 = 1$
for all $\alpha$.
During driven evolution, however,
$[\Sz, H_{\mathrm{drive}}(t)] \neq 0$, so the time-dependent
AC Stark interaction does not reduce to a pure $\Sz$ rotation;
fidelity therefore degrades at sufficiently large Stark scales.

Note that this $S_z$-type AC Stark structure is specific to $S = 1$
defects with $\Delta m_s = \pm 2$ selection rules and does not hold
for $D_{3d}$ orbital systems (Sec.~\ref{sec:siv}).

Device design constraints including boundary conditions,
electrostatic degeneracy tuning, and the voltage budget are
detailed in Appendix~\ref{app:design_sm}.

\subsection{Open-system noise model}
\label{sec:noise}

The simulations assume a single NV center at depth
$d_{\mathrm{NV}} = \SI{20}{\nano\meter}$
(standard ion implantation~\cite{pezzagna2010_implant}) with
$T_1 = \SI{1}{\milli\second}$~\cite{jarmola2012_t1},
$T_{1\rho} = \SI{500}{\micro\second}$~\cite{bar-gill2013_t1rho}
(isotopically purified ${}^{12}$C,
${>}99.9\%$~\cite{balasubramanian2009_isotope,ishikawa2012_isotope}),
and $T_2^* \sim \SI{10}{\micro\second}$.

\paragraph{ME-CCE surface noise.}
Shallow NV centers at $d_{\mathrm{NV}} = \SI{20}{\nano\meter}$ are
dominantly decohered by in-sequence hopping of surface electron
spins~\cite{nagura2026,myers2014_surfacenoise,onizhuk2024_prl}.
We employ the many-body expanded cluster correlation expansion
(ME-CCE) framework (Appendix~\ref{app:decoherence}),
modeling $N = 20$ surface spins with density
$\rho_s = \SI{0.004}{\per\nano\meter\squared}$ and spatial
correlation length $r_c = \SI{5}{\nano\meter}$.
Pairwise hopping rates follow
$\Gamma_{ij} = (1/\tauc)\exp(-r_{ij}/r_c)$
with $\tauc = \SI{10}{\nano\second}$ (worst-case fast
noise~\cite{bauch2020_decoherence}),
propagated via a Gillespie kinetic Monte Carlo (KMC)
algorithm~\cite{nagura2026}.
The instantaneous noise field at the NV site is the dipolar sum
\begin{equation}
  B_z^{\mathrm{surf}}(t) = \frac{\mu_0 \mu_B}{4\pi}\sum_{i=1}^{N}
  s_i(t)\,\frac{3\cos^2\!\alpha_i - 1}{r_i^3} \,.
  \label{eq:dipolar}
\end{equation}
In the Hamiltonian this field enters as
$H_{\mathrm{surf}}/h \propto B_z^{\mathrm{surf}}(t)S_z$.
Thus the ME-CCE/KMC bath is not assumed to respect the NV point
group as a full environment; it is an explicitly modeled
$A_2$-sector system-operator perturbation.  The extracted leakage is
therefore consistent with the $A_2$ parity-filter mechanism, while
its absolute rate is obtained only from the Regime-A open-system
simulation.
A sector-projection diagnostic confirms this assignment:
Hilbert--Schmidt projection of the simulated surface-noise operator
onto equal-norm representatives
$(S_z^2,S_z,S_x,S_y)$ gives weights
$(0,1,0,0)$, and propagation of 96 ME-CCE/KMC traces through the
Regime-A SATD control with the surface term alone gives
$L_{\mathrm{surf}}=0.456\%$ (Appendix~\ref{app:noise_sectors}, Sec.~4).
In the instantaneous bright manifold, the same diagnostic directs
the first-order $A_2$ response entirely toward $|0\rangle$, whereas
an equal-norm counterfactual $E$-sector injection points toward
the logical bright state.

\paragraph{Variance normalization.}
The $N = 20$ discrete bath under-samples the macroscopic surface
integral, requiring a finite-size correction.
We employ a static per-lattice geometric scaling factor that
preserves the critical inter-trajectory DC offsets responsible for
inhomogeneous broadening ($T_2^*$), rather than per-trajectory
normalisation which would destroy quasi-static noise structure.

\paragraph{Lindbladian dissipation.}
Three collapse operators are integrated into the master equation:
(i)~spin-lattice relaxation ($T_1 = \SI{1}{\milli\second}$) via
$|\pm 1\rangle \to |0\rangle$ decay channels;
(ii)~rotating-frame depolarization
($T_{1\rho} = \SI{500}{\micro\second}$), contributing
${\sim}0.06\%$ at the optimal gate time; and
(iii)~pure dephasing from the ME-CCE stochastic field.

\subsection{Numerical implementation}
\label{sec:numerics}

The QuTiP~5.0.4 library~\cite{qutip,qutip5}, leveraging the
\texttt{mesolve} density-matrix propagator, executes the
open-system simulations across a 384-core computing cluster.
Each sweep point averages $500$ independent Monte Carlo noise
trajectories with $2000$ time steps and solver tolerances
$\mathrm{atol} = 10^{-10}$, $\mathrm{rtol} = 10^{-8}$.
The core validation suite contains ten parametric sweeps, spanning
$11$--$121$ grid points each and yielding $\sim\!100{,}000$ total
trajectory-sweep configurations.
The supplemental validation ledger also includes an additional
Floquet/lab-frame multitone check, reported separately as Sweep~12,
and the Regime-E SiV single-shot benchmark as a separate
universal-control model calculation.

Gate fidelity is assessed at the process level using three
complementary metrics.
The \emph{Uhlmann--Jozsa state fidelity} for a single input
$\rho_0 = |{-1}\rangle\langle{-1}|$ serves as a legacy baseline:
\begin{equation}
  F_{\mathrm{state}}
  = \!\left(\mathrm{Tr}\!\sqrt{\smash[b]{\sqrt{\rho_{\mathrm{id}}}\;
  \rho_{\mathrm{act}}\;\sqrt{\rho_{\mathrm{id}}}}}\right)^{\!2}.
  \label{eq:fidelity}
\end{equation}
We propagate an informationally complete (IC) set of six
input states $\{|{\pm x}\rangle, |{\pm y}\rangle,
|{\pm z}\rangle\}$ through the same noise realization to
reconstruct the $4\times 4$ Choi matrix
(the process-level representation of the quantum channel)
$\Lambda$ of the two-dimensional channel.
The \emph{entanglement (process) fidelity} is
\begin{equation}
  F_e = \langle\Phi_U| \Lambda |\Phi_U\rangle, \qquad
  |\Phi_U\rangle = (U_{\mathrm{target}}\otimes I)|\Phi^+\rangle,
  \label{eq:Fe}
\end{equation}
where $|\Phi^+\rangle = (|00\rangle + |11\rangle)/\sqrt{d}$
is the maximally entangled state, and the \emph{average gate fidelity} follows from the Horodecki
formula~\cite{horodecki1999,nielsen2002_gatefidelity}:
\begin{equation}
  \Favg = \frac{d\,F_e + 1}{d + 1}, \qquad d = 2.
  \label{eq:Favg}
\end{equation}

\paragraph{Post-selection and leakage-aware metrics.}
Projecting each $3\times 3$ qutrit output onto
the computational subspace and renormalizing generates a
$2\times 2$ Choi matrix; the derived $\Favg$ and $F_e$ are
\emph{conditional} gate fidelities.
The mean survival probability is
$p_{\mathrm{surv}} = 1 - \overline{\mathcal{L}}$
(where $\overline{\mathcal{L}}$ is the trajectory-averaged
leakage),
and the \emph{effective (unconditional) average gate fidelity} is
\begin{equation}
  \Favg^{\mathrm{eff}}
    = p_{\mathrm{surv}}\,\Favg^{\mathrm{cond}},
  \label{eq:Feff}
\end{equation}
which treats all leaked population as completely erroneous.
All ensemble statistics are quoted with $95\%$ bootstrap confidence
intervals ($2000$ resamples).

\section{SiV$^-$ orbital $\Lambda$-control benchmark}
\label{sec:siv}

The group-IV vacancy centers in diamond
(SiV$^-$, GeV$^-$, SnV$^-$, and PbV$^-$) possess $D_{3d}$
point-group symmetry and an
\emph{orbital} $E_g$ ground-state doublet split by spin-orbit
coupling into two Kramers doublets separated by
$\Delta_{\mathrm{SO}}$.
Because the strain tensor decomposes as
$\Gamma_\varepsilon = 2A_{1g} \oplus 2E_g$ and the inter-branch
transition operators carry $E_g$ symmetry~\cite{meesala2018_siv},
Theorem~\ref{thm:synrot} applies directly:
$E_g \otimes E_g \supset A_{1g}$ with multiplicity~1 guarantees
the dot-product coupling
$H = f_\perp(\varepsilon_1\,L_1 + \varepsilon_2\,L_2)$,
where $L_{1,2}$ are orbital pseudo-spin operators within the
$E_g$ doublet.

\subsection{$D_{3d}$ orbital-strain coupling}
\label{sec:d3d}

The three-level Hilbert space comprises two computational states
within the lower Kramers doublet
(a time-reversal-protected degenerate pair)
($|{+1}\rangle \equiv |g_-{\downarrow}\rangle$,
$|{-1}\rangle \equiv |g_-{\uparrow}\rangle$)
and the auxiliary upper-branch state
$|0\rangle \equiv |g_+\rangle$,
with splitting
$\Delta_{\mathrm{SO}} = \SI{48}{\giga\hertz}$~\cite{hepp2014_siv}.
The transverse orbit--strain coupling constant
$f_\perp = \SI{1.3}{\peta\hertz\per\strain}$~\cite{meesala2018_siv}
is approximately $5\times 10^5$ times larger than the NV
spin--strain coupling $h_{26} \approx 3$~GHz/strain,
reflecting the fundamental difference between direct
electrostatic crystal-field modulation (orbital) and
relativistic spin-orbit-mediated coupling (spin).
This yields mechanical Rabi frequencies
$\Omm = f_\perp \varepsilon_0$ reaching hundreds of MHz
at experimentally demonstrated strain
levels~\cite{cornell2025}, enabling gate times
in the low-nanosecond regime.

Two critical differences from NV emerge:
(i)~the AC Stark shift arising from the $\honesix$-type
double-quantum tensor is absent, since $D_{3d}$ orbital systems
lack the $\Delta m_s = \pm 2$ selection rules that generate the
NV Stark structure (Sec.~\ref{sec:acstark}), eliminating the
need for DC compensation; and
(ii)~the dominant dissipation channel is reversed: orbital
relaxation drives population from the auxiliary state \emph{into}
the computational subspace (depolarization), rather than out of it
(erasure).

\subsection{Open-system parameters}
\label{sec:siv_noise}

The simulations use Lindbladian orbital $T_1$ decay with
$T_1(\mathrm{mK}) = \SI{143}{\nano\second}$
(derived from the projected single-phonon-limited rate at
$\Delta_{\mathrm{SO}} = \SI{48}{\giga\hertz}$ via the $T^5$
phonon scaling law~\cite{jahnke2015_siv}) and
$T_1(\SI{4}{\kelvin}) = \SI{40}{\nano\second}$~\cite{jahnke2015_siv}.
Stochastic surface noise is not included: the orbital
$\Lambda$ system is insensitive to the magnetic-noise channels
that dominate NV decoherence, and the relevant orbital dephasing
mechanisms require separate characterization.
The reported fidelities are therefore $T_1$-limited upper bounds.
The scale $2/\Omm$ is the bright-state commensurability time used
for the ideal lower-manifold/SATD resource estimate.  In the
conservative Regime~C benchmark used below, the lower-manifold SATD
Hamiltonian is removed; the reported $\tgate^*$ values are selected by
a direct gate-time sweep and are lengthened relative to $2/\Omm$ to
recover adiabaticity.

For the SiV single-shot benchmark, we use the same orbital
$\Lambda$ manifold but switch from the echo-lune trajectory to the
bright-state compiler of Sec.~\ref{sec:single_shot_compiler}.
The control resources are the two orbit-strain $\Lambda$ legs and a
synchronized scalar detuning; no lower-doublet SATD actuator is
introduced.
The orbital relaxation model sends auxiliary-state population equally
into the two logical states.
Orbital dephasing, charge noise, and actuator transfer-function
errors are not included, so the reported fidelities should be
interpreted as orbital-$T_1$-limited control benchmarks.

\paragraph{Control transfer across platforms.}
The dot-product form, synthetic circular polarization via quadrature
drive, and reduction to a single-arm $\Lambda$ Hamiltonian transfer
to all five platforms.
The holonomic gate trajectory operates on the abstract $H_\Lambda$
and is platform-independent.
The SATD Hamiltonian requires a physical lower-doublet control: NV
supplies it through the resonant DQ strain tone derived in
Appendix~\ref{app:satd_hardware}, while Regime~C uses a conservative SiV
$\Lambda$-leg-only benchmark and reserves SATD-enhanced SiV
performance as a hardware-contingent upper bound.
However, the noise environment differs qualitatively: whereas NV
$T_1$ relaxation drives population from the computational
subspace into the optically distinguishable $|0\rangle$
(erasure-convertible leakage), SiV orbital relaxation drives
the auxiliary state \emph{into} the computational subspace,
producing depolarization (in-subspace Pauli errors) rather than
erasure (detectable leakage out of the computational subspace).

\section{Results}
\label{sec:results}

We label each parametric study as a numbered sweep
(Sweeps~1--10) for cross-referencing with the simulation code
and the SM.

\subsection{NV gate-time optimization (Sweep~7)}
\label{sec:sweep7}
\label{sec:nv_results}

Sweep~7 is the central result: process-level fidelity versus
$\tgate = 1.0$--$\SI{5.0}{\micro\second}$ [Fig.~\ref{fig:gatetime},
Table~\ref{tab:gatetime}].
All six informationally complete input states yield $\Favg$, $F_e$,
leakage, and bootstrap confidence intervals at every grid point.

The legacy (single-state) fidelity $F_{\mathrm{leg}}$, computed
from the input $|{-1}\rangle$ alone
(Eq.~\eqref{eq:fidelity}), decreases monotonically:
\begin{equation}
  \varepsilon_{\mathrm{leg}}(T)
  \approx 0.003\% + 0.14\%/\mu\text{s} \times T,
  \label{eq:error_leg}
\end{equation}
where the near-zero intercept confirms SATD eliminates coherent
leakage.
In contrast, $\Favg$ oscillates with period
$\Delta T \approx \SI{0.83}{\micro\second}$
($1/\Delta T \approx \Omm/2$), peaking at ``magic'' times
when the bright-state dynamical phase is commensurate with $2\pi n$
and dropping as low as $90\%$ between them, a discrepancy of up to
$9.5\%$ that underscores why process tomography is essential.

At the optimal Regime~A magic time
$\tgate = \SI{1.833}{\micro\second}$:
\begin{equation}
  \boxed{\Favg = 99.88\%,\quad F_e = 99.82\%
  \quad (\tgate = \SI{1.833}{\micro\second})},
  \label{eq:bestF}
\end{equation}
This operating point uses $\Omm=\SI{2.22}{\mega\hertz}$ and defines
the rotating-frame NV channel benchmark; the GHz-HBAR projection is reported separately in
Sweep~9.
The 95\% bootstrap confidence interval (CI) is
$\Favg \in [99.882\%,\, 99.884\%]$,
mean leakage $\overline{\mathcal{L}} = 0.484\%$, and
\begin{equation}
  \Favg^{\mathrm{eff}}
    = 0.995158 \times 0.998830
    = 99.40\%.
  \label{eq:Feff_val}
\end{equation}
The conditional $\Favg = 99.88\%$ corresponds to per-gate
depolarizing error $r \approx 0.18\%$.
The leakage is entirely detectable and convertible to a
heralded erasure because $m_s = 0$ exhibits high
photoluminescence (PL) contrast.

\begin{table}[b]
\caption{\label{tab:gatetime}Gate-time compression results for the
composite NGQC + SATD protocol (200 trajectories/point).
$F_{\mathrm{leg}}$ is the single-state $|{-1}\rangle$ fidelity;
$\Favg$ and $F_e$ are the process-level average gate and
entanglement fidelities from 6-state IC tomography.
Rows marked ($\star$) are ``magic'' gate times where the bright-state
dynamical phase is commensurate with $2\pi n$.}
\begin{ruledtabular}
\begin{tabular}{ccccc}
  $\tgate$ ($\mu$s) & $F_{\mathrm{leg}}$ (\%) &
  $\Favg$ (\%) & $F_e$ (\%) & Leak.\ (\%) \\
  \hline
  $1.000^{\star}$ & $99.89$ & $99.80$ & $99.70$ & $3.07$ \\
  $1.333$         & $99.86$ & $90.22$ & $85.33$ & $0.18$ \\
  $1.833^{\star}$ & $99.81$ & $\mathbf{99.88}$ & $\mathbf{99.82}$ & $0.48$ \\
  $2.000$         & $99.79$ & $98.10$ & $97.15$ & $7.21$ \\
  $2.667^{\star}$ & $99.72$ & $99.83$ & $99.75$ & $0.55$ \\
  $3.500^{\star}$ & $99.63$ & $99.63$ & $99.45$ & $3.24$ \\
  $4.500^{\star}$ & $99.53$ & $99.71$ & $99.57$ & $0.27$ \\
  $5.000$         & $99.48$ & $90.46$ & $85.69$ & $1.12$ \\
\end{tabular}
\end{ruledtabular}
\end{table}

\subsection{Control robustness and implementation projections (Sweeps~1--6, 9--10)}
\label{sec:sweep1}
\label{sec:sweep2}
\label{sec:sweep4}
\label{sec:sweep5}
\label{sec:sweep6}
\label{sec:sweep9}
\label{sec:sweep10}

\paragraph{Geometric robustness (Sweep~2).}
Figure~\ref{fig:robustness} compares the gate error of the
holonomic protocol at full drive against a reduced-drive
dynamical baseline ($\Omm/2$, no counter-diabatic correction)
across four decades of surface hopping rate
$\ghop = 10^6$--$\SI{e10}{\hertz}$.
The holonomic gate error forms a noise-immune plateau at
${\sim}0.50\%$ (variation $\pm 0.006\%$),
consistently below the dynamical gate
($0.54\%$--$0.63\%$, variation $\pm 0.04\%$).
The dynamical gate error decreases monotonically with increasing
$\ghop$ as motional narrowing sets in beyond the crossover rate
$\gamma_A = 1/\tauc^A$ (dashed line), whereas the holonomic
plateau is insensitive to this crossover.
This dual advantage, a lower error floor \emph{and} $7\times$
smaller variation, is intrinsic to the geometric phase mechanism:
the holonomy depends on the enclosed solid angle, not on the
instantaneous noise realization.

\begin{figure}[t]
  \centering
  \includegraphics[width=\columnwidth]{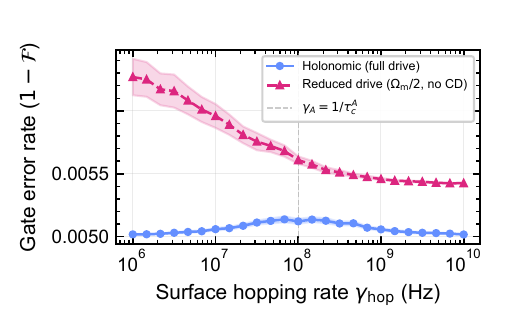}
  \caption{\label{fig:robustness}%
  Geometric robustness of the holonomic gate against surface noise
  (Sweep~2).
  Gate error rate $(1 - \Favg)$ versus surface hopping rate
  $\ghop$ for the composite NGQC + SATD holonomic gate at full
  drive strength (blue circles) and a reduced-drive dynamical
  baseline ($\Omm/2$, no counter-diabatic correction; pink
  triangles).
  Shaded bands indicate $2\sigma$ bootstrap confidence intervals.
  The holonomic gate maintains a flat plateau
  ($\pm 0.006\%$ variation) across four decades of $\ghop$,
  whereas the dynamical gate exhibits $7\times$ larger variation
  and converges only partially under motional narrowing.
  The dashed vertical line marks $\gamma_A = 1/\tauc^A$,
  the crossover between quasi-static and motionally narrowed
  noise regimes.}
\end{figure}

\paragraph{Counter-diabatic validation (Sweeps~5--6).}
Sweep~5 scans the DRAG parameter $\lambda = 0$--$2.0$
[Fig.~\ref{fig:optimization}]:
fidelity peaks sharply at $\lambda = 1.0$
($F_{\mathrm{noisy}} = 99.49\%$), confirming exact cancellation
of first-order non-adiabatic coupling.
Sweep~6 validates the SATD correction strength $\aCD = 0$--$2.0$
[Fig.~\ref{fig:optimization}(b)]:
a razor-sharp symmetric peak at $\aCD = 1.0$
($F = 99.76\%$); at $\aCD = 0$ and $2.0$, fidelity collapses to
$\sim\!26\%$.
Achieving $F > 99\%$ requires $|\aCD - 1.0| < 0.01$,
confirming that the analytically exact counter-diabatic strength
predicted by Berry's theorem~\cite{berry2009_transitionless}
is verified numerically.

\begin{figure*}[t]
  \centering
  \includegraphics[width=\textwidth]{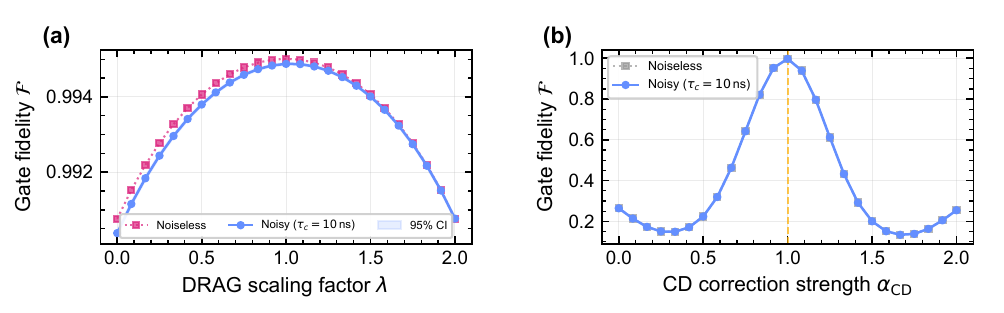}
  \caption{\label{fig:optimization}%
  Counter-diabatic validation.
  (a)~Sweep~5: noiseless (solid) and noisy (dashed) fidelity versus
  counter-diabatic parameter~$\lambda$; the peak at $\lambda = 1.0$
  confirms the $99.49\%$ composite ceiling.
  (b)~Sweep~6: SATD correction strength $\aCD$; the razor-sharp
  symmetric peak at $\aCD = 1.0$ validates exact counter-diabatic
  correction ($F = 99.76\%$).}
\end{figure*}

\paragraph{Extended parametric studies.}
Additional sweeps confirm robustness against hardware and
implementation variations (Appendix~\ref{app:extended_sweeps_sm}):
fidelity versus Rabi frequency $\Omm$ (Sweep~1, variation
$\leq 0.09\%$), frequency detuning tolerance (Sweep~3,
$< 0.04\%$ variation over $\SI{500}{\kilo\hertz}$), boundary-condition
sensitivity (Sweep~4), GHz HBAR extension (Sweep~9,
$\Favg=99.50$--$99.85\%$ at $\tgate=\SI{2}{\micro\second}$ across
the projected HBAR scenarios, assuming the same ideal rotating-frame
control Hamiltonian while bounding HBAR-specific phase noise, heating,
and transducer filtering separately), and
quadrature drive imbalance (Sweep~10,
$\Favg \geq 99.5\%$ on the central plateau, e.g.\
$|r - 1| \lesssim 4\%$ and
$|\delta\varphi| \lesssim 4^\circ$; over the broader
$6\%$--$6^\circ$ box the minimum sampled value is $99.32\%$).
A Floquet multitone
validation retaining leading counter-rotating terms from the two
$\Lambda$ tones and the DQ SATD tone gives only
$2.4\times10^{-5}$ additional infidelity in the ideal-bandwidth
limit and $1.0\times10^{-4}$ for a \SI{200}{\mega\hertz}
bandwidth with $0.2^\circ$ phase noise and $0.2\%$ amplitude
imbalance (Appendix~\ref{app:extended_sweeps_sm}).  The same test shows that an
unpredistorted $Q=10^4$ GHz-HBAR transfer function is narrower than
the \SI{1.833}{\micro\second} Regime-A benchmark envelope.  Regime~B
therefore converts the benchmark into a resonator-design target:
predistortion, lower effective~$Q$, or a slower HBAR-specific pulse
sets the envelope-tracking strategy.

\begin{figure}[t]
  \centering
  \includegraphics[width=\columnwidth]{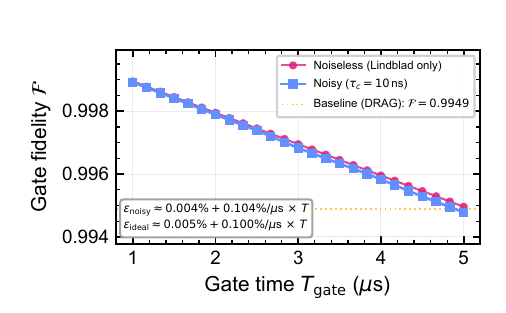}
  \caption{\label{fig:gatetime}%
  Gate-time compression for Regime~A
  (Sweep~7, $\Omm = \SI{2.22}{\mega\hertz}$):
  legacy single-state fidelity (ideal: red circles; noisy: blue
  squares) versus $\tgate$, with the smooth linear envelope
  (Eq.~\eqref{eq:error_leg}).  Process-level $\Favg$
  (Table~\ref{tab:gatetime}) oscillates due to bright-state
  dynamical phase, peaking at magic gate times; at the optimum
  $\tgate = \SI{1.833}{\micro\second}$,
  $\Favg^{\mathrm{cond}} = 99.88\%$
  ($\Favg^{\mathrm{eff}} = 99.40\%$).
  GHz HBAR extension results are in Appendix~\ref{app:extended_sweeps_sm}.
  }
\end{figure}

\subsection{NV error budget}
\label{sec:errorbudget}

The error budget at the optimal Regime~A operating point
($\tgate = \SI{1.833}{\micro\second}$, composite NGQC + SATD) is
presented in Table~\ref{tab:errorbudget}
(full parameter listing in Appendix~\ref{app:params},
Table~\ref{tab:params}).
All entries marked
``Lindblad,'' ``ME-CCE,'' ``coherent,'' ``AC Stark,'' and
``South Pole'' are \emph{demonstrated in simulation}.
Within the conditional process-level error
($\varepsilon = 0.12\%$), the largest contributor is
Lindbladian $T_1/T_{1\rho}$ dissipation;
however, the overall error budget is dominated by
leakage ($\overline{\mathcal{L}} = 0.48\%$),
which exceeds the in-subspace error by a factor of~$4$.

\begin{table}[h!]
\caption{\label{tab:errorbudget}Error budget at the optimal Regime~A
operating point ($\tgate = \SI{1.833}{\micro\second}$,
composite NGQC + SATD, $d_{\mathrm{NV}} = \SI{20}{\nano\meter}$,
$T_{1\rho} = \SI{500}{\micro\second}$).  The upper rows decompose the conditional
process-level error ($\varepsilon = 0.12\%$); the lower rows show
the impact of dissipative leakage on the unconditional (effective)
fidelity.}
\begin{ruledtabular}
\begin{tabular}{lc}
  Error source & Contribution \\
  \hline
  Lindblad $T_1/T_{1\rho}$\footnote{Reduced by isotopic enrichment} & ${\sim}0.12\%$ \\
  \quad of which $T_{1\rho}$\footnote{$5\times$ lower than at $T_{1\rho}=\SI{200}{\micro\second}$} & ${\sim}0.06\%$ \\
  \quad of which $T_1$\footnote{Sub-dominant} & ${\sim}0.06\%$ \\
  ME-CCE surface noise\footnote{Reduced by deeper implant ($20$ vs.\ $10$~nm)} & ${\sim}0.02\%$ \\
  Coherent (SATD residual)\footnote{Near-zero intercept} & ${\sim}0.003\%$ \\
  AC Stark shift\footnote{$\Sz$ on $\mathcal{Q}$; NGQC-suppressed, DC-compensated} & $\lesssim 0.01\%$ (comp.) \\
  South Pole singularity\footnote{Neutralised ($\sin\theta=0$ at pole)} & $0.000\%$ \\
  \hline
  \multicolumn{2}{c}{\textit{Unconditional (effective) fidelity}} \\
  \hline
  \textbf{Legacy envelope}\footnote{$F_{\mathrm{leg}} = 99.81\%$ at the optimal sampled point} & $\mathbf{\sim 0.19\%}$ \\
  \textbf{Process-level ($\Favg$, cond.)}\footnote{Magic-time refocusing; $\Favg = 99.88\%$} & $\mathbf{\sim 0.12\%}$ \\
  Leakage ($\overline{\mathcal{L}}$)\footnote{Full noisy Sweep~7 run; the separate surface-only sector diagnostic gives $L_{\mathrm{surf}}=0.456\%$} & $0.484\%$ \\
  \textbf{Effective ($\Favg^{\mathrm{eff}}$, uncond.)}\footnote{$\Favg^{\mathrm{eff}} = 99.40\%$}
    & $\mathbf{\sim 0.60\%}$ \\
\end{tabular}
\end{ruledtabular}
\end{table}

\paragraph{Leakage decomposition.}
SATD eliminates \emph{coherent} leakage to $0.003\%$.
Stochastic surface noise perturbs the instantaneous eigenstates,
generating noise-mediated CD mismatch that drives population into
$|0\rangle$.
A separate sector-projection run using the actual
$B_z^{\mathrm{surf}}(t)S_z$ traces but omitting Lindblad relaxation
gives $L_{\mathrm{surf}}=0.456\%$, confirming that the leakage channel
is produced by the modeled $A_2$ surface perturbation rather than by
an assumed point-group-symmetric bath.
The full noisy Sweep~7 run gives
$\overline{\mathcal{L}}=0.484\%$ using 200 trajectories, whereas
the surface-only sector diagnostic gives $L_{\mathrm{surf}}=0.456\%$
using 96 trajectories with Lindblad relaxation omitted.
We therefore use the full-run leakage for the decoder-facing erasure
rate and use the surface-only number only to identify the $A_2$ origin
of the leakage channel.

\paragraph{Mechanism decomposition.}
The error reduction (from baseline $1.10\%$ to $0.19\%$
on the same single-input metric, about $6\times$, and $0.12\%$
at the process level;
Appendix~\ref{app:extended_sweeps_sm})
is achieved through three mechanisms acting in concert:
(1)~singularity neutralization,
(2)~exact counter-diabatic driving ($\lambda \equiv 1.0$, coherent
error $\sim\!0.003\%$), and
(3)~gate-time compression (dissipative error
${\sim}0.14\%/\mu\text{s}$, minimized at the optimal magic
time $\tgate = \SI{1.833}{\micro\second}$).

\subsection{SiV orbital $\Lambda$-control result}
\label{sec:siv_results}

We next apply the same synthetic-rotation $\Lambda$-leg control to a
physically distinct system whose encoding (orbital vs.\ spin),
symmetry ($D_{3d}$ vs.\ $C_{3v}$), timescale (nanoseconds vs.\
microseconds), and dominant dissipation channel (depolarization
vs.\ erasure) all differ from the NV case.
Following the Regime~C scope in Table~\ref{tab:regimes}, the SiV
benchmark sets $\aCD=0$ and uses only the physical $\Lambda$ legs
established by the orbit-strain selection rule.
Table~\ref{tab:siv_results} summarizes the results.
The best conservative point reaches $\Favg=96.32\%$ on the
SiV$^-$ orbital $\Lambda$ system at millikelvin temperatures.
The ideal lower-manifold SATD simulation gives $\Favg=99.56\%$ at
$\tgate=\SI{6.5}{\nano\second}$, but this number is treated as a
control-resource upper bound pending a hardware-level SiV SATD
channel.

\begin{table}[t]
\caption{\label{tab:siv_results}%
Conservative SiV$^-$ holonomic $Z(\pi)$ gate benchmark with the
lower-manifold SATD Hamiltonian removed ($\aCD=0$).
$\Favg$ and $F_e$ are computed from the Choi matrix over the
computational subspace at the best millikelvin point found in the
gate-time sweep.}
\begin{ruledtabular}
\begin{tabular}{lcccc}
$\Omm$ & $\tgate^*$ &  $\Favg$ & $F_e$ & Leak. \\
(MHz) & (ns) & (mK) & (mK) & (mK) \\
\colrule
100 & 99.2  & 91.18\% & 86.78\% & 7.96\% \\
300 & 46.2  & \textbf{96.32\%} & \textbf{94.48\%} & 3.85\% \\
500 & 123.3 & 90.14\% & 85.21\% & 8.64\% \\
\end{tabular}
\end{ruledtabular}
\end{table}

The benchmark shows that synthetic-rotation $\Lambda$ control remains
quantitatively useful in the orbital platform.  In this conservative
regime, the gate time must be lengthened to recover adiabaticity, so
orbital $T_1$ relaxation becomes the dominant limitation.

\subsection{Single-shot universal control: generic and SiV implementations}
\label{sec:single_shot_results}

\subsubsection{Generic / NV-compatible single-shot suite (Regime D)}

Regime~D validates the second compiler rather than a second QEC
channel.  The simulation uses the same symmetry-generated complex
$\Lambda$ legs as Regime~A, adds the scalar detuning required by
Eq.~\eqref{eq:ss_controls}, and applies a smooth
$\sin^2$ common envelope with
$\Omega_{\max}=\SI{2.220}{\mega\hertz}$ and
$\tgate=\SI{0.9009}{\micro\second}$.
The nominal broadband surface-acoustic-wave in-phase/quadrature
(SAW/IQ) implementation gives
$F_{\mathrm{eff}}\simeq99.86\%$ across a representative universal
gate set, where $F_{\mathrm{eff}}$ counts leakage as error
(Table~\ref{tab:single_shot_suite}).

\begin{table}[h!]
\caption{\label{tab:single_shot_suite}
Regime~D single-shot universal-gate suite for the nominal
\SI{200}{\mega\hertz} SAW/IQ waveform channel.  The compiler is not
used in the XZZX overhead estimate.}
\begin{ruledtabular}
\begin{tabular}{lccc}
Gate & $F_{\mathrm{eff}}$ & $F_{\mathrm{cond}}$ & Leakage \\
\colrule
$Z_\pi$ & $99.8703\%$ & $99.9358\%$ & $0.0655\%$ \\
$X_\pi$ & $99.8675\%$ & $99.9433\%$ & $0.0758\%$ \\
$X_{\pi/2}$ & $99.8601\%$ & $99.9377\%$ & $0.0776\%$ \\
Generic $0.73\pi$ & $99.8657\%$ & $99.9405\%$ & $0.0749\%$ \\
\end{tabular}
\end{ruledtabular}
\end{table}
\FloatBarrier

The non-Abelian diagnostic is performed independently at the unitary
compiler level: $X_{\pi/2}Z_{\pi/2}$ and
$Z_{\pi/2}X_{\pi/2}$ each match their intended target, but the two
composed unitaries have average fidelity $0.5$ relative to one
another, as expected from Eq.~\eqref{eq:ss_commutator}.  The residual
single-shot channels are only moderately biased and gate-dependent, so
they are not used for architecture-level fault-tolerance claims.

\subsubsection{SiV orbital single-shot suite (Regime E)}
\label{sec:siv_single_shot_results}

We next apply the same bright-state compiler to the SiV orbital
$\Lambda$ manifold.
This test is motivated by the absence of an identified SiV
lower-doublet SATD actuator: unlike the SATD echo-lune protocol, the
single-shot compiler requires only proportional $\Lambda$-leg control
and scalar detuning.
The primary millikelvin benchmark uses
$\Omega_{\mathrm{peak}}=\SI{300}{\mega\hertz}$,
$\tgate=2/\Omega_{\mathrm{peak}}=\SI{6.667}{\nano\second}$,
$T_{1,\mathrm{orb}}=\SI{142.7}{\nano\second}$, and peak strain
$\epsilon_{\mathrm{peak}}=2.308\times10^{-7}$.
The resulting four-gate suite is summarized in
Table~\ref{tab:siv_single_shot_suite}.
The \SI{100}{\mega\hertz}, \SI{300}{\mega\hertz}, and
\SI{500}{\mega\hertz} millikelvin points, together with a
\SI{4}{\kelvin} temperature-stress test, are reported in
Appendix~\ref{app:single_shot}, Sec.~5.

\begin{table*}[t]
\caption{\label{tab:siv_single_shot_suite}
Regime~E SiV single-shot bright-state universal-control suite.
The benchmark uses only the two orbit-strain $\Lambda$ legs and
synchronized scalar detuning; no lower-doublet SATD actuator is
assumed.
The dominant error is orbital-$T_1$ depolarization during auxiliary
occupation.
The result is not used in the biased-erasure or XZZX-channel
analysis.}
\begin{ruledtabular}
\begin{tabular}{lccccc}
Gate & $\tgate$ & $\epsilon_{\mathrm{peak}}$ &
$F_{\mathrm{eff}}$ & $F_{\mathrm{cond}}$ & Leakage \\
\colrule
$Z_\pi$ & \SI{6.667}{\nano\second} & $2.308\times10^{-7}$ &
$99.5221\%$ & $99.7488\%$ & $2.275\times10^{-3}$ \\
$X_\pi$ & \SI{6.667}{\nano\second} & $2.308\times10^{-7}$ &
$99.5221\%$ & $99.7488\%$ & $2.275\times10^{-3}$ \\
$X_{\pi/2}$ & \SI{6.667}{\nano\second} & $2.308\times10^{-7}$ &
$99.6699\%$ & $99.7979\%$ & $1.283\times10^{-3}$ \\
Arbitrary axis $0.73\pi$ & \SI{6.667}{\nano\second} &
$2.308\times10^{-7}$ & $99.5671\%$ & $99.7621\%$ &
$1.957\times10^{-3}$ \\
\end{tabular}
\end{ruledtabular}
\end{table*}

The SiV single-shot result changes the SiV narrative.
In Regime~C, removing the lower-doublet SATD Hamiltonian forces a
slower $\Lambda$-leg-only echo-lune trajectory and leaves the best
millikelvin point at $\Favg=96.32\%$.
In Regime~E, the bright-state compiler supplies a shortcut that
remains expressible through the SiV-available $\Lambda$ legs plus
scalar detuning.
The residual channel is qualitatively different from the NV
biased-erasure channel: orbital relaxation returns auxiliary
population to the logical manifold, so the dominant error is
in-subspace depolarization rather than detectable leakage.
Consequently, Regime~E supports a high-fidelity SiV universal-control
layer, but it is not a QEC-channel-engineering result.

\subsection{Conditional symmetry selection of the biased-erasure channel}
\label{sec:erasure}

The $\Lambda$-system traversal through $|0\rangle$, which
introduces leakage that a direct Rabi $\pi$-pulse avoids, becomes
a structural advantage under quantum error correction: the leaked
population is optically distinguishable and therefore convertible
to a heralded erasure error, whereas the Rabi gate's in-subspace
Pauli rotations are not.
This erasure-conversion route is available to $\Lambda$-system gate
topologies whenever leakage to the auxiliary state is efficiently
distinguishable~\cite{wu2022_erasure}.

However, erasure conversion alone underestimates the
code-capacity decoder advantage.
The same $\Gamma_E$ irrep that enables synthetic rotation
(Theorem~\ref{thm:synrot}) also classifies the noise
protection of the holonomic gate:
the $\Lambda$ Hamiltonian has bright eigenstates
$|B_\pm\rangle = (|B\rangle \pm |0\rangle)/\sqrt{2}$ at
energies $\pm\Omm/2$, and a quasi-static perturbation
$V = \sigma\,\hat{V}$ couples the dark state to these
bright states via two channels whose interference pattern
depends on the irrep of~$\hat{V}$.

The symmetry argument has two logically distinct parts. First,
point-group symmetry fixes the allowed operator sectors of the
strain-defect coupling and of weak system-operator perturbations. Second, the
$\Lambda$-manifold dynamics maps those sectors into different
geometric-error and leakage channels. Representation theory fixes the
sector hierarchy, but not the scalar rates. In the Regime-A
simulation, the ME-CCE/KMC surface bath enters specifically through
$B_z^{\mathrm{surf}}(t)S_z$, an $A_2$-sector perturbation; the
resulting erasure rate is therefore a numerical output of the
open-system model, not a group-theoretic prediction. A direct
sector-projection diagnostic gives unit weight in the $A_2$ operator
sector and zero numerical weight in the $A_1$ or $E$ representatives
(Appendix~\ref{app:noise_sectors}, Sec.~4). The following
proposition proves the sector hierarchy; the corollary gives the
effective biased-erasure channel under weak noise and efficient
detection of leakage to $|0\rangle$.

\begin{proposition}[Noise-sector classification]
\label{prop:noise_sectors}
Let $\hat{V}$ transform as irrep~$\Gamma_V$ of the defect
point group, with RMS amplitude~$\sigma$.  The geometric-phase
error of the composite two-loop holonomic gate satisfies:
\begin{enumerate}
  \item[\textbf{(a)}] \textbf{$A_1$ sector}
  ($S_z^2$): $\delta\gamma_{\mathrm{geo}} = 0$
  \textup{(}$S_z^2 = I$ on~$\mathcal{Q}$\textup{)}.

  \item[\textbf{(b)}] \textbf{$A_2$ sector}
  ($S_z$): $\delta\gamma_{\mathrm{geo}} = O(\sigma^2/\Omm^2)$.
  Destructive interference between the $|B_+\rangle$ and
  $|B_-\rangle$ channels \textup{(}``parity filter''\textup{)}
  directs the first-order correction into
  $|0\rangle \perp \mathcal{Q}$, leaving the Berry connection
  invariant at $O(\sigma)$.

  \item[\textbf{(c)}] \textbf{$E$ sector}
  ($S_x,\,S_y$): $\delta\gamma_{\mathrm{geo}} = O(\sigma^3/\Omm^3)$.
  The parity filter reinforces the in-subspace $|B\rangle$
  component, modifying the Berry connection at $O(\sigma)$;
  however, the $\pi$-echo between the two NGQC lunes cancels
  the first- and second-order corrections, with the leading
  surviving Fourier mode appearing at third order.
\end{enumerate}
\end{proposition}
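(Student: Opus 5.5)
The proof I would give is a first-order (Rayleigh--Schr\"odinger) perturbation analysis of the instantaneous dark state of $H_\Lambda$ in Eq.~\eqref{eq:Hlambda}. Each irrep sector then enters only through two matrix elements, and the echo symmetry of the two NGQC lunes handles the $E$ sector. Write $H_\Lambda/h=(\Omm/2)(|0\rangle\langle B|+|B\rangle\langle 0|)$. Its eigenstates are $|D\rangle$ (energy $0$) and $|B_\pm\rangle=(|B\rangle\pm|0\rangle)/\sqrt2$ (energies $\pm\Omm/2$). For $V=\sigma\hat V$ define
\[
b=\langle B|\hat V|D\rangle,\qquad o=\langle 0|\hat V|D\rangle .
\]
Then $\langle B_\pm|\hat V|D\rangle=(b\pm o)/\sqrt2$, and the first-order dark-state correction is
\[
|\delta D\rangle=\sum_\pm\frac{\sigma\langle B_\pm|\hat V|D\rangle}{0-(\pm\Omm/2)}|B_\pm\rangle
=-\frac{2\sigma}{\Omm}\bigl(o\,|B\rangle+b\,|0\rangle\bigr).
\]
This identity is the ``parity filter'': the in-subspace matrix element $b$ is routed into $|0\rangle$, and the leakage-type element $o$ is routed into $|B\rangle$, through the interference of the $|B_+\rangle$ and $|B_-\rangle$ channels. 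The geometric error is then computed from the perturbed Berry connection $\mathcal A=i\langle\tilde D|\partial_t\tilde D\rangle$ with $|\tilde D\rangle=|D\rangle+|\delta D\rangle+O(\sigma^2)$, integrated over the two-lune path.

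Next I would go through the sectors using the $m_s$ selection rules of the representatives.
\begin{itemize}
\item[(a)] $S_z^2$ is diagonal and equals $I$ on $\mathcal Q=\{|{\pm1}\rangle\}$. Hence $b=\langle B|D\rangle=0$ and $o=0$. On $\mathcal Q$ the perturbation is proportional to the identity, so it only adds a global phase common to $|D\rangle$ and $|B\rangle$, and $\delta\gamma_{\mathrm{geo}}=0$ exactly.
\item[(b)] $S_z$ preserves $m_s$, so $o=\langle0|S_z|D\rangle=0$ while $b\neq0$. Thus $|\delta D\rangle\propto|0\rangle\perp\mathcal Q$. The $O(\sigma)$ cross terms in $\mathcal A$ are $i\langle D|\partial_t\delta D\rangle+i\langle\delta D|\partial_t D\rangle$, and they vanish because $|D\rangle$ and $\partial_t|D\rangle$ lie in $\mathcal Q$. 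The first surviving term is $i\langle\delta D|\partial_t\delta D\rangle$ together with the second-order norm correction, which gives $O(\sigma^2/\Omm^2)$. The population $|2\sigma b/\Omm|^2$ is the leakage into $|0\rangle$ identified with erasure.
\item[(c)] $S_x$ and $S_y$ connect $|{\pm1}\rangle$ only to $|0\rangle$, so $b=0$ and $o\neq0$. Then $|\delta D\rangle\propto|B\rangle\in\mathcal Q$, and $\mathcal A$ acquires a nonzero $O(\sigma)$ term $\propto\sigma\,\mathrm{Re}[o^*\langle B|\partial_t D\rangle]/\Omm$.
\end{itemize}

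The remaining task in (c) is the echo cancellation. For quasi-static $\sigma$ (fixed over the gate) I would expand $\delta\gamma_{\mathrm{geo}}=\sum_n c_n(\sigma/\Omm)^n$. In each $c_n$ the $\phi$-dependence appears through $e^{\pm ik\phi}$ harmonics, because $o=\langle 0|\hat V|D\rangle$ carries the factor $e^{i\phi}\sin(\theta/2)$ from $|D\rangle$. Loop~2 is loop~1 with $\phi\to\phi+\pi$ and an identical $\theta(t)$. Every harmonic odd in $k$ therefore changes sign between the lunes and cancels in the sum; this disposes of $c_1$. I would then compute the second-order term explicitly from second-order perturbation theory on the same three-level structure. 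The aim is to show that its surviving piece is either a $\phi$-independent common-mode term that cancels between $\gamma_D$ and $\gamma_B$ (so it does not enter the relative phase $\gamma_D-\gamma_B$), or an odd harmonic removed by the $\pi$ shift.

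I expect this second-order step to be the main obstacle. The $\pi$-shift argument alone kills only odd harmonics, so $c_2$ needs the additional time-reversal symmetry $\theta(t)=\theta(T/2-t)$ of each sinusoidal lune together with the dark/bright antisymmetry. If that argument turns out too delicate, the fallback is to verify the $O(\sigma^3/\Omm^3)$ scaling by direct series expansion of the two-loop propagator, deferring the bookkeeping to Appendix~\ref{app:noise_sectors}.
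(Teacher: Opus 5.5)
Your parts (a), (b), and the first-order half of (c) follow the paper's proof. The identity $|\delta D\rangle=-\frac{2\sigma}{\Omm}(o|B\rangle+b|0\rangle)$ is a compact form of its same-sign/opposite-sign argument for $\langle B_\pm|\hat V|D\rangle$, and the Berry-connection and leakage conclusions are the same.

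The genuine gap is the second-order step of (c). You leave it as an aim, and neither mechanism you propose for it works. The geometric phase is a reparametrization-invariant line integral, so the time symmetry $\theta(t)=\theta(T/2-t)$ adds nothing; what matters is that each lune descends one meridian and returns along another. And since $\gamma_D$ and $\gamma_B$ come from opposite Berry curvatures on the same loop, a perturbative change of the enclosed solid angle enters them with opposite signs and adds in $\gamma_D-\gamma_B$ rather than cancelling.

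The paper handles this step with an explicit harmonic inventory. The second-order term $c^*\partial_\theta c=\tfrac14[\cos^2\tfrac{\theta}{2}\,e^{i\phi}-\sin^2\tfrac{\theta}{2}\,e^{-i\phi}]$ has only $n=\pm1$ modes, which the $\pi$ echo removes. The $n=0$ part of $|c|^2=(1+\sin\theta\cos\phi)/2$ is removed inside each lune by the out-and-back difference $f(\phi_0)-f(\phi_0+\delta\phi)$, not by the echo.

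Your first-order bookkeeping also needs tightening. With $\mathcal A=i\langle\tilde D|\partial\tilde D\rangle$ the correction is $\frac{4\sigma}{\Omm}\mathrm{Im}[o^*\langle B|\partial_\mu D\rangle]$, not $\mathrm{Re}$. Moreover $o=\frac{1}{\sqrt2}(\cos\frac{\theta}{2}+e^{i\phi}\sin\frac{\theta}{2})$ has a $\phi$-independent part, so ``only odd harmonics'' is not automatic. It holds because that part is real and drops out of the imaginary part, giving $\delta\mathcal A_\theta\propto\sin\frac{\theta}{2}\sin\phi$, while $\delta\mathcal A_\phi\propto\sin\theta$ vanishes at the South Pole where $\phi$ moves. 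With your $\mathrm{Re}$, a $\cos\frac{\theta}{2}$ term would appear that the echo cannot remove.

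Be aware, too, that the explicit second-order calculation you plan does not reduce to those harmonics. For $\hat V=S_x$ one has $S_x=|0\rangle\langle v|+|v\rangle\langle 0|$ with $|v\rangle=(|{+1}\rangle+|{-1}\rangle)/\sqrt2$. So $H$ is again a $\Lambda$ system, and the perturbed dark state is exactly $|D\rangle+\lambda|B\rangle$ with $\lambda=-\sigma o/(\Omm/2+\sigma q)$ and $q=\langle 0|S_x|B\rangle=\frac{1}{\sqrt2}(\sin\frac{\theta}{2}-e^{i\phi}\cos\frac{\theta}{2})$. The second-order amplitude $\lambda_2=(2\sigma/\Omm)^2\,oq$ enters $\mathcal A_\theta$ linearly, through the term $-\mathrm{Im}\,\lambda$, and contains $-(\sigma/\Omm)^2\sin\theta\,e^{2i\phi}$.

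This $n=2$ harmonic is even under $\phi_0\to\phi_0+\pi$. For $\delta\phi=\pi/4$ the lune difference does not remove it either, since $\sin 2\phi_0-\sin 2(\phi_0+\delta\phi)=-1$ for both $\phi_0=0$ and $\phi_0=\pi$. My direct check with the exact dark state indicates that the North-Pole azimuth transitions do not cancel it.

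The paper's inventory keeps only the first-order coefficient $c$ and never meets this term. So neither your sketch nor the paper's argument, as written, establishes the second-order cancellation in (c). To settle it you need a complete second-order calculation that includes $\lambda_2$ and the North-Pole transitions, or the direct expansion of the two-loop propagator you mention as a fallback. The outcome may require weakening the $E$-sector claim.
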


\noindent
The proof is given in Appendix~\ref{app:noise_sectors}.
All geometric-phase errors are at least second order in
$\sigma/\Omm$ (Table~\ref{tab:noise_hierarchy}), so the
conditional process error at the optimal magic time is
decoherence-dominated.
This classification is a general consequence of the
$\Lambda$-system structure and applies to any platform in
Table~\ref{tab:platforms}.

\begin{remark}[Parity-filter duality]
\label{rem:duality}
In the $A_2$ sector, the parity filter has a dual role.
The same bright-manifold interference that directs
$|D^{(1)}\rangle \propto |0\rangle \perp \mathcal{Q}$
(protecting the Berry connection at first order, Face~1) also
means that the leading perturbative population transfer exits
the computational subspace into the optically distinguishable
auxiliary level (Face~2).
Explicitly, the first-order dark-state correction under
$V = \sigma S_z$ evaluates to
$|D^{(1)}\rangle = -[\sigma\langle B|S_z|D\rangle/(\Omm/2)]\,
|0\rangle$,
which lies entirely outside $\mathcal{Q}$ and therefore
points toward the optically distinguishable auxiliary level,
providing a microscopic explanation for why $A_2$-type
mismatch preferentially exits the computational subspace.
For $E$-sector perturbations ($S_x, S_y$), the interference
reverses: $|D^{(1)}\rangle \propto |B\rangle \in \mathcal{Q}$,
and suppression relies instead on the $\pi$-echo.
Thus the geometric-phase protection and the erasure-dominant
noise structure of Corollary~\ref{cor:biased_erasure} are not
independent observations but dual consequences of the
$A_2$-sector parity filter.
\end{remark}

\begin{table}[h]
\caption{\label{tab:noise_hierarchy}Noise-sector suppression
hierarchy from Proposition~\ref{prop:noise_sectors} and the
sector-injection diagnostic of Appendix~\ref{app:noise_sectors}, Sec.~4.  The
direction column gives the first-order response fraction projected
onto the auxiliary state $|0\rangle$ and logical bright state
$|B\rangle$.  The scaling column gives the measured perturbative
response or phase-error slope; for the $E$ sector, $1\to3$ denotes
the change from a single open lune to the $\pi$-shifted two-lune
echo.}
\scriptsize
\begin{ruledtabular}
\begin{tabular}{llll}
  Irrep & Mechanism &
  \begin{tabular}{@{}l@{}}First-order\\ direction\end{tabular} &
  \begin{tabular}{@{}l@{}}Diagnostic\\ scaling\end{tabular} \\
  \hline
  $A_1$ & common mode & none & numerical floor \\
  $A_2$ & parity filter & $f_0=1,\ f_B=0$ & weight slope $2.000$ \\
  $E$   & bright-directed + $\pi$ echo & $f_0=0,\ f_B=1$ &
  phase slope $1\to3$ \\
\end{tabular}
\end{ruledtabular}
\end{table}

\paragraph{Numerical sector-injection verification.}
To verify that the biased-erasure structure is not merely inferred
from the final open-system channel, we performed a sector-injection
diagnostic on the ideal three-level $\Lambda$ Hamiltonian.  Equal-norm
representatives $S_z^2$, $S_z$, $S_x$, and $S_y$ were injected and
the first-order dark-state correction was projected onto the
auxiliary direction $|0\rangle$ and the logical bright direction
$|B\rangle$.  The $A_2$ perturbation $S_z$ gives
$f_{|0\rangle}=1$ and $f_{|B\rangle}=0$, while the transverse
$E$-sector perturbations $S_x,S_y$ give $f_{|0\rangle}=0$ and
$f_{|B\rangle}=1$.  The response weights scale quadratically in
$\sigma/\Omm$, and the $E$-sector geometric phase changes from
approximately first order without echo to approximately cubic with
the two-lune echo.  Thus the sector-to-channel map used in
Corollary~\ref{cor:biased_erasure} is directly verified at the
$\Lambda$-manifold level (Appendix~\ref{app:noise_sectors}, Sec.~4).  The full
four-panel diagnostic, including the surface-sector projection, is
shown in Fig.~S1.  The $A_2$ result
does not mean that $A_2$ noise vanishes; it means that its leading
perturbative population is redirected outside $\mathcal{Q}$, where
efficient $|0\rangle$ readout converts it into erasure structure.

The sector-injection response defines an irrep-to-syndrome
polarimetry protocol.  For a weak perturbation $V_\Gamma$, define
$\mathcal R_{\Gamma\to c}=\lim_{\sigma\to0}p_c[V_\Gamma]/
(\sigma/\Omm)^2$, where $c$ labels a measured response port.  In the
equal-Hilbert--Schmidt-norm convention, the clean $\Lambda$ manifold
gives a block-diagonal port response: $A_1$ is common mode, $A_2$
routes population response to the auxiliary port, and $E$ routes
population response to the logical bright port.  For a mixed perturbation
$V=\sigma[\cos\alpha\,S_z+\sin\alpha(\cos\beta\,S_x+\sin\beta\,S_y)]$,
the routing fractions obey $f_0=\cos^2\alpha$ and
$f_B=\sin^2\alpha$, independent of $\beta$ (Appendix~\ref{app:noise_sectors},
Sec.~4).  Thus the
symmetry-generated $\Lambda$ manifold acts as a crystalline irrep
polarimeter: symmetry fixes the sector-to-port map, while the device
model fixes the absolute rates.

\begin{corollary}[Conditional biased-erasure channel]
\label{cor:biased_erasure}
Under the assumptions stated above, an $A_2$-sector perturbation is
parity-filtered preferentially toward auxiliary-state leakage, while
the $E$ sector is echo-suppressed. For comparable perturbation
strengths, the no-erasure residual is then $Z$-biased, with
\begin{equation}
  \eta \equiv \frac{p_Z}{p_{XY}} \sim \frac{\Omm}{\sigma}.
\end{equation}
Leakage to the optically distinguishable auxiliary state is converted
into a known-location erasure with probability
$p_{\mathrm{era}}=\eta_{\mathrm{det}}L$.
Thus the simulated gate realizes a biased-erasure channel because the
symmetry-enforced $\Lambda$ hierarchy, the holonomic encoding, and
the $|0\rangle$ detection mechanism~\cite{robledo2011_readout} act
together.

Specifically, the extracted Regime-A per-gate noise decomposes into
four channels:
\begin{enumerate}
  \item Erasure: $p_{\mathrm{era}}=\eta_{\mathrm{det}}L=0.47\%$,
  using $\eta_{\mathrm{det}}\geq97.5\%$.  The dominant simulated
  leakage is produced by the explicit
  $B_z^{\mathrm{surf}}(t)S_z$ ME-CCE/KMC perturbation and is
  consistent with the $A_2$ parity filter; a surface-only
  sector diagnostic gives $L_{\mathrm{surf}}=0.456\%$.
  \item $Z$ dephasing: $p_Z=0.168\%$, from $T_{1\rho}$ depolarization
  and residual $A_2$-sector geometric-phase error
  $O(\sigma_{A_2}^2/\Omm^2)$.
  \item Residual depolarizing error: $p_{\mathrm{dep}}=0.012\%$, from
  imperfect erasure detection.
  \item $X/Y$ bit-flip error: at the numerical floor of the extracted
  Regime-A rotating-frame NV channel, associated with the
  echo-suppressed $E$ sector and additional thermal suppression of
  $\Delta m_S=2$ spin-flip processes at $B_z=\SI{50}{\gauss}$.
\end{enumerate}

These probabilities are not group-theoretic constants. Symmetry fixes
the allowed system-operator sectors and their perturbative hierarchy;
the Regime-A device model fixes the rates.
\end{corollary}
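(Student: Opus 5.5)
The plan is to derive Corollary~\ref{cor:biased_erasure} from Proposition~\ref{prop:noise_sectors} and Remark~\ref{rem:duality} by perturbation theory about the dark state. I would then attach the numerical rates from the Regime-A channel extraction as inputs, not as outputs of the argument.

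The first step is to write a weak quasi-static perturbation as a sum over sectors, $V=\sigma_{A_1}\hat V_{A_1}+\sigma_{A_2}\hat V_{A_2}+\sigma_E\hat{\mathbf V}_E$. This can be done with the Hilbert--Schmidt projection onto the representatives $(S_z^2,S_z,S_x,S_y)$ used in the sector-projection diagnostic. The $A_1$ part acts as the identity on $\mathcal Q$, so it contributes only a global phase and no channel component.

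For the $A_2$ part, I would use the first-order correction $|D^{(1)}\rangle=-[\sigma\langle B|S_z|D\rangle/(\Omm/2)]|0\rangle$ from Remark~\ref{rem:duality}. Its squared norm, integrated over the loop, gives a leakage probability $L=O(\sigma_{A_2}^2/\Omm^2)$ into $|0\rangle\perp\mathcal Q$. The in-subspace remainder is diagonal in the dark/bright frame at second order, so it is a geometric-phase error. Because both echo lunes return the frame to the computational basis, I expect to show this remainder is a logical $Z$ rotation, so $p_Z=O(\sigma_{A_2}^2/\Omm^2)$ plus the $T_{1\rho}$ contribution.

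For the $E$ part, Proposition~\ref{prop:noise_sectors}(c) gives a surviving coherent error of $O(\sigma_E^3/\Omm^3)$, and this is the only source of $X/Y$ components. Setting $\sigma_{A_2}\sim\sigma_E\sim\sigma$ (the comparable-strength assumption) gives
\[
\eta=\frac{p_Z}{p_{XY}}\sim\frac{\sigma^2/\Omm^2}{\sigma^3/\Omm^3}=\frac{\Omm}{\sigma}.
\]

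The erasure statement then follows from the measurement model. A fraction $\eta_{\mathrm{det}}$ of the population leaked to $|0\rangle$ is flagged by photoluminescence readout, so $p_{\mathrm{era}}=\eta_{\mathrm{det}}L$. The undetected fraction $(1-\eta_{\mathrm{det}})L$ is treated as a depolarizing residual.

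The four numbered channels are then obtained by Pauli-twirling the Regime-A Choi matrix, after leakage has been separated out using the measured $L=0.484\%$ and $\eta_{\mathrm{det}}\ge97.5\%$. This is a bookkeeping step rather than a derivation, and I would state it explicitly as such.

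I expect the main obstacle to be showing that the second-order $A_2$ in-subspace error is purely $Z$ and carries no $X/Y$ part. The difficulty is that the bright direction $|B\rangle$ rotates with $\theta,\phi$ along the lune, so a phase error defined in the dark/bright frame is not obviously diagonal in the computational basis. My first attempt would be to show that the off-diagonal contributions accumulated along the loop cancel between the two $\pi$-shifted lunes, by the same Fourier-mode argument used for the $E$ echo. If that fails, I would weaken the claim to $p_{XY}^{A_2}=O(\sigma^4/\Omm^4)$, which still yields a $Z$ bias at leading order.
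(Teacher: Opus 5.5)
Your proposal follows the paper's own route. The paper likewise obtains the corollary by combining the sector orders of Proposition~\ref{prop:noise_sectors} with $|0\rangle$ detection and the conditional/no-erasure split. It reads $\eta\sim\Omm/\sigma$ off the ratio of the $A_2$ and $E$ orders, and it takes the four rates as Regime-A simulation inputs; your bookkeeping $0.975\times0.484\%\approx0.47\%$ and $0.025\times0.484\%\approx0.012\%$ reproduces them.

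The purely-$Z$ character of the $A_2$ residual that you flag is simply asserted in the paper. Your $\pi$-echo Fourier argument could not supply it, because $\langle D|S_z|D\rangle=-\cos\theta$ and $\langle B|S_z|D\rangle=-\sin\theta$ do not depend on $\phi$. In particular, the second-order dressing has a $|B\rangle$ admixture $\propto\sigma^2\sin\theta\cos\theta/\Omm^2$, so it is not diagonal in the dark/bright frame, and this admixture is unchanged by $\phi\to\phi+\pi$, so the two lunes add rather than cancel. What actually removes the adiabatic dressing at closure is that each lune begins and ends at $\theta=0$. There $S_z$ is diagonal in $\{|D\rangle,|B\rangle,|0\rangle\}$ and $|D\rangle,|B\rangle$ coincide with computational states up to phase. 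Any residual nonadiabatic (CD-mismatch) contribution is not resolved by this argument; the paper leaves it to the simulated rates.
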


The nominal noise bias ratio $\eta \equiv p_Z / p_{XY}$ therefore
lies far beyond the $\eta \sim 100$ saturation point of the XZZX
advantage~\cite{bonilla_ataides2021_xzzx,tuckett2019_tailored};
finite transverse floors are treated explicitly below as a decoder
stress axis, and missed erasures are tracked as a separate validation
limit~\cite{chang2024_imperfect_erasure}.
The resulting channel is a \emph{biased-erasure} channel:
dominant erasure errors at known locations, plus a strongly
Z-biased non-erasure residual.
This places the extracted channel in the biased-erasure-noise setting
studied for XZZX-type decoding~\cite{sahay2023_biased_erasure}.
This structure arises from the crystallographic hierarchy together
with the holonomic encoding, the modeled bath weights, and erasure
detection, rather than from engineered hardware asymmetry alone.

As a conservative baseline, treating the non-erasure error as
isotropic depolarizing
($p_{\mathrm{undet}} = p_Z + p_{\mathrm{dep}} = 0.18\%$)
and using the code-capacity thresholds
$p_{\mathrm{dep}}^{\mathrm{th}} = 10.3\%$~\cite{dennis2002_tqm}
and
$p_{\mathrm{era}}^{\mathrm{th}} = 50\%$~\cite{stace2009_prl}
yields a Pauli-equivalent effective error rate
\begin{equation}
  p_{\mathrm{eff}}^{\mathrm{(iso)}}
  = p_{\mathrm{undet}}
  + \frac{p_{\mathrm{dep}}^{\mathrm{th}}}{p_{\mathrm{era}}^{\mathrm{th}}}
    \,p_{\mathrm{era}}
  = 0.28\%,
  \label{eq:peff}
\end{equation}
requiring fit-extrapolated code distance $d = 11$
($121$ data qubits) for $p_L = 10^{-10}$, a $46\%$
code-capacity overhead estimate versus the Rabi gate
($d = 15$, $225$ qubits).
This estimate, however, discards the Z-bias entirely.
The bias-aware analysis of Sec.~\ref{sec:qec_advantage}
shows that substantially larger overhead reductions are possible
within the same code-capacity model.

\subsection{Biased-erasure decoder estimates and scheduled stress diagnostic}
\label{sec:qec_advantage}

Having verified the microscopic sector-to-channel map, we next ask how
the extracted Regime-A biased-erasure channel is seen by a
code-capacity decoder.  We perform Monte Carlo QEC simulations comparing
CSS and XZZX surface codes~\cite{bonilla_ataides2021_xzzx} on
the same toric lattice under the extracted noise model of
Sec.~\ref{sec:erasure}
(5000 trials per data point, minimum-weight perfect matching
via PyMatching; full methodology in Appendix~\ref{app:qec_montecarlo}).
This QEC analysis uses the extracted Regime-A rotating-frame NV channel
from Table~\ref{tab:errorbudget}.
The decoder estimates should be read as model-channel diagnostics of
the symmetry-generated biased-erasure mechanism, not as calibrated
thresholds for the specific flexural membrane geometry of
Fig.~\ref{fig:architecture} or for a particular HBAR transfer function.
The SATD Regime-A channel is used for the biased-erasure and XZZX
overhead estimates; the single-shot Regime-D suite is used only to
establish compact universal non-Abelian control on the same
symmetry-generated $\Lambda$ manifold.
The code-capacity layer intentionally isolates decoder response to
the extracted biased-erasure channel; below we add a scheduled
two-sector detector-model stress diagnostic for the first circuit layer.
All physical error rates are uniformly scaled by a common
factor $s \in [1,\,100]$ to probe behavior near and above
threshold.

On the same toric lattice and under the same extracted
biased-erasure noise model, CSS and XZZX
exhibit opposite finite-size scaling:
CSS logical error $p_L$ increases with code distance~$d$
for $s \gtrsim 2$--$5$, indicating near-threshold or
above-threshold operation, whereas XZZX $p_L$ decreases
monotonically
with~$d$ at every tested scale including $s = 100$,
indicating deeply sub-threshold operation
(Fig.~\ref{fig:qec_threshold},
Table~\ref{tab:qec_comparison}).
At $s = 50$ and $d = 11$, the XZZX logical error rate is
$197\times$ lower than CSS ($p_L = 0.0034$ versus $0.671$).
This advantage grows with distance, the hallmark of the
two codes residing on opposite sides of a threshold
boundary.
No threshold crossing is observed for XZZX up to $100\times$
the physical noise.
This opposite scaling behavior is qualitatively significant:
it indicates that the extracted biased-erasure noise model places the
two codes on opposite sides of the threshold boundary, so that
the holonomic gate does not merely lower the physical error rate
but changes the \emph{class} of logical noise seen by the
code-capacity decoder.

The square-code overhead estimate is obtained from the fitted
finite-distance code-capacity scaling: for extrapolated
$p_L = 10^{-10}$,
the XZZX code under the nominal extracted biased-erasure model
requires $d = 9$ ($d^2 = 81$ data qubits), versus $d = 11$
($121$ qubits) for the conservative erasure-only CSS baseline
and $d = 15$ ($225$ qubits) for the Rabi gate without erasure
conversion, a $64\%$ benchmark code-capacity overhead estimate relative
to the Rabi baseline
(Table~\ref{tab:qec_comparison}).
This $64\%$ value is therefore a nominal model-channel estimate, not
an architecture-level fault-tolerance overhead.
The nominal Regime-A channel lies at the numerically resolved
transverse floor.  Since this floor is not a group-theoretic constant,
we treat $p_{XY}$ as an explicit stress parameter
(Fig.~\ref{fig:transverse_floor_envelope} and
Appendix~\ref{app:qec_montecarlo}, Sec.~6).  At
$\eta_{\rm det}=97.5\%$, the nominal floor gives the $d=9$ / $64\%$
code-capacity saving proxy; finite transverse floors
$p_{XY}=10^{-6}$--$10^{-4}$ move the estimate to $d=11$ / $46.2\%$,
and $p_{XY}=10^{-3}$ moves it to $d=13$ / $24.9\%$.  Thus transverse
faults reduce the decoder advantage continuously rather than
invalidating it abruptly.

To connect this decoder axis to physical controls, we also construct a
perturbative map from transverse strain noise, quadrature imbalance,
SATD calibration errors, hyperfine transverse components,
transverse-field misalignment, and HBAR filtering residuals to an
effective $p_{XY}$ (Appendix~\ref{app:qec_montecarlo}, Sec.~6).  This map
defines calibration targets for keeping the channel in the $d=9$,
$d=11$, or $d=13$ decoder envelope.

\begin{figure}[t]
  \centering
  \includegraphics[width=\columnwidth]{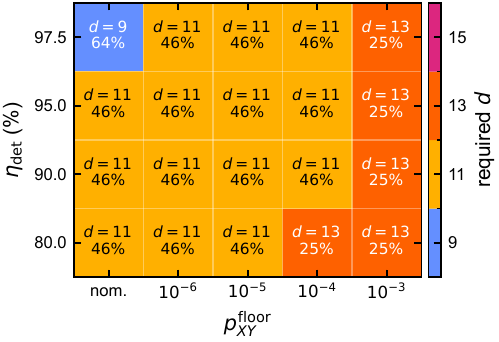}
  \caption{\label{fig:transverse_floor_envelope}%
  Transverse-floor validation envelope for the extracted Regime-A
  biased-erasure channel.
  Required XZZX distance and corresponding data-qubit saving relative
  to the $d=15$ Rabi/CSS baseline are shown as functions of an imposed
  transverse floor $p_{XY}^{\rm floor}$ and erasure-detection
  efficiency $\eta_{\rm det}$.
  The nominal extracted channel gives the $d=9$ / $64\%$ proxy, finite
  floors through $10^{-4}$ remain in the $d=11$ / $46.2\%$ envelope,
  and the $10^{-3}$ stress case moves to $d=13$ / $24.9\%$.}
\end{figure}

We next embedded the same extracted channel into a scheduled
two-sector XZZX detector-model stress diagnostic.  The diagnostic uses
repeated syndrome rounds, dynamic erasure-aware weights,
measurement/reset faults, explicit $Z$- and $X/Y$-sector matching
graphs, leakage persistence, delayed erasure flags, finite
erasure-detection efficiency, and local crosstalk.
We use ``validation envelope'' to mean stress cases that preserve,
reduce, or break the nominal $d=9$ proxy without interpreting the
nonmonotonic adversarial slices as calibrated thresholds.
In the 50k-shot nominal cleanup, the erasure-aware decoder remains at
the $10^{-3}$ logical-failure proxy scale at $d=9$,
$p_L=1.00\times10^{-3}$ with 95\% CI
$[0.759,1.318]\times10^{-3}$, while the no-flag decoder gives
$2.2\times10^{-3}$.
An explicit $p_{XY}=10^{-3}$ floor moves the proxy from $d=9$ to
$d=11$, and combined $\eta_{\rm det}=0.9$ plus finite-$p_{XY}$ stress
remains in the $d=11/d=13$ regime.
The only stress case that is not recovered by $d=15$ is
$2\times$ local crosstalk, $p_L=2.6\times10^{-3}$ with 95\% CI
$[1.77,3.81]\times10^{-3}$, identifying crosstalk as the leading
hardware-specific validation target rather than a weakness of the
extracted biased-erasure channel.

Rectangular XZZX planar codes, which exploit the extreme
Z-bias by using a short row distance $d_r$
(protecting the rare $X$-error direction) and a long column
distance $d_c$ (protecting the dominant $Z$-error direction),
show no observed failures in 5000 code-capacity trials for
all tested $d_r < d_c$ configurations (including the
$3 \times 7$ code with 21~data qubits) up to $100\times$
the physical noise ($p_L < 7.7 \times 10^{-4}$ at $95\%$~CL;
Appendix~\ref{app:qec_montecarlo}).
This zero-failure observation only bounds $p_L$ at the
$10^{-4}$ scale and is not extrapolated to $10^{-10}$.
Because these are finite-statistics code-capacity results, we present
the rectangular codes as a promising extension rather than the
primary overhead estimate.

\begin{table}[t]
\caption{\label{tab:qec_comparison}Noise channel decomposition
and code-capacity overhead estimate for four gate/code configurations.
Physical error rates are from the simulated results of this
work (Table~\ref{tab:errorbudget} and Sweep~2).
The XZZX bias-aware analysis resolves the non-erasure channel
into its Z and depolarizing components, exploiting the
symmetry-resolved noise hierarchy of
Proposition~\ref{prop:noise_sectors}.}
\begin{ruledtabular}
\begin{tabular}{lcccc}
  & \textbf{Rabi} & \textbf{Hol.} & \textbf{Hol.} & \textbf{Hol.} \\
  & \textbf{CSS} & \textbf{era-CSS} & \textbf{XZZX} &
  \textbf{XZZX} \\
  & & & \textbf{square} & \textbf{rect.}\footnote{No failures
  observed in 5000 code-capacity trials up to $100\times$ noise,
  which bounds $p_L$ only at the $10^{-4}$ scale; presented as an
  extension, not an overhead claim.} \\
  \hline
  $p_{\mathrm{era}}$ & $0$ & $0.47\%$ & $0.47\%$ & $0.47\%$ \\
  $p_Z$ & --- & --- & $0.168\%$ & $0.168\%$ \\
  $p_{\mathrm{dep}}$ & $0.63\%$ & $0.18\%$\footnote{Unresolved
  $p_{\mathrm{undet}}$; treated as isotropic depolarizing.} &
  $0.012\%$ & $0.012\%$ \\
  \hline
  Code & CSS & CSS & XZZX & XZZX \\
  Distance & $15$ & $11$ & $9$ & $3\!\times\!7$ \\
  Data qubits & $225$ & $121$ & $81$ & $21$ \\
  Saving vs.\ Rabi & --- & $46\%$ & $\mathbf{64\%}$ &
  ---\footnote{Consistent with ${\sim}91\%$ reduction in the
  finite-statistics code-capacity data, but not extrapolated or used
  as a headline overhead estimate.} \\
\end{tabular}
\end{ruledtabular}
\end{table}

\begin{figure*}[t]
  \centering
  \includegraphics[width=\textwidth]{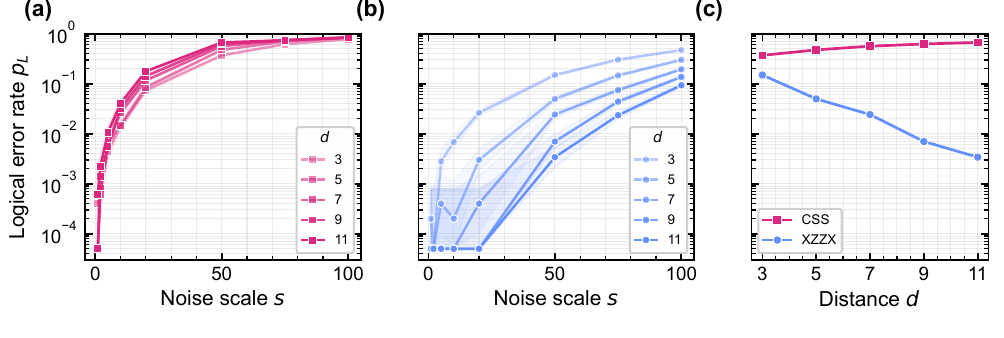}
  \caption{\label{fig:qec_threshold}%
  Decoder-level threshold separation under the extracted
  biased-erasure noise model.
  \textbf{(a)}~Logical error rate $p_L$ versus noise scale~$s$
  for CSS toric codes at distances $d = 3$--$11$.
  Lines fan upward with increasing~$d$ for $s \gtrsim 2$,
  indicating above-threshold operation: larger codes perform
  \emph{worse}.
  \textbf{(b)}~Same plot for XZZX toric codes.
  Lines fan downward with increasing~$d$ at every tested scale
  including $s = 100$, indicating deeply sub-threshold operation:
  larger codes provide exponential error suppression.
  \textbf{(c)}~$p_L$ versus $d$ at fixed $s = 50$.
  CSS (squares) increases with $d$ while XZZX (circles) decreases,
  yielding a $197\times$ advantage at $d = 11$.
  All data points: 5000 Monte Carlo trials with minimum-weight
  perfect matching (MWPM) decoding
  (Appendix~\ref{app:qec_montecarlo}).
  The conservative erasure-only baseline in Appendix~\ref{app:qec_montecarlo}
  is superseded by this bias-aware analysis.}
\end{figure*}

\section{Discussion}
\label{sec:discussion}

\subsection{Cross-platform $T_1/\tgate$ scaling}
\label{sec:scaling}

The NV spin-qubit ($T_1 = \SI{1}{\milli\second}$, $\tgate =
\SI{1.833}{\micro\second}$, $T_1/\tgate \approx 546$) validates the
full SATD architecture, because the required lower-manifold correction
is supplied by the resonant DQ strain channel.  The SiV orbital-qubit benchmark
($T_1 = \SI{143}{\nano\second}$, $\tgate = \SI{46.2}{\nano\second}$,
$T_1/\tgate \approx 3.1$) is the Regime~C lower-doublet-free
$\Lambda$-control benchmark.

Within their respective control assumptions, both platforms follow the
scaling
$\varepsilon_{\mathrm{dissipative}} \propto \tgate/T_1$
from the $T_1/T_{1\rho}$ Lindbladian error channel
(Sec.~\ref{sec:errorbudget}),
whereby the fidelity ceiling is determined entirely by the
$T_1/\tgate$ ratio, not by the absolute timescale.
For Regime~A this scaling appears after NGQC + SATD remove the
non-adiabatic and singularity errors; for Regime~C it appears after
the $\Lambda$-leg-only gate time is lengthened enough to recover
adiabaticity.

Within a fixed control family and selected commensurability branch,
$\tgate$ scales approximately as $1/\Omm$, and
$\varepsilon_{\mathrm{process}} \approx \varepsilon_{\mathrm{coh}} +
c\,\tgate/T_1$ (Eq.~\eqref{eq:error_leg}), where
$\varepsilon_{\mathrm{coh}} \approx 0.003\%$ is the coherent floor.
The mechanical Rabi frequency entering this ratio scales
linearly with the strain coupling constant
($\Omm = g\,\varepsilon_0$) and thus with the material-specific
parameter $g \in \{h_{26}^{\mathrm{NV}},\,f_\perp^{\mathrm{SiV}}\}$.
Any platform satisfying the Theorem~\ref{thm:synrot} selection rule
gains immediate access to this scaling via synthetic rotation,
provided that at least two degenerate strain modes are simultaneously
driven in quadrature.

The SiV single-shot benchmark realizes the second route: a shortcut
construction using only the two orbit-strain $\Lambda$ legs and scalar
detuning.
At $\Omega_{\mathrm{peak}}=\SI{300}{\mega\hertz}$, it reaches
$F_{\mathrm{eff}}>99.5\%$ across the four-gate suite at
$\tgate=\SI{6.667}{\nano\second}$, with the remaining error set by
orbital-$T_1$ depolarization during auxiliary occupation.
Thus SiV is not the preferred platform for the biased-erasure SATD
channel, but it is a natural platform for fast non-Abelian
single-shot control.

\subsection{$C_{3v}$ platform-selection rule}
\label{sec:c3v_platform_rule}

The same $C_{3v}$ spin-strain tensor structure also suggests a
platform-selection rule.  At fixed target $\Omm$, the off-resonant DQ
Stark scale obeys
$\delta_{\mathrm{AC}}\propto
|\honesix/\htwosix|^2\Omm^2/D$
[Eq.~\eqref{eq:c3v_stark_scaling}].  For the assumed 3C-SiC neutral
divacancy parameters, this reduces the Stark scale from
\SI{20.718}{\kilo\hertz} in NV to \SI{0.521}{\kilo\hertz}, a factor
of $39.76$.  In a controlled Regime-A comparison using the same
SATD echo-lune trajectory, surface-bath traces, $T_1$, $T_{1\rho}$,
  and erasure bookkeeping as the NV channel benchmark, uncompensated 3C-SiC
lies essentially on the compensated baseline, while uncompensated NV
incurs a visible Stark penalty.  This supports the design rule that
$C_{3v}$ hosts with small $|\honesix/\htwosix|$ can preserve the
$\Lambda$-sector symmetry-to-channel mechanism while suppressing the
parasitic DQ Stark channel before active compensation.

This comparison is a controlled Hamiltonian comparison, not a measured
3C-SiC device forecast.  A deployable claim requires
platform-specific surface noise, relaxation, strain limits, mechanical
mode shapes, optomechanical coupling, and auxiliary-state detection.

\subsection{Toward universal SU(2) control}
\label{sec:universality}

The merged architecture separates universal control from QEC channel
engineering.  The SATD echo-lune compiler is optimized for the
Regime-A biased-erasure channel, while the single-shot compiler gives
the simpler universal layer.  In the latter, the rotation axis
$\mathbf n$ is set by the relative amplitude and phase of the two
$\Lambda$ legs, and the rotation angle
$\gamma=\pi(1+\sin\alpha)$ is set by the detuning ratio.  Thus a
single cyclic bright-state pulse implements
$U_L(\mathbf n,\gamma)\doteq
\exp[-i\gamma\,\mathbf n\cdot\boldsymbol\sigma/2]$.
Changing $\mathbf n$ changes the bright-state projector, so
nonparallel pulses generate noncommuting holonomies natively rather
than through a compiled sequence of adiabatic loops.

Optical~\cite{sekiguchi2017} and
microwave~\cite{zu2014_holonomic_mw} holonomic-gate demonstrations on
NV centers have established the viability of geometric-phase control;
the present architecture replaces the photonic/MW drive with the
symmetry-generated strain $\Lambda$ manifold and supplies two
compilers on that manifold.

This universal $\mathrm{SU}(2)$ control is currently restricted to a
single subsystem per mode pair.
Multi-qubit entangling gates require phonon-mediated coupling
between distinct defect centers, which is beyond the present
single-qubit framework.
The same bright-state projector also gives a compact entangling
extension at the effective $\Lambda$ level.  Consider two
$\Lambda$ manifolds coupled to one detuned mechanical mode through
their logical bright projectors,
\begin{equation}
K_I(t)=
\left(f_1P_{b1}+f_2P_{b2}\right)
\left(ae^{-i2\pi\delta t}+a^\dagger e^{i2\pi\delta t}\right),
\end{equation}
where $K=H/h$.  After one closed bus loop, $T=1/\delta$, the
displacement vanishes and the remaining geometric phase is
\begin{equation}
U(T)=\exp\!\left[
i\frac{2\pi}{\delta^2}
\left(f_1P_{b1}+f_2P_{b2}\right)^2
\right].
\end{equation}
Since $P_b^2=P_b$, the nonlocal phase is
\begin{equation}
U_{\rm ent}=
\exp\!\left[
i\frac{4\pi f_1f_2}{\delta^2}P_{b1}P_{b2}
\right].
\end{equation}
For $f_1=f_2=\delta/2$, this gives a $\pi$ phase on the joint bright
state.  Choosing the bright axes along logical $Z$ makes the gate
locally equivalent to CZ, up to single-qubit $Z$ phases.
This is an effective projector-force identity, not yet a microscopic
actuator-level theorem.  The corresponding validation is therefore
treated as Regime~F and reported as an architecture diagnostic.

Regime~F tests this extension in a two-qutrit $\Lambda$ model coupled
to a truncated oscillator.  At $f=\SI{2.0}{\mega\hertz}$,
$\delta=\SI{4.0}{\mega\hertz}$, $Q=10^7$,
$T=\SI{0.25}{\micro\second}$, and $n_{\rm Fock}=16$, the extracted
CZ-class channel gives conditional $\Favg=99.8096\%$,
$p_{\mathrm{era}}=0.04874\%$, $p_Z=0.23775\%$,
$p_{ZZ}=0.000229\%$, and no measurable $XY$ component.  The result is
bias-preserving and biased-erasure compatible, but not
erasure-dominated, since $p_Z/p_{\mathrm{era}}=4.88$.  A repeated-syndrome
XZZX proxy using the combined one- and two-qubit channels gives no
logical failures at the base scale for distances $3,5,7$ over
$400$ shots; full data and cross-checks are in
Appendix~\ref{app:2q_projector_extension}.

\subsection{Symmetry-to-decoder design principle}
\label{sec:symmetry_ft}

The results of this work reveal a design principle:
crystallographic symmetry can be promoted from a
control-enabling constraint to a symmetry-to-decoder design principle
whose influence propagates from the microscopic Hamiltonian
to code-capacity logical-code overhead estimates.

The chain has four links, each shaped by the
two-dimensional irreducible representation $\Gamma_E$:
\begin{enumerate}
  \item \textbf{Gate mechanism.}
  The dot-product coupling mandated by $\Gamma_E \otimes
  \Gamma_E \supset \Gamma_{A_1}$
  (Theorem~\ref{thm:synrot}) enables synthetic rotation
  and thereby holonomic control.
  \item \textbf{Noise classification.}
  The $\Lambda$-system bright manifold created by the same
  coupling produces a parity filter that classifies noise
  errors by irrep: $A_1 \to 0$, $A_2 \to O(\sigma^2)$,
  $E \to O(\sigma^3)$
  (Proposition~\ref{prop:noise_sectors}).
  \item \textbf{Conditional biased-erasure channel.}
  The parity-filter duality
  (Remark~\ref{rem:duality}) directs $A_2$-sector
  perturbations toward the heraldable auxiliary level, while
  the $\pi$-echo suppresses $E$-sector errors.
  For weak noise with comparable irrep-sector bath weights and
  efficient $|0\rangle$ detection, this yields the biased-erasure
  channel
  (Corollary~\ref{cor:biased_erasure}).
  \item \textbf{Decoder estimates and stress diagnostic.}
  In code-capacity simulations of the nominal extracted channel, the
  resulting biased-erasure noise
  model places CSS and XZZX codes on opposite sides of the threshold
  boundary, producing a $64\%$ fit-extrapolated benchmark overhead
  estimate whose finite-transverse-floor and erasure-detection
  sensitivity is quantified in Fig.~\ref{fig:transverse_floor_envelope}
  and Appendix~\ref{app:qec_montecarlo}, Sec.~6; a scheduled two-sector detector-model stress
  diagnostic then tests the first circuit layer and identifies strong
  local crosstalk as the leading architecture-specific validation target
  (Sec.~\ref{sec:qec_advantage}).
\end{enumerate}
A common $\Gamma_E$ symmetry provides the control structure
(link~1) and the noise hierarchy (links~2--3); the full
estimated overhead advantage then emerges from co-design of encoding,
leakage-detection mechanism, holonomic control, and decoder
(link~4), and its numerical value remains tied to the simulated
channel parameters and validation level used here.
In this chain, links~2--4 refer specifically to the SATD Regime-A NV
channel.  Regime~D establishes universal control on the same
$\Lambda$ manifold, but it is not the source of the XZZX overhead
number.
This perspective suggests a broader principle: for
solid-state platforms with sufficiently rich point-group
symmetry, co-design across the encoding, gate, and decoder
layers may extract decoder advantages that are
inaccessible to any single layer optimized in isolation; establishing
the quantitative overhead in a hardware-calibrated architecture-level
detector-error model remains future work.

An important aspect of this chain is the role of encoding.
Under Rabi driving in the $\{|0\rangle, |{+1}\rangle\}$
subspace, $T_1$ relaxation is an in-subspace error
(non-heralded), whereas in the holonomic
$\{|{-1}\rangle, |{+1}\rangle\}$ encoding, the same $T_1$
decay populates $|0\rangle$ outside the computational
subspace, converting it to a heralded erasure.
Combined with the parity-filter noise shaping and the
$\pi$-echo, the joint choice of encoding and gate protocol
reshapes the effective noise class on the \emph{same physical
hardware} from a conservatively treated unstructured
dynamical-gate baseline to an extracted
biased-erasure channel.  For the nominal extracted channel this gives
the $64\%$ code-capacity overhead estimate of
Table~\ref{tab:qec_comparison}; finite bit-flip floors and imperfect
erasure detection reduce this advantage as quantified in the
supplement.

\subsection{Beyond defect centers}
\label{sec:beyond_defects}

The defect platforms considered here are experimentally mature
realizations of a more general symmetry criterion.  The
synthetic-rotation construction requires neither a vacancy nor an
impurity in an essential way; it requires a strain-active
$\Lambda$ manifold whose projected strain and transition operators
share a two-dimensional irrep of the effective device symmetry.
For point defects this is the crystallographic defect point group,
whereas for acceptor, valley-orbital, hole-spin, or gate-defined
manifolds it is the little group left after crystal symmetry,
confinement, device geometry, static fields, and strain-mode symmetry
are all imposed.  The generalization is conditional: platforms with
only one-dimensional irreps, strong symmetry breaking, or no
phase-locked degenerate strain-mode pair may still allow engineered
two-axis mechanical control, but they do not inherit the
symmetry-protected circular-strain selection rule.

\subsection{Minimal experimental tests}
\label{sec:minimal_tests}

A decisive first demonstration need not implement the full
fault-tolerance stack.  The minimal experimental program is:
\begin{enumerate}
  \item demonstrate two phase-locked mechanical modes with calibrated
  quadrature phase at the defect site;
  \item measure circular-strain selectivity between the two
  $\Lambda$ legs, including handedness reversal;
  \item verify that symmetry-preserving perturbations primarily
  renormalize the scalar coupling rather than mixing handedness;
  \item implement one holonomic loop and compare it with a linearly
  driven mechanical baseline;
  \item detect leakage into the auxiliary state as a heralded erasure
  channel.
\end{enumerate}
Passing these tests would establish the central quantum-information
claim:
crystallographic symmetry has generated not only a control field, but
a $\Lambda$-control resource with observable gate and error-channel
consequences.

\subsection{Experimental outlook}
\label{sec:outlook}

The NV membrane variant requires a $(100)$-oriented diamond plate
($L \sim \SI{10}{\micro\meter}$, $h_m \sim \SI{200}{\nano\meter}$),
two piezoelectric transducers for quadrature driving,
and DC electrodes for electrostatic degeneracy
tuning~\cite{barfuss2019_prb}; all components have been
individually demonstrated~\cite{burek2014_nanoletter,barfuss2015_natphys}.
The GHz HBAR variant retains the same quadrature topology
(Appendix~\ref{app:ghz}), but its finite acoustic bandwidth means that
the fast Regime-A envelope cannot be assumed without predistortion,
reduced effective~$Q$, or a slower HBAR-specific waveform.
The SiV orbital platform requires
$\Omm \geq \SI{300}{\mega\hertz}$ at millikelvin temperatures,
which is within reach of recent
SiV--resonator demonstrations~\cite{cornell2025}.
In situ calibration of $\Omm$ is achievable via strain-modulation
spectroscopy~\cite{macquarrie2013_prl,barfuss2015_natphys}, with
the ideal bright-state spacing $\Delta T = 1/\Omm$ providing a secondary
calibration signature.
A step-by-step experimental protocol is given in
Appendix~\ref{app:protocol}.
The 3C-SiC neutral-divacancy comparison in
Sec.~\ref{sec:c3v_platform_rule} identifies a complementary
platform-design direction within the same $C_{3v}$ class.

\section{Conclusion}
\label{sec:conclusion}

We have introduced crystallographic symmetry as a quantum-information
resource for phononic solid-state processors.
The central result is a shared-irrep selection rule: when the
projected strain tensor and transition operators of a strain-active
$\Lambda$ manifold share a multiplicity-one two-dimensional irrep,
symmetry fixes the linear strain interaction to a scalar dot product.
Two degenerate mechanical modes driven in quadrature then synthesize a
complex circular $\Lambda$ drive.  Defect centers with $C_{3v}$ and
$D_{3d}$ symmetry provide experimentally developed realizations, but
the criterion itself is representation-theoretic and applies to any
effective device manifold satisfying the same shared-irrep condition.

For NV centers, the full composite NGQC + SATD protocol is modeled in
Regime~A with a resonant double-quantum strain tone realizing the
exact counter-diabatic operators.
Open-system simulations give a Regime~A conditional
$\Favg = 99.88\%$ at $\tgate = \SI{1.833}{\micro\second}$, with
$\Favg^{\mathrm{eff}}=99.40\%$ after leakage is counted as error.
Regime~B gives the corresponding GHz-HBAR projection at
resonant-drive Rabi frequencies and treats envelope filtering as an
implementation constraint rather than as part of the Regime-A erasure
extraction.
For SiV centers, Regime~C gives a conservative $\Lambda$-leg-only
orbital-control benchmark with $\Favg = 96.32\%$ at
$\SI{46.2}{\nano\second}$.
On the same symmetry-generated $\Lambda$ manifold, Regime~D validates
a single-shot bright-state compiler for compact universal
non-Abelian $\mathrm{SU}(2)$ control, with
$F_{\mathrm{eff}}\simeq99.86\%$ across a representative gate suite at
$\tgate=\SI{0.9009}{\micro\second}$.
Regime~E applies that compiler directly to the SiV orbital
$\Lambda$ manifold using only orbit-strain $\Lambda$ legs and scalar
detuning, reaching
$F_{\mathrm{eff}}=99.52\%$--$99.67\%$ across the four-gate suite at
$\tgate=\SI{6.667}{\nano\second}$ without assuming a lower-doublet
SATD actuator.

The same symmetry that enables synthetic rotation also organizes the
noise response.
In the Regime-A NV channel benchmark, the $\Gamma_E$-enabled $\Lambda$ manifold
produces a perturbative hierarchy in which $A_1$ perturbations are
common-mode, $A_2$ perturbations preferentially generate
auxiliary-state leakage, and $E$-sector bit-flip errors are
echo-suppressed.
Sector-injection diagnostics verify this map directly:
$A_2$ perturbations point toward $|0\rangle$, whereas $E$ perturbations
point toward the logical bright state and require the two-lune echo.
Combined with erasure detection, this yields an extracted
biased-erasure channel.
Code-capacity simulations of the nominal extracted channel give a
substantial XZZX overhead reduction relative to an unstructured
dynamical-gate baseline, with supplemental sensitivity tests showing
how finite $p_{XY}$ floors and imperfect erasure detection reduce
that estimate.
This overhead estimate is tied to the SATD Regime-A channel; the
single-shot suite is the universal-gate validation rather than a
second architecture-level QEC claim.
Likewise, the membrane and HBAR calculations are implementation
routes for the same control Hamiltonian: the membrane topology does
not by itself validate the erasure advantage, and HBAR-specific
residuals enter through the explicit transverse-floor and missed-flag
stress envelopes.

These results support a broader design principle: crystallographic
symmetry can be used not only to enable control, but to co-design the
encoding, gate, noise bias, and decoder.
The matched-environment 3C-SiC comparison turns the same principle into
a platform-selection rule: smaller $|\honesix/\htwosix|$ within the
$C_{3v}$ class suppresses the parasitic DQ Stark burden while
preserving the $\Lambda$-sector mechanism.
Finally, the decoder analysis now spans both the nominal code-capacity
model and a scheduled two-sector detector-model stress diagnostic,
complementing broader erasure-qubit architecture
analyses~\cite{gu2025_erasure_arch}.
The former gives the $64\%$ model-channel saving; the latter shows
that the nominal biased-erasure advantage survives the first circuit
layer and isolates strong local crosstalk as the dominant
hardware-specific validation target for a calibrated
architecture-level detector-error model.
The bright-state-projector structure also points toward
phonon-bus entangling gates; the effective-model analysis in the
Supplement suggests a CZ-class phase-biased extension, but a
hardware-level two-qubit gate requires a separate
$\Lambda$-actuator validation.

\section{Author contributions}

E.M.M. conceived the project, developed the theoretical model, performed the numerical simulations, and wrote the manuscript. K.A. reviewed the manuscript, provided logistical support, and supervised the project. All authors contributed to the interpretation of the results and the finalization of the manuscript.

\begin{acknowledgments}
We thank Eikichi Kimura and Masato Koga for insightful discussions and valuable feedback that helped improve this work. This research was supported by JSPS KAKENHI Grant Number 24K21730.
\end{acknowledgments}

\section*{Data availability}

All simulation code (Python modules), raw sweep data (JSON), and
figure-generation scripts are deposited in a public repository
at \url{https://github.com/E-zClap/PhononQ}.
The repository includes software requirements in
\texttt{requirements.txt} and execution instructions for reproducing
the ten core parametric sweeps (Sweeps~1--10).

\clearpage
\appendix
\section*{Supplemental Material}

\setcounter{section}{0}
\setcounter{figure}{0}
\setcounter{table}{0}
\setcounter{equation}{0}
\renewcommand{\thefigure}{S\arabic{figure}}
\renewcommand{\thetable}{S\arabic{table}}
\renewcommand{\theHsection}{SM.\arabic{section}}
\renewcommand{\theHfigure}{SM.\arabic{figure}}
\renewcommand{\theHtable}{SM.\arabic{table}}
\renewcommand{\theHequation}{SM.\thesection.\arabic{equation}}

\section{Biharmonic mode analysis}
\label{app:modes}

The flexural dynamics of a thin isotropic plate are governed by the
Kirchhoff--Love equation $\Dbend\,\nabla^4 w + \rho h_m\,\ddot{w} = 0$
with flexural rigidity
\begin{equation}
  \Dbend = \frac{Y\,h_m^3}{12(1-\nu^2)}
  \approx \SI{7.08e-10}{\joule},
\end{equation}
where all parameters are as defined in Sec.~IV\,C of the main text
and main-text Fig.~1.
Simply-supported eigenmodes
$w_{nm} \propto \sin(n\pi x/L)\sin(m\pi y/L)$ have eigenfrequencies
$f_{nm} = (\pi/2L^2)\sqrt{\Dbend/\rho h_m}\,(n^2+m^2)$.
For $(1{,}2)/(2{,}1)$: $n^2+m^2 = 5 \Rightarrow f_0 = \SI{78.8}{\mega\hertz}$.

The lumped parameters are:
$m_{\mathrm{eff}} = \rho L^2 h_m/4 = \SI{17.6}{\pico\gram}$,
$\kmech^{\mathrm{SS}} = m_{\mathrm{eff}}\,\omega_0^2 = \SI{4310}{\newton\per\meter}$,
$\kmech^{\mathrm{clamp}} \approx \SI{9530}{\newton\per\meter}$
(stiffness ratio $2.21$ from Leissa's tabulated eigenvalues).

\section{Strain topology at candidate NV sites}
\label{app:topology}

At the off-center site $(L/4, L/4)$:
\begin{align}
  \frac{\partial w_{12}}{\partial x}\bigg|_{L/4}
  &= +\frac{\pi}{\sqrt{2}\,L}, \qquad
  \frac{\partial w_{12}}{\partial y}\bigg|_{L/4} = 0, \\
  \frac{\partial w_{21}}{\partial x}\bigg|_{L/4}
  &= 0, \qquad
  \frac{\partial w_{21}}{\partial y}\bigg|_{L/4}
  = +\frac{\pi}{\sqrt{2}\,L}.
\end{align}
Each mode selects one orthogonal strain-channel direction in the
topology model, with equal magnitude.
The in-plane correction introduces a minor ellipticity
$\eta = (3/2)(|\htwofive|/|\htwosix|)(\pi|z|/\kappa_s L) \approx 4.3\%$,
contributing $\mathcal{O}(\eta^2) \lesssim 2\times10^{-3}$ to the
fidelity error.
This thin-plate result is used as a strain-topology guide; the actual
local Rabi rate must be calibrated experimentally or supplied by the
resonant HBAR route.

\section{Electrostatic tuning law}
\label{app:tuning}

For an asymmetric membrane with $L_y = L_0(1+\delta)$, Taylor
expansion of the biharmonic eigenvalues gives
\begin{equation}
  \frac{\Delta f}{f_0} = 0.6\,\delta.
  \label{eq:splitting_deriv}
\end{equation}
For $\delta = 1\%$: $\Delta f = \SI{0.47}{\mega\hertz}$.

The electrostatic spring-softening stiffness is
$k_{\mathrm{el}} = -\varepsilon_0 A_{\mathrm{eff}} V^2/d^3$
(here $\varepsilon_0$ denotes the vacuum permittivity).
The required voltage evaluates to $V_{\mathrm{req}} = \SI{21.6}{\volt}$
for a full-area electrode, safely below
$V_{\mathrm{bd}} = \SI{35}{\volt}$, a conservative design voltage
rather than a geometry-calibrated breakdown threshold; it is informed
by nanoscale vacuum-gap breakdown and MEMS field-emission
studies~\cite{wang2022_vacuum_breakdown,michalas2012_mems_field_emission}.
A $50\%$-coverage electrode requires $\SI{30.5}{\volt}$, eliminating
safety margin; full-area coverage is therefore mandatory.
For gigahertz high-overtone bulk acoustic resonator (GHz-HBAR)
devices (Appendix~\ref{app:ghz}), residual
frequency mismatch between the mechanical overtone and the NV
zero-field splitting $D$ can be fine-tuned in situ via the known
temperature dependence $dD/dT \approx -\SI{74}{\kilo\hertz\per\kelvin}$,
providing sub-kHz resolution at millikelvin
stability, complementary to the coarse RIE adjustment of the
HBAR thickness.

\section{Derivative-removal-by-adiabatic-gate (DRAG) and superadiabatic transitionless driving (SATD) Hamiltonian derivation}
\label{app:drag}

The adiabatic dark and bright eigenstates of the RWA Hamiltonian are
\begin{align}
  |D\rangle &= \cos(\theta/2)|{-1}\rangle + \sin(\theta/2)e^{i\phi}|{+1}\rangle, \\
  |B\rangle &= \sin(\theta/2)|{-1}\rangle - \cos(\theta/2)e^{i\phi}|{+1}\rangle.
\end{align}
The non-adiabatic coupling $\langle B|\partial_t|D\rangle = -\dot{\theta}/2$
drives leakage.
The exact counter-diabatic Hamiltonian canceling this coupling
is~\cite{berry2009_transitionless}
\begin{equation}
  H_{\mathrm{CD}} = i\lambda\,\frac{\dot{\theta}}{2}
  (|D\rangle\langle B| - |B\rangle\langle D|),
\end{equation}
which in the bare basis gives the counter-diabatic operator used in
Eqs.~(4) and~(6) of the main text.
For $\lambda = 1$ (exact SATD), this perfectly cancels leakage at all
speeds; for the heuristic DRAG variant, $\lambda$ is scanned
empirically.

\section{Hardware decomposition of the SATD correction}
\label{app:satd_hardware}

The counter-diabatic Hamiltonian used in the simulations is not a
third abstract control axis.  For the NV implementation it is
realized by a resonant double-quantum (DQ) strain tone on the
$|{-1}\rangle\leftrightarrow|{+1}\rangle$ transition.
This section gives the laboratory decomposition and separates this
resonant SATD use of the DQ tensor from the off-resonant DQ Stark
shift derived in Appendix~\ref{app:dqstark}.

In angular-frequency units the SATD term is
\begin{equation}
  \frac{H_{\mathrm{CD}}}{\hbar}
  =
  -\frac{\dot{\theta}}{2}
  \left[
    \sin\phi\,\mathrm{Op}_{\mathrm{re}}
    + \cos\phi\,\mathrm{Op}_{\mathrm{im}}
  \right],
  \label{eq:cd_hw_start}
\end{equation}
with
$\mathrm{Op}_{\mathrm{re}} =
|{-1}\rangle\langle{+1}| + |{+1}\rangle\langle{-1}|$
and
$\mathrm{Op}_{\mathrm{im}} =
i(|{+1}\rangle\langle{-1}|-|{-1}\rangle\langle{+1}|)$.
In the rotating frame of the resonant DQ transition, a phase-controlled
DQ strain tone has the form
\begin{equation}
  \frac{H_{\mathrm{DQ}}^{\mathrm{res}}}{\hbar}
  =
  \frac{1}{2}
  \left[
    \Omega_{\mathrm{re}}(t)\,\mathrm{Op}_{\mathrm{re}}
    + \Omega_{\mathrm{im}}(t)\,\mathrm{Op}_{\mathrm{im}}
  \right],
  \label{eq:dq_res_operator}
\end{equation}
where $\Omega_{\mathrm{re}}$ and $\Omega_{\mathrm{im}}$ are angular
Rabi rates set by the two quadratures of the resonant DQ drive.
Exact Hamiltonian matching to Eq.~\eqref{eq:cd_hw_start} is therefore
obtained with
\begin{equation}
  \Omega_{\mathrm{re}}(t)=-\dot{\theta}(t)\sin\phi(t),
  \qquad
  \Omega_{\mathrm{im}}(t)=-\dot{\theta}(t)\cos\phi(t).
  \label{eq:dq_satd_quadratures}
\end{equation}
Equivalently, the lower-manifold matrix element is
\begin{equation}
  \frac{H_{\mathrm{CD}}}{\hbar}
  =
  \frac{i\dot{\theta}}{2}e^{i\phi}
  |{-1}\rangle\langle{+1}| + \mathrm{h.c.}
  \label{eq:cd_complex_element}
\end{equation}
Thus a resonant DQ drive with angular Rabi frequency
\begin{equation}
  \Omega_{\mathrm{DQ,CD}}(t)=|\dot{\theta}(t)|,
  \qquad
  \varphi_{\mathrm{DQ}}(t)=
  \phi(t)+\frac{\pi}{2}
  +\pi\,\Theta[-\dot{\theta}(t)]
  \label{eq:dq_satd_controls}
\end{equation}
implements exactly the two operators included in the numerical
Hamiltonian.  In ordinary frequency units,
$\Omega_{\mathrm{DQ,CD}}/(2\pi)=|\dot{\theta}|/(2\pi)$.  Equivalently,
the coefficient multiplying each Pauli-like operator in
Eq.~\eqref{eq:dq_res_operator} is smaller by a factor of two; this is
the origin of the two common quoted cycle-frequency scales for the same
drive.

For NV$^-$ the carrier frequency is the energy splitting within the
computational doublet,
\begin{equation}
  f_{\mathrm{DQ}} = 2\gamma_e B_z
  \simeq \SI{280}{\mega\hertz}
  \quad (B_z=\SI{50}{\gauss}),
  \label{eq:dq_satd_carrier}
\end{equation}
not the single-quantum frequencies
$D_\pm = D\pm\gamma_e B_z$ used for the two $\Lambda$ legs.
The same $C_{3v}$ DQ spin-strain tensor
$\honesix(S_x^2-S_y^2,\,2\{S_x,S_y\})$ therefore supplies the
SATD control when driven resonantly at~$f_{\mathrm{DQ}}$.
For the optimal NV gate time
$\tgate=\SI{1.833}{\micro\second}$ and the two-loop trajectory
$\theta(t)=\pi\sin(\pi t/\tau)$ with $\tau=\tgate/2$,
\begin{equation}
  \max|\dot{\theta}| = \frac{2\pi^2}{\tgate},
  \qquad
  \frac{\Omega_{\mathrm{DQ,CD}}^{\max}}{2\pi}
  = \frac{\pi}{\tgate}
  \simeq \SI{1.71}{\mega\hertz}.
  \label{eq:dq_satd_amplitude}
\end{equation}
Thus the full resonant DQ Rabi rate is $\SI{1.71}{\mega\hertz}$,
while the Hamiltonian coefficient in Eq.~\eqref{eq:dq_res_operator}
is $\Omega_{\mathrm{DQ,CD}}^{\max}/(4\pi)\simeq
\SI{0.86}{\mega\hertz}$ in cycle-frequency units.
Using $\honesix=\SI{19.66}{\giga\hertz\per\strain}$, the corresponding
peak additional DQ strain required for the SATD tone is
\begin{equation}
  \varepsilon_{\mathrm{DQ,CD}}^{\max}
  =
  \frac{\Omega_{\mathrm{DQ,CD}}^{\max}/2\pi}{|\honesix|}
  \simeq 8.7\times10^{-5},
  \label{eq:dq_satd_strain}
\end{equation}
or $4.4\times10^{-5}$ if one quotes the half-rate Hamiltonian
coefficient rather than the full DQ Rabi rate.  Both conventions are
well below the main single-quantum strain amplitude used for the NV
benchmark.

For the assumed 3C-SiC neutral-divacancy parameters used in the
matched $C_{3v}$ comparison, the same SATD condition requires
\begin{equation}
  \varepsilon_{\mathrm{DQ,CD}}^{\max}(\mathrm{3C\mbox{-}SiC})
  =
  \frac{\pi/\tgate}{|h_{16}^{\mathrm{3C\mbox{-}SiC}}|}
  \simeq 1.27\times10^{-3}.
\end{equation}
Thus 3C-SiC suppresses the parasitic off-resonant Stark channel but
requires a larger intentional resonant SATD strain than NV.  This is
still below a percent-level strain scale in the parameter check, but
it requires a dedicated SiC mechanical actuator design.

The bandwidth requirement is set by the SATD envelope, not by
mechanical ring-down.  The smooth part of the DQ envelope varies on
the sub-loop timescale $\tau=\tgate/2$, giving an envelope bandwidth
of order $1/\tau\simeq\SI{1.1}{\mega\hertz}$.  The programmed phase
step at the South Pole occurs where $\dot{\theta}=0$, so it does not
require a large-amplitude broadband burst; experimentally it can be
implemented as a phase update through the zero of the DQ envelope or
by standard RF pre-distortion of the transducer response.

This resonant DQ SATD tone is distinct from the off-resonant DQ Stark
shift.  The Stark term arises because the two single-quantum
$\Lambda$-leg drives at $D_\pm$ also couple to the DQ tensor but are
detuned from the $|{-1}\rangle\leftrightarrow|{+1}\rangle$
transition by approximately~$D$.  Its effect is therefore
second-order and dispersive.  In contrast, the SATD tone is a
deliberate first-order resonant drive at~$2\gamma_e B_z$ with a
prescribed amplitude and phase; its Hamiltonian is exactly the
operator pair used in the simulations.

For SiV$^-$ we do not assume such a lower-doublet SATD control in the
revised benchmark.  The orbit-strain selection rule establishes the
two $\Lambda$ legs between the lower and upper orbital branches, but
the manuscript does not identify a measured strain matrix element
that resonantly couples the two lower-doublet computational states
with the phase and bandwidth required by Eq.~\eqref{eq:dq_satd_controls}.
We therefore use the conservative SiV benchmark with
$\alpha_{\mathrm{CD}}=0$, keeping the same open-system model but
removing the lower-manifold SATD Hamiltonian.  The best millikelvin
benchmark points are
\begin{center}
\begin{tabular}{cccc}
  $\Omm$ (MHz) & $\tgate$ (ns) & $\Favg$ & Leakage \\
  \hline
  100 & 99.23 & 91.18\% & 7.96\% \\
  300 & 46.24 & 96.32\% & 3.85\% \\
  500 & 123.31 & 90.14\% & 8.64\%
\end{tabular}
\end{center}
The previously quoted SiV SATD numbers should therefore be read only
as an ideal lower-manifold-control upper bound unless a physical
SiV lower-doublet SATD channel is supplied.

\section{Decoherence channel inventory}
\label{app:decoherence}

Table~\ref{tab:decoherence} summarizes the complete decoherence
environment.

\begin{table*}[!t]
\caption{\label{tab:decoherence}Decoherence channel inventory.
Channels included in the model (top),
and channels excluded with justification (bottom).}
\begin{ruledtabular}
\begin{tabular}{llll}
  Channel & Mechanism & Status & Note \\
  \hline
  Surface spin hopping & ME-CCE KMC & Included & Dominant for shallow NV \\
  Dipolar field fluctuation & $B_z^{\mathrm{surf}}$ sum & Included & Direct coupling to qubit \\
  Quasi-static $T_2^*$ & Inter-trajectory DC offsets & Included &
    Restored by removing variance normalization \\
  $T_1$ relaxation & Lindbladian collapse & Included & $T_1 = \SI{1}{\milli\second}$ \\
  $T_{1\rho}$ depolarization & Lindbladian collapse & Included & $T_{1\rho} = \SI{500}{\micro\second}$; reduced by enrichment \\
  AC Stark shift ($\Sz$) & Double-quantum tensor & Included &
    Suppressed by NGQC (Theorem~2 of the main text) \\
  \hline
  ${}^{13}$C nuclear bath & Flip-flop dynamics & Excluded & Slow ($\tgate \ll T_2^{\mathrm{C13}}$) \\
  Charge noise & Electric field & Excluded & $< \SI{1}{\kilo\hertz}$ dephasing \\
  Mechanical damping & $1/Q$ loss & Excluded & $Q > 10^4$; stable over $\tgate$ \\
\end{tabular}
\end{ruledtabular}
\end{table*}

\section{Symmetry protection and noise-sector proof}
\label{app:noise_sectors}

This appendix proves the two formal claims used in the main text and
then records the sector-injection validation of the resulting
sector-to-channel map.
Section G.1 proves Theorem~1: the $\Gamma_E$ strain-defect coupling is
a unique scalar invariant, and non-scalar bilinears require explicit
symmetry-breaking spurions. Section G.2 proves Proposition~1 and
Corollary~1: once the protected coupling generates the ideal
$\Lambda$ manifold, weak perturbations are filtered into distinct
$A_1$, $A_2$, and $E$ error channels. Sections G.4--G.5 give the
numerical sector-injection and compact robustness diagnostics.

\subsection*{1. Group-theoretic protection of the synthetic-rotation selection rule}

Let $\boldsymbol{\epsilon}=(\epsilon_1,\epsilon_2)$ and
$\mathbf{O}=(O_1,O_2)$ transform under the same real two-dimensional
irreducible representation $\Gamma_E$ of the defect point group $G$.
For $D_{3d}$ centers, the same statements hold with
$E,A_1,A_2$ replaced by $E_g,A_{1g},A_{2g}$.

The most general linear strain-defect coupling can be written
\[
H_\epsilon=\boldsymbol{\epsilon}^{T}C\mathbf{O}
=
\sum_{i,j=1}^{2}\epsilon_i C_{ij}O_j .
\]
Under a point-group operation $g\in G$,
\[
\boldsymbol{\epsilon}\mapsto R_E(g)\boldsymbol{\epsilon},\qquad
\mathbf{O}\mapsto R_E(g)\mathbf{O}.
\]
Invariance of $H_\epsilon$ requires
\[
R_E(g)^T C R_E(g)=C\qquad \forall g\in G .
\]
Because $\Gamma_E$ is irreducible, Schur's lemma gives
$C=g_0 I_2$. Therefore the only $G$-invariant linear coupling is
\[
H_\epsilon=g_0(\epsilon_1O_1+\epsilon_2O_2).
\]

Equivalently, the character formula
\[
n_\lambda=\frac{1}{|G|}\sum_{g\in G}\chi_E(g)^2\chi_\lambda^*(g)
\]
gives
\[
E\otimes E=A_1\oplus A_2\oplus E
\]
for $C_{3v}$ and
\[
E_g\otimes E_g=A_{1g}\oplus A_{2g}\oplus E_g
\]
for $D_{3d}$. A convenient bilinear basis is
\[
B_{A_1}=\epsilon_1O_1+\epsilon_2O_2,
\]
\[
B_{A_2}=\epsilon_1O_2-\epsilon_2O_1,
\]
and
\[
\mathbf{B}_E=
\left(
\epsilon_1O_1-\epsilon_2O_2,\;
\epsilon_1O_2+\epsilon_2O_1
\right).
\]
Only $B_{A_1}$ is a scalar. The $A_2$ and $E$ bilinears can enter the
Hamiltonian only if the device or environment supplies spurions
$\chi_{A_2}$ or $\boldsymbol{\chi}_E$ transforming in the corresponding
irreps:
\[
H_\epsilon=
g_0 B_{A_1}
+\chi_{A_2}B_{A_2}
+\boldsymbol{\chi}_E\cdot\mathbf{B}_E .
\]
Thus a $G$-preserving perturbation can renormalize $g_0$ but cannot
generate ellipticity, handedness mixing, or anisotropic strain coupling
at first order.

For a quadrature drive,
\[
\epsilon_1(t)=\epsilon_0\cos\omega_d t,\qquad
\epsilon_2(t)=\epsilon_0\sin\omega_d t,
\]
define $O_\pm=O_1\pm iO_2$. Substitution gives
\[
H_\epsilon(t)=\frac{g_0\epsilon_0}{2}
\left(O_+e^{-i\omega_d t}+O_-e^{+i\omega_d t}\right).
\]
When $\omega_d$ is resonant with one arm of the $\Lambda$ system and
the opposite circular component is detuned by $\Delta_{\mathrm{off}}$,
the rotating-wave approximation is valid for
\[
\Omega_m,\;|\dot\theta|,\;|\dot\phi|,\;\sigma,\;|\Delta_m|
\ll \Delta_{\mathrm{off}},\omega_d .
\]
The resonant component then gives the single-arm circular Hamiltonian
used in the main text. This completes the proof of Theorem~1.

\subsection*{2. Scope of the noise-sector statement}

The point-group argument above determines the allowed operator sectors
and the protection of the circular strain selection rule. The following
noise-sector argument determines the perturbative scaling of geometric
errors once the ideal $\Lambda$ manifold has been generated. It does
not determine scalar Lindblad rates. We assume:
(i)~$\sigma/\Omm\ll1$; (ii)~the system operators appearing in the
system--bath coupling decompose into $A_1$, $A_2$, and $E$
irreducible tensor sectors;
(iii)~the bath spectral weights are not parametrically concentrated
in the $E$ sector; (iv)~device asymmetries do not introduce leading
symmetry-breaking Lindblad operators; and (v)~population in
$|0\rangle$ is detected as an erasure with efficiency
$\eta_{\mathrm{det}}$.  Under these assumptions, the symmetry-derived
hierarchy implies the biased-erasure channel used in the main text.
The ME-CCE/KMC bath used in the Regime-A simulation is a more specific
case: it is injected through the system operator
$B_z^{\mathrm{surf}}(t)S_z$, i.e. the $A_2$ sector.  Its erasure rate
is therefore a numerical simulation result, not a prediction of
representation theory alone.

\noindent
\textbf{Supplemental Theorem G1 (Conditional symmetry selection of the
biased-erasure channel).}
Let the controlled $\Lambda$ system be generated by the same
two-dimensional irrep $\Gamma_E$ that appears in the strain tensor,
with bright states
$|B_\pm\rangle=(|B\rangle\pm|0\rangle)/\sqrt{2}$ and
energies $\pm\Omm/2$.  Under the assumptions above, the effective
per-gate channel can be written as
\begin{equation}
  \mathcal{E}(\rho)
  =
  (1-p_{\mathrm{era}})\mathcal{E}_\mathcal{Q}(\rho)
  + p_{\mathrm{era}}\mathcal{E}_{\mathrm{flag}}(\rho),
\end{equation}
where $\mathcal{E}_{\mathrm{flag}}$ is a known-location erasure
channel.  The no-erasure conditional channel has a strongly
$Z$-biased perturbative hierarchy,
\begin{align}
  p_Z &=
  O(\sigma_{A_2}^2/\Omm^2)+p_{T_{1\rho}}+p_{\mathrm{det}},\\
  p_{XY}
  &\lesssim
  C_E(\sigma_E/\Omm)^3+p_{\mathrm{therm}}+p_{\mathrm{ctrl}},
\end{align}
so that comparable system-operator sector strengths give
$p_Z/p_{XY}\sim\Omm/\sigma\gg1$.
Here $p_{\mathrm{therm}}$ denotes thermally activated spin-flip
processes and $p_{\mathrm{ctrl}}$ denotes residual control
imperfections. The group theory fixes the perturbative hierarchy,
while the numerical model fixes the coefficients and additive
device-dependent contributions.

\begin{proof}[Proof of Proposition~1]

\textit{(a) $A_1$ sector} ($\hat{V} = S_z^2$, etc.).---Because
$S_z^2 = I$ on
$\mathcal{Q}=\mathrm{span}\{|{-1}\rangle,|{+1}\rangle\}$,
$\hat{V}$ acts as a common-mode shift and produces no
$|D\rangle$--$|B\rangle$ coupling at any order:
$\delta\gamma_{\mathrm{geo}} = 0$.

\textit{(b) $A_2$ sector} ($\hat{V} = S_z$).---The matrix elements
$\langle B_\pm|S_z|D\rangle$ have the \emph{same} sign,
while the energy denominators $E_D - E_{B_\pm} = \mp\Omm/2$
have opposite sign.
The perturbative weights therefore cancel in the $|B\rangle$
component, yielding $|D^{(1)}\rangle \propto |0\rangle \perp
\mathcal{Q}$.
Since $\partial_\mu|D\rangle \in \mathcal{Q}$ and
$\langle 0|\partial_\mu D\rangle = 0$, the Berry connection
is invariant: $\delta\mathcal{A}_\mu^{(D)} = 0$ at
$O(\sigma)$, and
$\delta\gamma_{\mathrm{geo}} = O(\sigma^2/\Omm^2)$.

\textit{(c) $E$ sector} ($\hat{V} = S_x,\,S_y$).---The matrix
elements
$\langle B_\pm|S_x|D\rangle$ have \emph{opposite} sign,
so the weights reinforce in the $|B\rangle$ component:
$|D^{(1)}\rangle \propto |B\rangle \in \mathcal{Q}$.
The Berry connection is modified at $O(\sigma)$.
However, the first-order correction
$\delta\mathcal{A}_\theta \propto \sin(\theta/2)\sin\phi$
is odd under the $\pi$-echo ($\phi_0 \to \phi_0 + \pi$),
so the line integrals from the two NGQC lunes cancel:
$\oint\delta\mathcal{A}\,d\ell = 0$ at $O(\sigma)$.
At second order the relevant Berry-connection correction
$\delta\mathcal{A}_\theta^{(2)} \propto c^*\partial_\theta c$
contains exclusively $n = \pm 1$ Fourier modes in~$\phi$
(no $n = 0$ component), because
$c^*\partial_\theta c =
\tfrac{1}{4}[\cos^2\!\tfrac{\theta}{2}\,e^{i\phi}
- \sin^2\!\tfrac{\theta}{2}\,e^{-i\phi}]$.
These $n = \pm 1$ modes are odd under $\phi_0 \to \phi_0 + \pi$
and cancel under the echo.
Although $|c|^2 = (1 + \sin\theta\cos\phi)/2$ does carry an
$n = 0$ mode, it enters only through~$\mathcal{A}_\phi$, where
the lune line-integral structure
$[f(\phi_0) - f(\phi_0 + \delta\phi)]$ annihilates the
$\phi$-independent component before the echo acts.
The leading surviving mode $n = \pm 2$ first appears at
$O(\sigma^3)$ from cubic products $|c|^2 c$, giving
$\delta\gamma_{\mathrm{geo}} = O(\sigma^3/\Omm^3)$.
\end{proof}

Supplemental Theorem G1 follows by combining this
sector hierarchy with the assumed erasure detection of $|0\rangle$
population and the usual conditional/no-erasure decomposition of the
gate channel.
The underlying mechanism (destructive interference between
the $|B_+\rangle$ and $|B_-\rangle$ perturbative channels,
which we call the \emph{parity filter}) follows directly
from the symmetric $\pm\Omm/2$ bright-state splitting
of the $\Lambda$ Hamiltonian.
Corollary~1 of the main text follows immediately:
under comparable system-operator sector strengths, the $A_2$-sector
error ($O(\sigma^2)$) dominates the echo-suppressed $E$-sector error
($O(\sigma^3)$), so the Pauli-bias ratio
$\eta \equiv p_Z / p_{XY}$ grows as $(\Omm/\sigma)$.
Together with $|0\rangle$ erasure detection, the resulting conditional
biased-erasure channel is exploitable by
XZZX-type surface codes~\cite{bonilla_ataides2021_xzzx},
which achieve qualitatively higher thresholds than
Calderbank-Shor-Steane (CSS)
codes under biased noise~\cite{tuckett2019_tailored,tuckett2020_faulttolerant}.
The code-capacity quantum error correction (QEC) advantage estimate
for the high-strain benchmark
error budget is reported in
Sec.~\ref{app:qec_montecarlo}.

\subsection*{3. Dual interpretation of the $A_2$ parity filter}
\label{app:duality}

The proof of Proposition~1(b) establishes that
$|D^{(1)}\rangle \propto |0\rangle \perp \mathcal{Q}$ for
$A_2$-sector perturbations.
Here we spell out the two consequences of this single
equation.

Under a quasi-static perturbation $V = \sigma S_z$
($A_2$ irrep), the first-order dark-state correction is
\begin{equation}
  |D^{(1)}\rangle
  = \frac{\sigma}{\sqrt{2}}
    \left[
      \frac{\langle B_+|S_z|D\rangle}{-\Omm/2}\,|B_+\rangle
      + \frac{\langle B_-|S_z|D\rangle}{+\Omm/2}\,|B_-\rangle
    \right].
  \label{eq:D1_A2}
\end{equation}
The matrix elements $\langle B_\pm|S_z|D\rangle$ have the same
sign (since $\langle 0|S_z|D\rangle = 0$ for spin-1), so
\begin{equation}
  |D^{(1)}\rangle
  = \frac{\sigma\langle B|S_z|D\rangle}{\sqrt{2}\cdot\Omm/2}
    \bigl[-|B_+\rangle + |B_-\rangle\bigr].
\end{equation}
Substituting $|B_\pm\rangle = (|B\rangle \pm |0\rangle)/\sqrt{2}$:
\begin{equation}
  -|B_+\rangle + |B_-\rangle
  = -\frac{|B\rangle + |0\rangle}{\sqrt{2}}
    + \frac{|B\rangle - |0\rangle}{\sqrt{2}}
  = -\sqrt{2}\,|0\rangle.
  \label{eq:A2_interference}
\end{equation}
Therefore
\begin{equation}
  \boxed{|D^{(1)}\rangle
  = -\frac{\sigma\langle B|S_z|D\rangle}{\Omm/2}\;|0\rangle.}
  \label{eq:D1_result}
\end{equation}

\textbf{Face~1 (Berry-connection invariance).}
Because $|D^{(1)}\rangle \propto |0\rangle$ and
$\partial_\mu|D\rangle \in \mathcal{Q}$ for all control
parameters~$\mu$, and $\langle 0|\partial_\mu D\rangle = 0$,
the first-order Berry-connection correction vanishes:
\begin{equation}
  \delta\mathcal{A}_\mu^{(1)}
  = i\langle D^{(1)}|\partial_\mu D\rangle
  + i\langle D|\partial_\mu D^{(1)}\rangle = 0.
\end{equation}
This is Proposition~1(b): the geometric phase is protected
at $O(\sigma)$, with the leading correction appearing at
$O(\sigma^2/\Omm^2)$.

\textbf{Face~2 (Leakage direction).}
The same equation shows that the perturbed dark state
$|D\rangle + |D^{(1)}\rangle$ acquires a component in
$|0\rangle$ with probability
$p_{\mathrm{leak}} \propto
\sigma^2|\langle B|S_z|D\rangle|^2/(\Omm/2)^2$.
This population lies outside the computational subspace
$\mathcal{Q}$ and points toward the optically distinguishable
auxiliary level, providing a microscopic explanation for why
$A_2$-type noise preferentially produces leakage directed at
$|0\rangle$ rather than at $|B\rangle$.

\textbf{Inseparability.}
The two faces are mathematically inseparable.
The parity filter causes destructive interference in the
$|B\rangle$ direction (the $|B\rangle$ components of
$|B_+\rangle$ and $|B_-\rangle$ cancel), which is
\emph{simultaneously} the reason the Berry connection is
invariant (Face~1) and the reason the leakage points toward
$|0\rangle$ (Face~2).

\textbf{$E$-sector contrast.}
For $E$-sector perturbations ($V = \sigma S_x$), the matrix
elements $\langle B_\pm|S_x|D\rangle$ have \emph{opposite}
sign (since $\langle B|S_x|D\rangle = 0$ but
$\langle 0|S_x|D\rangle \neq 0$ for spin-1), giving
\begin{equation}
  +|B_+\rangle + |B_-\rangle
  = +\sqrt{2}\,|B\rangle.
\end{equation}
The first-order correction is
$|D^{(1)}\rangle \propto |B\rangle \in \mathcal{Q}$:
it remains in the computational subspace as an in-subspace
error, not leakage to~$|0\rangle$.
Suppression in this sector relies on the $\pi$-echo between
the two NGQC lunes, not on the parity filter's leakage
direction.

\subsection*{4. Sector-injection diagnostic}
\label{app:surface_sector_projection}

The ideal proof above classifies possible system-operator sectors; it
does not assume that the fabricated surface-spin environment itself
respects the NV point group.  To connect the proof to the actual
Regime-A noise model and to verify the parity filter directly, we ran
a sector-injection diagnostic on the ideal three-level $\Lambda$
Hamiltonian
\begin{equation}
  H_0/\Omm = \frac{1}{2}(|0\rangle\langle B|+|B\rangle\langle 0|).
\end{equation}
Equal-Hilbert--Schmidt-norm sector representatives
\[
  \{S_z^2,\;S_z,\;S_x,\;S_y\}
\]
were injected with strength $\sigma$, and the first-order dark-state
correction was projected onto the auxiliary direction $|0\rangle$ and
the logical bright direction $|B\rangle$:
\begin{equation}
  f_{|0\rangle}=\frac{W_0}{W_0+W_B},\qquad
  f_{|B\rangle}=\frac{W_B}{W_0+W_B}.
\end{equation}
The resulting sector directions and scaling laws are summarized in
Fig.~\ref{fig:sector_injection_scaling}.

\begin{figure*}[!t]
  \centering
  \includegraphics[width=0.86\textwidth]{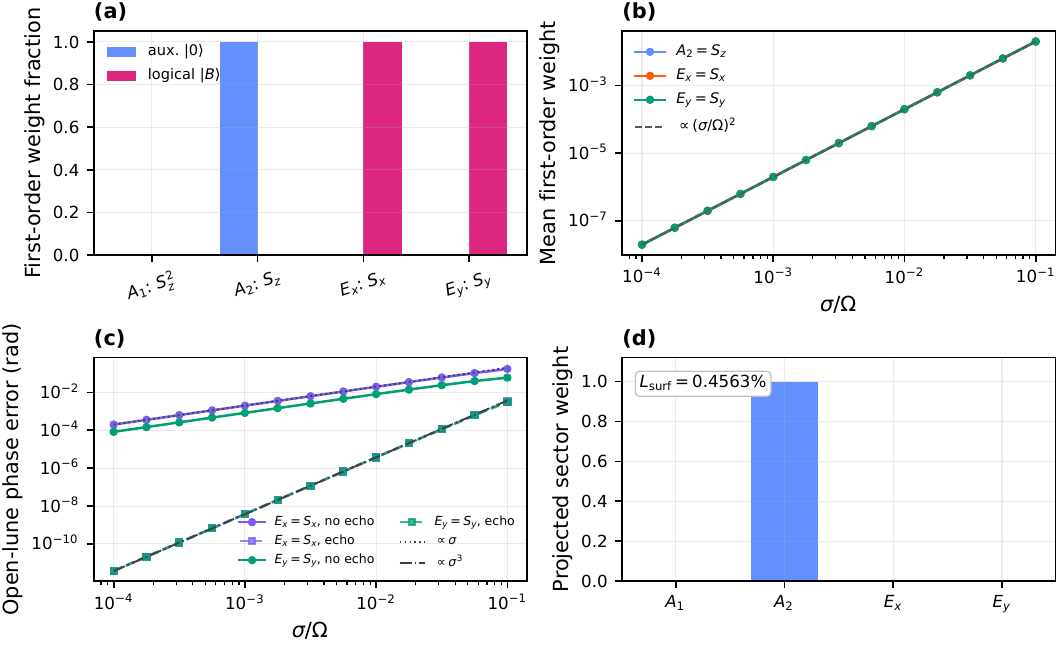}
  \caption{\label{fig:sector_injection_scaling}%
  Sector-injection verification of the $\Lambda$-manifold
  symmetry-to-channel map.  (a)~First-order response direction:
  $A_2=S_z$ points entirely toward $|0\rangle$, while
  $E=S_x,S_y$ points entirely toward $|B\rangle$.
  (b)~Nonzero response weights scale quadratically in
  $\sigma/\Omm$, as expected for probabilities from first-order
  amplitudes.  (c)~The $E$-sector geometric phase is approximately
  linear without echo and approximately cubic with the $\pi$-shifted
  two-lune echo.  (d)~The actual Regime-A surface-noise operator
  projects as $(w_{A_1},w_{A_2},w_{E_x},w_{E_y})=(0,1,0,0)$ and gives
  $L_{\mathrm{surf}}=0.4563\%$ in the surface-only diagnostic.}
\end{figure*}

The first-order directions are:
\begin{center}
\begin{tabular}{lccc}
  Sector & Operator & $f_{|0\rangle}$ & $f_{|B\rangle}$ \\
  \hline
  $A_1$ & $S_z^2$ & $0$ & $0$ \\
  $A_2$ & $S_z$ & $1.000000$ & $0.000000$ \\
  $E_x$ & $S_x$ & $0.000000$ & $1.000000$ \\
  $E_y$ & $S_y$ & $0.000000$ & $1.000000$
\end{tabular}
\end{center}
The $A_1$ row has zero first-order coupling because $S_z^2$ is the
identity on $\mathcal Q$; its fractions are therefore reported as zero
rather than interpreted as a direction.

The nonzero response weights have log--log slopes
\[
  A_2\to |0\rangle:\;2.000,\qquad
  E_x,E_y\to |B\rangle:\;2.000.
\]
This confirms perturbative scaling: the first-order correction is an
amplitude, while the plotted response is a probability/weight.
For the open-lune geometric phase, the $E$-sector slopes are
\[
  E_x:\;0.986\to 2.986,\qquad
  E_y:\;0.968\to 2.984,
\]
where the arrow denotes adding the $\pi$-shifted second lune.  Thus the
two-lune echo removes the leading $E$-sector geometric term.

The same calculation can be written as an irrep-to-channel response
matrix.  For a weak sector representative $V_\Gamma$, define
\begin{equation}
  \mathcal R_{\Gamma\to c}
  =
  \lim_{\sigma\to0}
  \frac{p_c[V_\Gamma]}{(\sigma/\Omm)^2},
  \label{eq:sm_irrep_response_matrix}
\end{equation}
where $c$ denotes a measured port.  Table~\ref{tab:irrep_polarimetry_response}
reports the finite-$\sigma/\Omm=10^{-3}$ estimate of this response in
the same equal-Hilbert--Schmidt-norm convention.  The two decisive
entries are $\mathcal R_{A_2\to P_0}=2$ and
$\mathcal R_{E\to W_B}=2$, with crossed population responses at the
numerical floor.

\begin{table*}[!t]
\caption{\label{tab:irrep_polarimetry_response}%
Irrep-to-channel response matrix of the ideal $\Lambda$ manifold.
Probability-like columns are normalized by $(\sigma/\Omm)^2$ at
$\sigma/\Omm=10^{-3}$.  The $A_2$ row also carries a static
longitudinal calibration component, shown separately as ``static
$Z/\sigma$''; the polarimetry statement concerns population-port
routing between $P_0$ and $W_B$.}
\begin{ruledtabular}
\begin{tabular}{lccccccc}
Sector & $R_{P_0}$ & $R_{W_B}$ & $R_{XY}$ & $R_Z^{(1)}$ &
$R_Z^{({\rm echo})}$ & common$/\sigma$ & static $Z/\sigma$ \\
\hline
$A_1:S_z^2$ & $9.447{\times}10^{-33}$ & $0$ & $0$ & $0$ & $0$ &
$1.000000$ & $0$ \\
$A_2:S_z$ & $2.000000$ & $0$ & $0$ & $0$ & $0$ &
$0$ & $1.000000$ \\
$E_x:S_x$ & $0$ & $2.000000$ & $2.000000$ & $0.997774$ &
$3.546{\times}10^{-12}$ & $0$ & $0$ \\
$E_y:S_y$ & $0$ & $2.000000$ & $2.000000$ & $0.170654$ &
$3.550{\times}10^{-12}$ & $0$ & $0$
\end{tabular}
\end{ruledtabular}
\end{table*}

Therefore, the clean $\Lambda$ manifold acts as a crystalline irrep
polarimeter.  In the syndrome-port response,
$A_2$ routes its first-order population weight to the auxiliary port,
whereas the transverse $E$ sector routes it to the logical bright port.
Separately, $S_z$ is also a static longitudinal splitting of the
logical doublet, so the polarimetry result should not be read as the
absence of all $Z$-like calibration shifts.

A useful continuous check is obtained by injecting
\begin{equation}
  V(\alpha,\beta)
  =
  \sigma\!\left[
  \cos\alpha\,S_z
  +
  \sin\alpha(\cos\beta\,S_x+\sin\beta\,S_y)
  \right].
  \label{eq:sm_irrep_interpolation}
\end{equation}
The measured routing fractions obey
\[
  f_{0}(\alpha)=\cos^2\alpha,\qquad
  f_B(\alpha)=\sin^2\alpha,
\]
independent of the transverse angle $\beta$.  Numerically, the maximum
absolute deviations from the two laws are
$6.661\times10^{-15}$, and the maximum $\beta$-standard deviations are
$7.474\times10^{-16}$ for $f_0$ and $7.549\times10^{-16}$ for $f_B$.
Figure~\ref{fig:irrep_polarimetry} summarizes the response matrix and
the continuous interpolation.

\begin{figure*}[!t]
  \centering
  \includegraphics[width=0.9\textwidth]{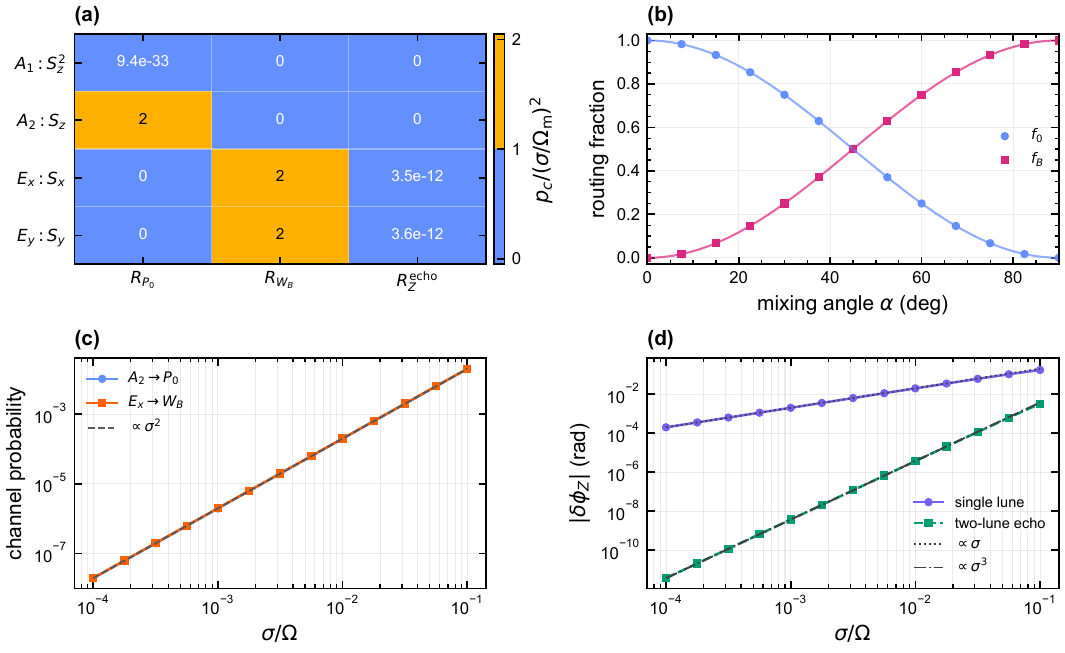}
  \caption{\label{fig:irrep_polarimetry}%
  Irrep-to-syndrome polarimetry of the ideal $\Lambda$ manifold.
  \textbf{(a)}~Response matrix $\mathcal R_{\Gamma\to c}$ for
  equal-norm representatives $S_z^2$, $S_z$, $S_x$, and $S_y$.
  The $A_2$ row is routed to the auxiliary port $P_0$, while the
  transverse $E$ rows are routed to the logical bright port $W_B$.
  \textbf{(b)}~Mixed $A_2$--$E$ perturbations obey
  $f_0=\cos^2\alpha$ and $f_B=\sin^2\alpha$, independent of the
  transverse angle $\beta$.
  \textbf{(c)}~The nonzero population-response ports scale as
  $(\sigma/\Omm)^2$.
  \textbf{(d)}~For an $E$-sector perturbation, the single-lune
  geometric phase is approximately first order in $\sigma/\Omm$,
  while the two-lune echo leaves an approximately cubic residual.}
\end{figure*}

The same diagnostic also projects the operator multiplying the
stochastic surface field used in the Regime-A simulation.  For
\[
  H_{\mathrm{surf}}(t)/h
  =
  \gamma_e B_z^{\mathrm{surf}}(t)S_z ,
\]
the Hilbert--Schmidt sector weights are
\[
  (w_{A_1},w_{A_2},w_{E_x},w_{E_y})
  =
  (0,\;1,\;0,\;0),
\]
to numerical precision.  Propagating 96 ME-CCE/KMC traces through the
Regime-A SATD control with only this surface-noise term present gives
\[
  L_{\mathrm{surf}} = 0.4563\% .
\]
This provides a direct numerical bridge between the actual surface
noise used in the simulation and the $A_2$ parity-filter mechanism:
the operator sector is fixed by the modeled coupling
$B_z^{\mathrm{surf}}(t)S_z$, while the absolute erasure probability is
fixed by the Regime-A device simulation.

\subsection*{5. Effective-spurion robustness diagnostic}
\label{app:sector_robustness}

We also ran a compact robustness diagnostic that perturbs the ideal
$A_2\to |0\rangle$ map at the effective $\Lambda$-Hamiltonian level.
This is not a full actuator-transfer-function simulation.  It asks
whether the parity-filter fractions degrade continuously under
detuning, arm-amplitude, and arm-phase spurions.

For a dressed-state detuning $\Delta/\Omm$, the analytic leakage of the
first-order direction into $|B\rangle$ is
\begin{equation}
  f_{|B\rangle}^{(\Delta)}
  =
  \frac{4(\Delta/\Omm)^2}{1+4(\Delta/\Omm)^2}.
\end{equation}
For fractional arm-amplitude skew $a$ and phase skew $\delta\phi$, the
small-error scalings are
\begin{equation}
  f_{|B\rangle}^{(a)}\approx a^2,\qquad
  f_{|B\rangle}^{(\delta\phi)}\approx \sin^2\delta\phi .
\end{equation}
The numerical robustness scan is shown in
Fig.~\ref{fig:sector_injection_robustness}.

\begin{figure*}[t]
  \noindent\makebox[\textwidth][c]{%
    \includegraphics[width=0.86\textwidth]{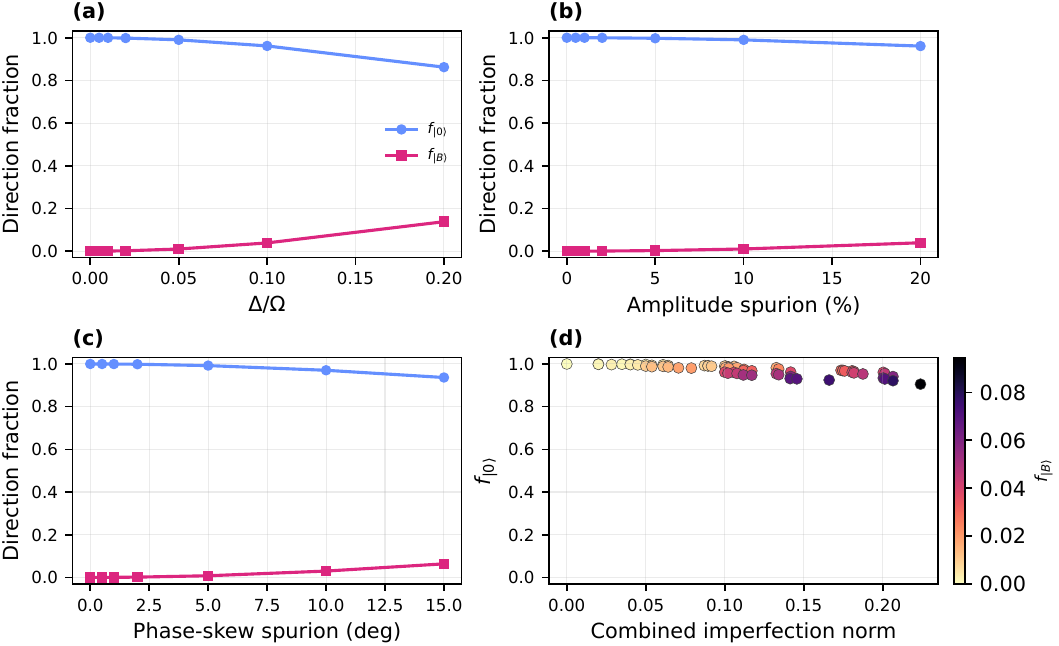}}
  \caption{\label{fig:sector_injection_robustness}%
  Effective-spurion robustness of the $A_2\to |0\rangle$ parity
  filter.  \textbf{(a)} Auxiliary and logical-bright direction
  fractions $f_{|0\rangle}$ and $f_{|B\rangle}$ versus detuning
  $\Delta/\Omega$.  \textbf{(b)} Direction fractions under quadrature
  amplitude imbalance, shown as an amplitude-spurion percentage.
  \textbf{(c)} Direction fractions under quadrature phase skew.
  \textbf{(d)} Combined-spurion scan showing $f_{|0\rangle}$ versus
  the combined imperfection norm, with point color indicating
  $f_{|B\rangle}$.  Across these scans the $A_2$ response degrades
  continuously into the logical-bright direction rather than showing a
  sharp failure of the sector map.}
\end{figure*}

\begin{center}
\begin{tabular}{lcc}
  Imperfection & $f_{|0\rangle}$ & $f_{|B\rangle}$ \\
  \hline
  $\Delta/\Omm=0.10$ & $0.961538$ & $0.038462$ \\
  $a=10\%$ & $0.989968$ & $0.010032$ \\
  $\delta\phi=10^\circ$ & $0.970413$ & $0.029587$ \\
  combined worst point & $0.905407$ & $0.094593$
\end{tabular}
\end{center}
The combined point uses
$\Delta/\Omm=0.10$, $a=10\%$, and
$\delta\phi=10^\circ$.  Even under simultaneous moderate
imperfections in this compact model, the nominal $A_2$ perturbation
remains dominantly auxiliary-directed.  The correct conclusion is
graceful degradation of the sector map, not a hardware-tolerance claim
for a specific transducer.

\subsection*{6. What symmetry does not determine}

The numerical probabilities $p_{\mathrm{era}}$, $p_Z$,
$p_{\mathrm{dep}}$, and $p_{XY}$ are not fixed by representation
theory. They depend on $T_1$, $T_{1\rho}$, surface-spin spectral
density, control imperfections, mechanical mode splitting, quadrature
calibration, and erasure-detection efficiency. In particular, the
ME-CCE/KMC bath in the Regime-A simulation is not assumed to be a
point-group-symmetric environment: it enters as the explicit
$A_2$ perturbation $B_z^{\mathrm{surf}}(t)S_z$.  The dominance of
detected leakage in the extracted channel is consistent with the
$A_2$ parity filter, but the absolute erasure probability is set by
the simulated bath realization and device parameters. A bath with
parametrically enhanced $E$-sector system-operator strength, a
detector with $\eta_{\mathrm{det}}\approx0$, or a device whose
mechanical doublet is not tuned within the RWA bandwidth would not
yield the same biased-erasure channel even though the operator
classification remains valid. Thus symmetry protects the local
circular selection rule and fixes the perturbative hierarchy of
allowed system operators; the simulated device model fixes the rates.

\section{Regime-B resonant HBAR transfer-function map}
\label{app:ghz}

Regime~B translates the Regime-A rotating-frame control Hamiltonian
into a resonant GHz-HBAR setting.  The shared $\Lambda$ Hamiltonian is
unchanged; the resonator implementation sets the attainable $\Omm$,
linewidth, drive bandwidth, and envelope-tracking strategy.  The
membrane mode-topology benchmark of Sec.~IV\,C of the main text and
Appendices~\ref{app:modes} and~\ref{app:topology} operates at
$f_0=\SI{78.8}{\mega\hertz}$, far below
the NV zero-field splitting $D\approx\SI{2.87}{\giga\hertz}$ required
for resonant driving.  Here we present a concrete GHz resonator design
and quantify the achievable Rabi frequency.  The high-strain
$\Omm=\SI{2.22}{\mega\hertz}$ result in the main text establishes the
rotating-frame channel benchmark; the HBAR-realistic prediction is
reported separately below using the smaller Rabi frequencies
accessible in the GHz design.

\paragraph{Consistent Regime-B HBAR design.}
The Regime-B implementation used for Sweep~9 is a
$(100)$ diamond disk with $h_d=\SI{23.4}{\micro\meter}$,
$R=\SI{25}{\micro\meter}$, and a \SI{1}{\micro\meter} AlN
transducer.  This thickness gives an HBAR free spectral range
matching the NV Zeeman splitting at $B_z\simeq\SI{49}{\gauss}$,
so adjacent thickness-shear overtones address the two $\Lambda$
legs.  The overtone frequencies are
\begin{equation}
  f_n = \frac{n\,v_T}{2\,h_d},
  \qquad v_T = \SI{12\,822}{\meter\per\second},
  \label{eq:hbar_fn}
\end{equation}
so $n=10$ and $n=11$ lie near
$D_- \approx \SI{2.733}{\giga\hertz}$ and
$D_+ \approx \SI{3.007}{\giga\hertz}$.  Residual detunings can be
trimmed by RIE thickness control, uniaxial stress, or radius tuning
near an avoided crossing~\cite{chen2019_nanolett}.
AlN/single-crystal-diamond HBAR stacks have been demonstrated as
high-overtone bulk acoustic resonators~\cite{sorokin2013_aln_diamond_hbar}.

\paragraph{Degenerate mode pair for quadrature shear.}
For $[100]$ propagation in cubic diamond, the two transverse shear
branches are degenerate by four-fold symmetry, with
$v_T=\sqrt{C_{44}/\rho}=\SI{12\,822}{\meter\per\second}$
using $C_{44}=\SI{578}{\giga\pascal}$~\cite{mcskimin1972_diamond}
and $\rho=\SI{3515}{\kilo\gram\per\meter\cubed}$.
They generate the orthogonal shear components $\exz$ and $\eyz$.
Electrode loading and fabrication asymmetry can split the doublet,
but the residual splitting is assumed to be tuned below the envelope
bandwidth by the DC spring-softening method of
Appendix~\ref{app:tuning}.

The $n$-th shear overtone has
$\varepsilon_{xz}(z)=\varepsilon_0\cos(n\pi z/h_d)$.
For $n=10$, the surface is an antinode and a shallow
$d_{\mathrm{NV}}\sim\SI{20}{\nano\meter}$ NV samples
$\cos(n\pi d_{\mathrm{NV}}/h_d)>0.9999$, so the strain is effectively
uniform over the implanted layer.  Ensemble operation should place
NVs within an acoustic-antinode window.

Two orthogonal AlN electrodes, or one electrode pair with a
$90^\circ$ rotated pattern, drive the shear polarizations in
quadrature and reproduce the circular strain field of Eq.~(1) of the
main text.

\paragraph{Dual-overtone tuning for $\Lambda$-system control.}
The $\Lambda$-system gate couples both
$|0\rangle\!\leftrightarrow\!|{+1}\rangle$ (at $D_+$) and
$|0\rangle\!\leftrightarrow\!|{-1}\rangle$ (at $D_-$)
simultaneously (Sec.~II\,C of the main text).  Because one
GHz overtone has only $\Delta f_{3\mathrm{dB}}\sim\SI{300}{\kilo\hertz}$,
the two arms are driven by adjacent overtones whose free spectral
range equals the Zeeman splitting:
\begin{equation}
  \frac{v_T}{2h_d} = 2\gamma_e B_z.
  \label{eq:dual_overtone}
\end{equation}
The Regime-B values above give
$\mathrm{FSR}\approx\SI{274}{\mega\hertz}$, placing
$n=10$ and $n=11$ at $D_-$ and $D_+$.  Each overtone carries its own
shear doublet, providing independent $\sigma_-$ and $\sigma_+$
circular drives.

\paragraph{Achievable strain amplitude.}
The AlN layer produces a direct piezoelectric strain
$\varepsilon_{\mathrm{direct}}\simeq
d_{33}V_{\mathrm{RF}}/h_{\mathrm{AlN}}$, where
$d_{33} = \SI{5.5}{\pico\meter\per\volt}$ (AlN) and
$h_{\mathrm{AlN}} = \SI{1}{\micro\meter}$.
The local shear strain relevant for NV driving is not obtained from
this longitudinal AlN strain with unit efficiency.  We therefore
write the target shear amplitude as
\begin{equation}
  \varepsilon_0 \simeq
  \eta_{\rm sh} Q_L
  \frac{d_{33}V_{\mathrm{RF}}}{h_{\mathrm{AlN}}}.
  \label{eq:eps_hbar}
\end{equation}
Here $Q_L$ is the loaded acoustic quality factor and
$\eta_{\rm sh}$ is the effective AlN-to-diamond target shear-strain
transduction factor, incorporating AlN/diamond acoustic loading,
longitudinal-to-shear conversion, electrode participation, finite
mode overlap, and local strain participation at the NV.  The
unit-efficiency expression
($\eta_{\rm sh}=1$) is an upper bound.  The voltage values in
Table~\ref{tab:ghz_scenarios} use $Q_L=10^4$ and
$\eta_{\rm sh}=10^{-3}$, equivalently
$\eta_{\rm sh}Q_L\simeq10$.  This gives
$V_{\mathrm{RF}}\simeq\SI{0.018}{\volt}, \SI{0.18}{\volt},
\SI{0.91}{\volt}$ for
$\varepsilon_0=10^{-6},10^{-5},5\times10^{-5}$, respectively.
Diamond HBAR resonators routinely achieve $Q \gtrsim 10^4$ at GHz
frequencies: MacQuarrie \textit{et~al.}\ demonstrated
mechanically driven NV spin transitions with a diamond resonator
at $Q > 10^4$~\cite{macquarrie2013_prl,macquarrie2015_optica}, and Chen \textit{et~al.}\
achieved $Q \approx 1.2 \times 10^4$ in a semiconfocal diamond
acoustic resonator at $\sim\!\SI{3}{\giga\hertz}$~\cite{chen2019_nanolett},
with single-quantum NV spin driving confirmed at $Q = 10^4$ in
a subsequent experiment~\cite{chen2020_pra}.
These demonstrations establish $Q \geq 10^4$ as an
experimentally validated baseline for GHz diamond nanomechanics.

\paragraph{Cryogenic quality-factor prospects.}
Demonstrated GHz diamond devices already reach
$Q\simeq10^4$~\cite{chen2020_pra}; higher-$Q$ cryogenic phononic
platforms reach $10^6$--$10^{10}$ in related materials and
geometries~\cite{oh2025_diamond_omc,gokhale2020_epihbar,maccabe2020_si_omc}.
Reaching $Q\ge10^5$ in AlN-on-diamond would likely require improved
transducer growth, hybrid phononic confinement, or a separate
optomechanical phonon bus.

Table~\ref{tab:ghz_scenarios} summarizes the single HBAR
implementation assumed in Regime~B, together with the conservative,
moderate, and optimistic strain scenarios used in Sweep~9.

\begin{table*}[t]
\small
\caption{\label{tab:ghz_scenarios}Compact Regime-B GHz-HBAR
implementation table.  The $\Lambda$-leg coupling is
$\Omm=|\htwosix|\varepsilon_0$ with
$|\htwosix|=\SI{2830}{\mega\hertz/strain}$.}
\begin{tabular*}{\textwidth}{@{\extracolsep{\fill}}llll}
\toprule
  Item & Value & Purpose & Comment \\
  \midrule
  Geometry & $h_d=\SI{23.4}{\micro\meter}$, $R=\SI{25}{\micro\meter}$,
  $h_{\mathrm{AlN}}=\SI{1}{\micro\meter}$ &
  Sets HBAR spectrum & $(100)$ diamond disk \\
  Dual overtones & $n=10,11$; $f_{10}\simeq D_-$, $f_{11}\simeq D_+$ &
  $\Lambda$ legs & $B_z\simeq\SI{49}{\gauss}$, FSR $\simeq\SI{274}{\mega\hertz}$ \\
  Mode pair & Two shear polarizations/overtone &
  Circular strain & Residual splitting tuned out \\
  Bandwidth & $f/Q_L\simeq\SI{0.27}{\mega\hertz}$ for $Q_L=10^4$ &
  Envelope tracking & Predistortion or lower $Q_L$ if needed \\
  Strain cases & $10^{-6},10^{-5},5\times10^{-5}$ &
  $\Omm=2.83,28.3,141.5~\mathrm{kHz}$ &
  $V_{\mathrm{RF}}\simeq0.02,0.18,0.91~\mathrm{V}$ \\
  DQ-SATD & $f_{\mathrm{DQ}}\simeq\SI{274}{\mega\hertz}$ &
  CD tone & Separate phase-locked DQ strain drive \\
\bottomrule
\end{tabular*}
\end{table*}

The optimistic scenario ($\varepsilon_0=5\times10^{-5}$,
$V_{\mathrm{RF}}\simeq\SI{0.91}{\volt}$) is conditional on achieving
$\eta_{\rm sh}Q_L\simeq10$, or $\eta_{\rm sh}\simeq10^{-3}$ for
$Q_L=10^4$.  It remains well within diamond's elastic limit
($\varepsilon_\mathrm{yield} \sim 1\%$) and compatible with
cryogenic operation.

\paragraph{Heating and power dissipation.}
At $f = \SI{2.87}{\giga\hertz}$, the RF drive power is
$P = V_\mathrm{RF}^2 / (2 R_\mathrm{mot})$, where the motional
resistance $R_\mathrm{mot} \sim \SI{50}{\ohm}$ for a well-matched
HBAR (typical for AlN-on-diamond composite
resonators~\cite{lakin2003_thinfilm}).  For
$V_{\mathrm{RF}} = \SI{0.91}{\volt}$:
$P \simeq 0.91^2/100 \simeq \SI{8.3}{\milli\watt}$.
With diamond's room-temperature bulk thermal conductivity $\kappa = \SI{2200}{\watt\per\meter\per\kelvin}$,
a one-dimensional steady-state estimate gives
$\Delta T \approx P h_d / (\kappa\,\pi R^2) \ll \SI{1}{\kelvin}$.
This upper bound neglects lateral heat spreading (which further
reduces $\Delta T$) and thermal boundary resistance (which could
locally increase it); a finite-element thermal analysis is
warranted for the cryogenic operating regime, where
the cooling power budget is tighter.

\paragraph{Acoustic bandwidth constraint.}
The resonator is continuously driven at resonance throughout the
gate, so free ring-down does not occur.
The operative constraint is the acoustic modulation bandwidth:
the resonator must track the amplitude envelope of the control
protocol, whose modulation bandwidth is
$\Delta f_\mathrm{env} \approx 2/\tgate$.
Requiring the resonator's $3\,\mathrm{dB}$ bandwidth $f_0/Q$ to
exceed $\Delta f_\mathrm{env}$ yields
\begin{equation}
  Q < \frac{f_0\,\tgate}{2}.
  \label{eq:Qmax_bw}
\end{equation}
For $\tgate \geq \SI{7}{\micro\second}$, $Q = 10^4$ satisfies
this constraint directly.
At shorter gate times, one may reduce $Q$ and compensate with
modestly higher~$V_\mathrm{RF}$, or apply RF pre-distortion
(inverse filtering of the resonator transfer function), a standard
technique in resonator-coupled superconducting-qubit control.
The pre-ring time to reach steady-state strain is
$t_\mathrm{pre} \approx 3Q/(\pi f_0) \approx
\SI{3.3}{\micro\second}$ for $Q = 10^4$.

\paragraph{Gate-time landscape at reduced $\Omm$.}
The SATD counter-diabatic protocol removes the adiabatic-speed
constraint ($\lambda = 1$ cancels leakage exactly), so the gate time
is no longer limited by the adiabatic ratio~$\eta$.
The dominant fidelity limit shifts to $T_{1\rho}$ decoherence:
$F_\mathrm{limit} \approx e^{-\tgate / T_{1\rho}}$.
For $T_{1\rho} = \SI{500}{\micro\second}$, a gate time
$\tgate = \SI{10}{\micro\second}$ yields
$F_\mathrm{limit} \approx 0.980$, while
$\tgate = \SI{5}{\micro\second}$ yields
$F_\mathrm{limit} \approx 0.990$.
The optimal gate time balances the rising decoherence cost against
the diminishing SATD fidelity gain at longer durations.

Sweep~9 maps this trade-off numerically for all three HBAR scenarios.
At $\tgate = \SI{2}{\micro\second}$, the projected HBAR cases give
$\Favg = 99.50$--$99.85\%$; the optimistic design reaches
$\Favg = 99.05\%$ at the first Rabi-refocusing magic time
($\tgate \approx \SI{14}{\micro\second}$).
These Sweep~9 values assume the same ideal rotating-frame control
Hamiltonian used in the benchmark simulations, including the resonant
DQ SATD tone.  HBAR-specific phase noise, heating, and transducer
filtering are bounded separately rather than included as Lindblad
channels; the Floquet validation in Appendix~\ref{app:extended_sweeps_sm},
Sec.~6,
shows that a bare $Q=10^4$ HBAR transfer function is narrower than
the \SI{1.833}{\micro\second} Regime-A benchmark envelope.  The
resonant route therefore requires predistortion, reduced
effective~$Q$, or a slower HBAR-specific pulse, converting the
benchmark into a concrete resonator-design target.
The biased-erasure estimates in the main text therefore remain tied
to the extracted Regime-A channel.  In the HBAR route, residual
envelope filtering, phase skew, and quadrature imbalance are promoted
to the transverse-floor sensitivity parameters used in
Appendix~\ref{app:qec_montecarlo}, Sec.~6, rather than folded into a new
headline erasure-overhead number.
As a control-optimality check, a GRAPE (gradient ascent pulse
engineering) optimization of the piecewise-constant drive amplitudes
at the optimistic Rabi frequency
($\Omm = \SI{141.5}{\kilo\hertz}$, $\tgate = \SI{2}{\micro\second}$)
achieves a noiseless process fidelity of $99.9997\%$, confirming
that the analytical SATD pulse shape is near-optimal and that the
${\sim}0.4\%$ infidelity in Sweep~9 is entirely decoherence-limited.

\paragraph{Noise-channel transferability.}
The three simulation noise channels ($T_1$, $T_{1\rho}$, ME-CCE
surface noise) can be evaluated at any specified~$\Omm$ and do not
depend directly on the acoustic carrier frequency~$f_0$ in the
rotating-frame Hamiltonian.
Implementation-dependent channels (resonator amplitude noise,
drive-induced heating, transducer bandwidth) are not integrated
into the master equation but are bounded by the engineering
estimates above.
As a stress test, enhancing the effective surface-noise coupling
by $5\times$ adds only $\sim\!0.08\%$ to the total error budget
(surface noise contributes $\sim\!0.02\%$ at the high-strain
benchmark), bringing
$\Favg$ from $99.88\%$ to $\sim\!99.80\%$, still well within
the high-fidelity regime.

\section{Complete parameter table}
\label{app:params}

Table~\ref{tab:params} collects all physical and design parameters
used throughout this work.

\begin{table}[!ht]
\caption{\label{tab:params}Complete parameter table.}
\scriptsize
\setlength{\tabcolsep}{2pt}
\resizebox{0.92\columnwidth}{!}{%
\begin{minipage}{1.35\columnwidth}
\begin{ruledtabular}
\begin{tabular}{llrll}
  Parameter & Symbol & Value & Units & Source \\
  \hline
  Zero-field splitting & $D$ & 2.87 & GHz & Ref.~\cite{udvarhelyi2018_prb} \\
  Gyromagnetic ratio & $\gamma_e$ & 2.80 & MHz/G & CODATA \\
  Static magnetic field & $B_z$ & 50 & G & Design \\
  Shear coupling & $\htwosix$ & $-2830$ & MHz/strain & Ref.~\cite{udvarhelyi2018_prb} \\
  In-plane coupling & $\htwofive$ & $-2600$ & MHz/strain & Ref.~\cite{udvarhelyi2018_prb} \\
  Transverse DQ coupling & $\honesix$ & $19\,660$ & MHz/strain & Ref.~\cite{udvarhelyi2018_prb} \\
  Young's modulus & $Y_{[100]}$ & $1050$ & GPa & Ref.~\cite{mcskimin1972_diamond} \\
  Poisson's ratio & $\nu_{[100]}$ & $0.104$ & --- & Ref.~\cite{mcskimin1972_diamond} \\
  Diamond density & $\rho$ & $3515$ & kg/m$^3$ & Ref.~\cite{mcskimin1972_diamond} \\
  Membrane side & $L$ & $10$ & $\mu$m & Design \\
  Membrane thickness & $h_m$ & $200$ & nm & Design \\
  SS spring constant & $\kmech^{\mathrm{SS}}$ & $4310$ & N/m & Computed \\
  Clamped spring constant & $\kmech^{\mathrm{clamp}}$ & $9530$ & N/m & Computed \\
  Fundamental frequency & $f_0$ & $78.8$ & MHz & Computed \\
  Vacuum gap & $d$ & $200$ & nm & Design \\
  Breakdown voltage & $V_{\mathrm{bd}}$ & $35$ & V & Empirical \\
  Max.\ displacement & $x_0$ & $5$ & nm & Linear limit \\
  Benchmark Rabi frequency & $\Omm$ & $2.22$ & MHz & $0.444 \times x_0$ \\
  $T_1$ relaxation & --- & $1.0$ & ms & Ref.~\cite{nagura2026} \\
  $T_{1\rho}$ relaxation & --- & $0.5$ & ms & Enriched \\
  NV depth & $d_{\mathrm{NV}}$ & $20$ & nm & Design \\
  Surface density & $\rho_s$ & $0.004$ & nm$^{-2}$ & Ref.~\cite{myers2014_surfacenoise} \\
  Corr.\ length & $r_c$ & $5$ & nm & Ref.~\cite{nagura2026} \\
  Hopping time (tested) & $\tauc$ & $10^{-8}$--$2\!\times\!10^{-5}$ & s & --- \\
  MC trajectories & --- & $500$ & per point & Convergence \\
  NGQC $\theta_{\max}$ & --- & $\pi$ & rad & Full N$\to$S$\to$N \\
  SATD strength & $\lambda$ & $1.0$ & --- & Exact \\
  Optimal gate time & $\tgate$ & $1.833$ & $\mu$s & Sweep~7 \\
\end{tabular}
\end{ruledtabular}
\end{minipage}%
}
\end{table}

\section{Derivation of the DQ AC Stark Hamiltonian}
\label{app:dqstark}
\label{app:dq_stark}

This appendix derives the effective double-quantum (DQ) AC Stark
Hamiltonian stated in Theorem~2 of the main text from first
principles.  The calculation closely follows the dispersive
(light-shift) formalism of
Refs.~\cite{blais2004_cqed,grimm2000_dipole,cohen_tannoudji1998_api},
adapted to the spin-1 NV Hamiltonian and two-drive gate architecture.

\subsection{DQ operator structure}

In the $\{|{+1}\rangle,|0\rangle,|{-1}\rangle\}$ eigenbasis of $\Sz$
the DQ operators take the matrix form
\begin{equation}
  S_x^2 - S_y^2 =
  \begin{pmatrix}0&0&1\\0&0&0\\1&0&0\end{pmatrix}, \quad
  \{S_x,S_y\} =
  \begin{pmatrix}0&0&-i\\0&0&0\\i&0&0\end{pmatrix},
  \label{eq:DQ_ops}
\end{equation}
so that the raising/lowering combinations are
\begin{equation}
  (S_x^2 - S_y^2) \pm i\{S_x,S_y\}
  = 2|{\mp 1}\rangle\langle{\pm 1}|.
  \label{eq:DQ_ladder}
\end{equation}
Crucially, both operators have \emph{identically zero} matrix
elements involving $|0\rangle$.  Consequently, any second-order
energy shift generated by the DQ channel shifts $|{\pm 1}\rangle$
only relative to each other, never relative to $|0\rangle$.

\subsection{DQ coupling from a circularly polarized strain drive}

The DQ part of the Udvarhelyi spin-strain
Hamiltonian~\cite{udvarhelyi2018_prb} is
\begin{equation}
  \frac{H_{\varepsilon 2}}{h}
  = \frac{\honesix}{2}\bigl[
    \exz\,(S_y^2 - S_x^2)
    + \eyz\,\{S_x,S_y\}
    \bigr].
  \label{eq:Heps2_app}
\end{equation}
For a $\sigma_+$ circularly polarized shear drive at frequency
$\omega$, $\exz = \varepsilon_0\cos\omega t$ and
$\eyz = \varepsilon_0\sin\omega t$.  Inserting
Eq.~\eqref{eq:DQ_ops} and defining
$\Omega_{\mathrm{DQ}} \equiv \honesix\,\varepsilon_0$ gives
\begin{equation}
  H_{DQ}^{(\sigma_+)}
  = -\frac{h\,\Omega_{\mathrm{DQ}}}{2}\bigl[
    e^{i\omega t}|{+1}\rangle\langle{-1}|
    + e^{-i\omega t}|{-1}\rangle\langle{+1}|
    \bigr].
  \label{eq:VDQ_sigp}
\end{equation}
For the opposite helicity ($\sigma_-$; same $\exz$ but
$\eyz \to -\eyz$) the exponential signs swap:
\begin{equation}
  H_{DQ}^{(\sigma_-)}
  = -\frac{h\,\Omega_{\mathrm{DQ}}}{2}\bigl[
    e^{-i\omega t}|{+1}\rangle\langle{-1}|
    + e^{+i\omega t}|{-1}\rangle\langle{+1}|
    \bigr].
  \label{eq:VDQ_sigm}
\end{equation}
This sign asymmetry between helicities is the physical origin of
the $\Sz$ (rather than $\Sz^2$) structure of the effective
Hamiltonian.

\subsection{Two-drive architecture and rotating frame}

The gate protocol uses two strain drives:
\begin{equation}
  \text{Drive 1:}\; \omega_+ = 2\pi(D + \gamma_e B_z),\quad
  g_+ = \tfrac{\Omega_{\mathrm{DQ}}}{2}\cos(\theta/2),
  \label{eq:drive1}
\end{equation}
\begin{equation}
  \text{Drive 2:}\; \omega_- = 2\pi(D - \gamma_e B_z),\quad
  g_- = \tfrac{\Omega_{\mathrm{DQ}}}{2}\sin(\theta/2),
  \label{eq:drive2}
\end{equation}
where $\theta(t)$ is the polar angle of the Bloch-sphere trajectory.
The simulation operates in the doubly-rotating frame defined by
\begin{equation}
  |{+1}\rangle \to e^{-i\omega_+ t}|{+1}\rangle, \quad
  |{-1}\rangle \to e^{-i\omega_- t}|{-1}\rangle, \quad
  |0\rangle \to |0\rangle.
  \label{eq:rot_frame}
\end{equation}
Under this transformation the DQ coupling from Drive~1
($\sigma_+$ at $\omega_+$) oscillates at the residual detuning
\begin{equation}
  \Omega_- \equiv \omega_+ - (\omega_+ - \omega_-)
  = \omega_- = 2\pi(D - \gamma_e B_z),
  \label{eq:det1}
\end{equation}
while the coupling from Drive~2 ($\sigma_-$ at $\omega_-$) oscillates
at
\begin{equation}
  \Omega_+ \equiv \omega_- + (\omega_+ - \omega_-)
  = \omega_+ = 2\pi(D + \gamma_e B_z).
  \label{eq:det2}
\end{equation}
Both detunings are of order $D \sim \SI{2.87}{\giga\hertz}$, three
orders of magnitude larger than the couplings
$g_\pm \sim \SI{}{MHz}$.

\subsection{Second-order energy shifts}

For a monochromatic off-resonant perturbation
$V(t) = Ae^{i\omega t} + A^\dagger e^{-i\omega t}$ acting on states
that are degenerate in the rotating frame, the standard AC Stark
(light-shift) formula~\cite{cohen_tannoudji1998_api,grimm2000_dipole}
gives the second-order energy shift of state $|n\rangle$:
\begin{equation}
  \delta E_n^{(2)}
  = \sum_{m\neq n}\!\left[
    \frac{|\langle m|A|n\rangle|^2}{\hbar\omega}
    + \frac{|\langle m|A^\dagger|n\rangle|^2}{-\hbar\omega}
  \right].
  \label{eq:ACStark_formula}
\end{equation}
This is the dispersive limit of the driven two-level
system~\cite{blais2004_cqed}: the two terms correspond to virtual
absorption and emission with energy denominators $+\hbar\omega$ and
$-\hbar\omega$, respectively.

\paragraph{Drive~1} ($A^{(1)} = -g_+|{+1}\rangle\langle{-1}|$,
oscillating at $\Omega_-$):
\begin{align}
  \delta E_{-1}^{(1)} &= +\frac{g_+^2}{\hbar\Omega_-} > 0, \\
  \delta E_{+1}^{(1)} &= -\frac{g_+^2}{\hbar\Omega_-} < 0, \\
  \delta E_0^{(1)} &= 0.
  \label{eq:shifts_d1}
\end{align}

\paragraph{Drive~2} ($A^{(2)} = -g_-|{-1}\rangle\langle{+1}|$,
oscillating at $\Omega_+$):
\begin{align}
  \delta E_{+1}^{(2)} &= +\frac{g_-^2}{\hbar\Omega_+} > 0, \\
  \delta E_{-1}^{(2)} &= -\frac{g_-^2}{\hbar\Omega_+} < 0, \\
  \delta E_0^{(2)} &= 0.
  \label{eq:shifts_d2}
\end{align}
In both cases $\delta E_0 = 0$ because the DQ operators have no
matrix elements with $|0\rangle$ [Eq.~\eqref{eq:DQ_ops}].

\subsection{Total effective Hamiltonian}

Summing the contributions:
\begin{align}
  \delta E_{+1}
  &= -\frac{g_+^2}{\hbar\Omega_-}
     +\frac{g_-^2}{\hbar\Omega_+}, \notag\\
  \delta E_{-1}
  &= +\frac{g_+^2}{\hbar\Omega_-}
     -\frac{g_-^2}{\hbar\Omega_+}, \notag\\
  \delta E_0 &= 0.
  \label{eq:total_shifts}
\end{align}
The common-mode shift vanishes identically:
$(\delta E_{+1} + \delta E_{-1})/2 = 0$.
The effective Hamiltonian restricted to $\mathcal{Q}$ is therefore
\begin{align}
  H_{\mathrm{eff}}^{DQ}
  &=
  \delta E_{+1}|{+1}\rangle\langle{+1}|
  + \delta E_{-1}|{-1}\rangle\langle{-1}| \nonumber\\
  &=
  \left(\frac{g_-^2}{\Omega_+}
  - \frac{g_+^2}{\Omega_-}\right)\Sz .
  \label{eq:Heff_Sz}
\end{align}
Substituting
\[
  g_+ = \frac{\Omega_{\mathrm{DQ}}}{2}\cos\frac{\theta}{2},
  \qquad
  g_- = \frac{\Omega_{\mathrm{DQ}}}{2}\sin\frac{\theta}{2},
\]
and combining over the common denominator
$D^2 - \gamma_e^2 B_z^2$:
\begin{equation}
  \boxed{
  H_{\mathrm{eff}}^{DQ}
  = -\frac{\Omega_{\mathrm{DQ}}^2}{4D}
    \!\left(\cos\theta
     + \frac{\gamma_e B_z}{D}\right)\Sz
    + \mathcal{O}\!\left(\frac{\gamma_e^2 B_z^2}{D^2}\right),
  }
  \label{eq:Heff_final}
\end{equation}
where we used
$\sin^2(\theta/2) - \cos^2(\theta/2) = -\cos\theta$ and expanded
to leading order in $\gamma_e B_z / D \approx 0.049$.
This reproduces the AC Stark theorem of the main text, specifically
the $\Sz$ Hamiltonian structure in Eq.~\eqref{eq:Hstark_Sz} and the
residual scale in Eq.~\eqref{eq:stark_residual}.
The $\Sz$ (not $\Sz^2$) structure arises because the DQ channel
couples only $|{+1}\rangle \leftrightarrow |{-1}\rangle$ and
cannot produce virtual transitions through $|0\rangle$, yielding
an antisymmetric shift analogous to the dispersive qubit-cavity
pull~\cite{blais2004_cqed}.

\subsection{C3v Stark-scaling substitution for 3C-SiC}
\label{app:sic_stark_scaling}

For any $C_{3v}$ spin-1 platform using the same single-quantum
$\Lambda$-leg mechanism, the strain required to reach the target
mechanical Rabi rate is
\begin{equation}
  \varepsilon_0=\frac{\Omm}{|h_{26}|},
\end{equation}
and the off-resonant DQ coupling generated by the same strain is
\begin{equation}
  \Omega_{\mathrm{DQ}}=|h_{16}|\varepsilon_0
  =
  \left|\frac{h_{16}}{h_{26}}\right|\Omm .
\end{equation}
Thus
\begin{equation}
  \delta_{\mathrm{AC}}
  =
  \frac{1}{4D}
  \left|\frac{h_{16}}{h_{26}}\right|^2\Omm^2 .
\end{equation}
At the fixed target $\Omm=\SI{2.22}{\mega\hertz}$ used in the
Regime-A benchmark, the substitution gives
\begin{center}
\footnotesize
\begin{tabular}{lccccc}
  Platform & $D$ (GHz) & $h_{26}$ & $h_{16}$ &
  $|h_{16}/h_{26}|$ & $\delta_{\mathrm{AC}}$ (kHz) \\
  \hline
  NV & $2.870$ & $2.830$ & $19.660$ & $6.947$ & $20.718$ \\
  3C-SiC & $1.330$ & $1.800$ & $1.350$ & $0.750$ & $0.521$
\end{tabular}
\end{center}
The reduction factor is
\[
  \delta_{\mathrm{AC}}^{\mathrm{NV}}/
  \delta_{\mathrm{AC}}^{\mathrm{3C\mbox{-}SiC}}
  = 39.76 .
\]
This is the Hamiltonian-level input to the matched-environment
3C-SiC benchmark in Appendix~\ref{app:extended_sweeps_sm}, Sec.~7.

\section{Step-by-step experimental protocol}
\label{app:protocol}

The following protocol separates the two NV implementation layers used
in the paper.  Path~A characterizes the membrane mode-pair topology
and calibrates the rotating-frame channel benchmark.  Path~B
translates the same $\Lambda$-control protocol to a resonant GHz-HBAR
implementation, where Rabi rate and envelope tracking are set by the
acoustic transfer function.  Because the rotating-frame dynamics
depend only on $\Omm$ (Sec.~II\,C of the main text), the calibration
and verification steps~(4--7) apply identically to both implementation
paths described below once their attainable $\Omm$ and transfer
function are specified.

\paragraph{Path~A: Membrane mode-topology calibration path.}
This path implements the mode-pair geometry that tests the holonomic
strain topology.  Appendix~\ref{app:topology} gives the thin-plate
strain-topology guide, but the local Rabi rate is not calibrated by
that topology alone; this path must therefore measure the actual Rabi
rate and use the Regime-A value as an effective rotating-frame channel
target.

\begin{enumerate}
  \item \textbf{Membrane fabrication and characterization.}
    Fabricate a simply-supported diamond membrane
    ($L \approx \SI{10}{\micro\meter}$, $h_m \approx \SI{200}{\nano\meter}$)
    via reactive-ion etching (RIE) and undercut release.
    Target EBL alignment $\delta < 0.1\%$ to minimize
    $(1{,}2)/(2{,}1)$ frequency splitting.
    Verify the boundary condition (SS vs.\ clamped) via
    mechanical spectroscopy.

  \item \textbf{Mode frequency measurement and electrostatic tuning.}
    Measure $f_{12}$ and $f_{21}$ using laser interferometry or
    electrical impedance spectroscopy.
    Apply DC voltage to the full-area electrode
    ($V \leq V_{\mathrm{bd}} = \SI{35}{\volt}$) to close any
    residual splitting $\Delta f$ via differential spring softening
    (Appendix~\ref{app:tuning}).

  \item \textbf{NV localization and depth estimation.}
    Confocally locate a single NV center near $(L/4, L/4)$
    where the shear strain components $\exz$, $\eyz$ form a
    circularly rotating pair.
    Estimate the NV depth $d_{\mathrm{NV}}$ via spin-echo
    $T_2$ measurements (shallower NVs exhibit shorter $T_2$
    from surface noise~\cite{myers2014_surfacenoise}).

  \item \textbf{SATD calibration via Rabi characterization.}
    Drive a single mode at calibrated amplitude and measure
    the Rabi frequency~$\Omm$.
    Perform a $\lambda$-scan of the counter-diabatic drive
    amplitude (cf.\ Sweep~6): measure process fidelity at several
    $\lambda$ values bracketing $\lambda = 1$.
    The SATD resonance should be recoverable to $\pm 1\%$
    precision within $\sim\!10$ calibration points
    (Sec.~VI\,B, Sweep~6, of the main text).

  \item \textbf{Magic gate-time identification.}
    With calibrated $\Omm$ and $\lambda = 1$, sweep the gate
    time $\tgate$ and measure $\Favg$ via quantum process
    tomography (QPT).
    Identify the first magic time where bright-state dynamical
    phase refocuses (expected near
    $\tgate \approx 4/\Omm \approx \SI{1.8}{\micro\second}$ for
    $\Omm = \SI{2.22}{\mega\hertz}$).

  \item \textbf{DC Stark compensation.}
    With both modes driven in quadrature, measure the
    equatorial-input process fidelity with and without a DC
    compensation voltage $V_{\mathrm{DC}}$; adjust until
    the $0.2\%$ process-level Stark degradation is nulled.
    The required compensation is deterministic
    ($V_{\mathrm{DC}} \propto \delta_{\mathrm{AC}}$) and
    need not be recalibrated dynamically.

  \item \textbf{Process-tomography verification.}
    At the optimized operating point ($\tgate$, $\lambda$,
    $V_{\mathrm{DC}}$), perform full single-qubit QPT.
    Extract $\Favg$, $F_e$, and $\overline{\mathcal{L}}$.
    Verify that eigenstate fidelity matches process fidelity
    within the Stark-immunity prediction
    (Theorem~2 of the main text).
\end{enumerate}

\paragraph{Path~B: GHz-HBAR resonant implementation route.}
For resonant driving at $f_0 \approx D \approx \SI{2.87}{\giga\hertz}$,
replace steps~(1--3) above with:
\begin{enumerate}
  \item[\textbf{1$'$.}] \textbf{HBAR disk fabrication.}
    Pattern the Regime-B diamond micro-disk
    ($R \approx \SI{25}{\micro\meter}$,
    $t_d \approx \SI{23.4}{\micro\meter}$) capped with an AlN
    piezoelectric transducer for electrical actuation of adjacent
    thickness-shear overtones near $D_\pm$.

  \item[\textbf{2$'$.}] \textbf{Overtone selection and dual-mode
    splitting.}
    Identify the $n=10$ and $n=11$ shear-overtone doublets, and close
    residual polarization splitting electrostatically or via slight
    disk tilt.  Confirm that the Rabi frequency lies within
    $\Omm \approx 2.83$--$\SI{141.5}{\kilo\hertz}$
    (Sweep~9, Table~\ref{tab:ghz_scenarios}).

  \item[\textbf{3$'$.}] \textbf{NV--overtone coupling verification.}
    Drive one overtone and measure $\Omm$ via spin-echo Rabi
    oscillations.  Steps~(4--7) then proceed as above.
\end{enumerate}

\section{AC Stark injection test (Sweep~8)}
\label{sec:starkinjection}

Sweep~8 tests the $\Sz$ structure of the AC Stark shift
(Theorem~2 of the main text)
by amplifying $\delta_{\mathrm{AC}} \to \alpha_S\,\delta_{\mathrm{AC}}$
with $\alpha_S \in \{0, \ldots, 200\}$ while disabling DC
compensation (Table~\ref{tab:starkinjection_sm}).
Because the injected term acts dynamically during the gate rather than
as a static post-gate phase, both legacy and process-level metrics
become sensitive to large uncompensated Stark shifts.
At the physical scale $\alpha_S=1$ the legacy metric remains high
($99.78\%$), but the process fidelity is already phase-sensitive
($99.04\%$).
Larger $\alpha_S$ values produce nonmonotonic coherent wrapping and
leakage, confirming the need for DC compensation/calibration and
ruling out an $\Sz^2$ common-mode mechanism.
With dynamic compensation enabled, the full process fidelity is
recovered.
The dashed 3C-SiC overlay in Fig.~\ref{fig:starkinjection_sm}
is a separate matched-environment rerun with 50 trajectories per
point.  At the physical scale $\alpha_S=1$ it gives
$\Favg=99.8846\%$, statistically indistinguishable from the
no-Stark point $\Favg=99.8830\%$, with leakage $0.476\%$.
Even at $\alpha_S=10$ the 3C-SiC process fidelity remains
$99.848\%$, illustrating the $39.76\times$ reduction of the
bare Stark scale derived in Appendix~\ref{app:dqstark}, Sec.~6.

\begin{table}[htbp]
\caption{\label{tab:starkinjection_sm}NV AC Stark injection test results
at $\tgate = \SI{1.833}{\micro\second}$, composite NGQC + SATD,
DC compensation OFF, 200 trajectories.  $F_{\mathrm{leg}}$ is the
single-state dark eigenstate fidelity.  The matched 3C-SiC overlay
shown in Fig.~\ref{fig:starkinjection_sm} is from the separate
50-trajectory rerun described in the text.}
\begin{ruledtabular}
\begin{tabular}{ccccc}
  $\alpha_S$ & $\delta_{\mathrm{AC}}^{\mathrm{eff}}$ (kHz) &
  $F_{\mathrm{leg}}$ (\%) & $\Favg$ (\%) & Leakage \\
  \hline
  $0$   & $0$     & $99.81$ & $99.88$ & $0.48\%$ \\
  $1$   & $20.7$  & $99.78$ & $99.04$ & $0.50\%$ \\
  $5$   & $103.6$ & $98.37$ & $79.49$ & $0.64\%$ \\
  $10$  & $207$   & $94.47$ & $43.11$ & $0.91\%$ \\
  $20$  & $414$   & $94.67$ & $66.08$ & $1.06\%$ \\
  $50$  & $1036$  & $25.32$ & $51.33$ & $35.25\%$ \\
  $100$ & $2072$  & $58.48$ & $41.68$ & $48.35\%$ \\
  $200$ & $4144$  & $48.95$ & $35.87$ & $25.76\%$ \\
\end{tabular}
\end{ruledtabular}
\end{table}

The table gives the NV point values; Fig.~\ref{fig:starkinjection_sm}
plots the same Sweep~8 data and overlays the matched-environment
3C-SiC neutral-divacancy rerun so that the separation between the
single-state metric, the process metric, leakage, and platform Stark
scale is visible.
The coherent, nonmonotonic response is the important feature: the
Stark term does not behave like a simple monotone penalty, and the
process metric is the diagnostic that exposes the residual phase
sensitivity.

\begin{figure}[!htbp]
  \centering
  \includegraphics[width=\columnwidth]{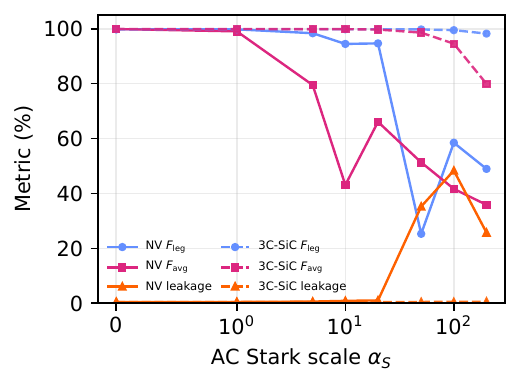}
  \caption{\label{fig:starkinjection_sm}%
  AC Stark injection response for Sweep~8.
  Solid curves show the NV channel benchmark and dashed curves show the
  matched-environment 3C-SiC neutral-divacancy rerun.
  Amplifying the uncompensated dynamic $\Sz$ Stark term separates
  the single-state metric, process metric, and leakage.
  At the physical scale $\alpha_S=1$, the 3C-SiC curve remains
  essentially on the no-Stark baseline, whereas the NV process metric
  is phase-sensitive; larger shifts show coherent wrapping and
  leakage, motivating DC compensation and distinguishing the channel
  from an $\Sz^2$ common-mode mechanism.}
\end{figure}

\section{Device design constraints}
\label{sec:design_sm}
\label{app:design_sm}

\subsection{Boundary conditions and displacement limit}
The simply-supported (SS) membrane has an effective spring constant
$\kmech^{\mathrm{SS}} = \SI{4310}{\newton\per\meter}$
(Sec.~A).  In the original slope-scale estimate this would permit
access to the full
$\Omm = \SI{2.22}{\mega\hertz}$ at
$x_0 = \SI{5}{\nano\meter}$.
Appendix~\ref{app:topology} uses the membrane modes as a
strain-topology guide rather than as a calibrated transverse-shear
prediction; the value is used in the channel simulations as an
effective Rabi-rate target that must be supplied by direct strain/Rabi
calibration or by an implementation such as the HBAR route.
Clamping the edges increases the stiffness by a factor of $2.21$
($\kmech^{\mathrm{clamp}} \approx \SI{9530}{\newton\per\meter}$),
reducing $\Omm$ from \SI{2.22}{\mega\hertz} (SS) to
\SI{1.004}{\mega\hertz} (clamped) at $x_0 = \SI{5}{\nano\meter}$.
With DRAG active, the clamped sweep remains high
($99.43\%$--$99.47\%$) but sits below the simply-supported optimum
($\sim\!99.49\%$) by about $0.02$--$0.06$ percentage points;
simply-supported boundaries remain preferred because higher
$\Omm$ enables greater gate-time compression.

\subsection{Electrostatic degeneracy tuning}
Standard electron-beam lithography introduces dimensional asymmetries
($\delta \sim 1\%$) that break the
$(1{,}2)/(2{,}1)$ modal degeneracy
by $\approx \SI{0.47}{\mega\hertz}$ (Sec.~C).
Degeneracy is restored via electrostatic spring-softening:
a DC bias $V_{\mathrm{DC}} = \SI{21.6}{\volt}$ across the
$d = \SI{200}{\nano\meter}$ vacuum gap selectively softens the stiffer
mode~\cite{barfuss2019_prb}, with a robust $\SI{13.4}{\volt}$ safety
margin below the conservative design voltage
$V_{\mathrm{bd}}=\SI{35}{\volt}$ informed by nanoscale vacuum-gap
breakdown and MEMS field-emission studies~\cite{wang2022_vacuum_breakdown,michalas2012_mems_field_emission}:
\begin{equation}
  \omega(V_{\mathrm{DC}}) = \omega_0
  \sqrt{1 - \frac{\varepsilon_0 A_{\mathrm{eff}}\, V_{\mathrm{DC}}^2}
  {\kmech\, d^3}} \,,
  \label{eq:tuning_sm}
\end{equation}
where $\varepsilon_0$ here denotes the vacuum permittivity.
The full-area electrode is mandatory:
a $50\%$-coverage design requires $\SI{30.5}{\volt}$, impinging on the
design-voltage margin.

\subsection{Voltage budget}
The voltage budget (Table~\ref{tab:voltage_sm}) demonstrates that the
system is \textit{displacement-limited}, not voltage-limited.

\begin{table}[htbp]
\caption{\label{tab:voltage_sm}Voltage budget for the simply-supported
membrane with full-area electrode and $\delta = 1\%$ asymmetry.}
\begin{ruledtabular}
\begin{tabular}{lrl}
  Component & Value & \\
  \hline
  Breakdown limit $V_{\mathrm{bd}}$ & $35.0$ & V \\
  DC tuning $V_{\mathrm{DC}}$ & $21.6$ & V \\
  Safety margin & $3.0$ & V \\
  AC headroom $V_{\mathrm{AC}}^{\max}$ & $10.4$ & V \\
  Required $V_{\mathrm{AC}}$ at $\Omm = \SI{2.22}{\mega\hertz}$
    & $\sim 5$ & mV \\
\end{tabular}
\end{ruledtabular}
\end{table}

\subsection{NV center requirements}
The simulations assume a single NV center at depth
$d_{\mathrm{NV}} = \SI{20}{\nano\meter}$ below the membrane surface,
achievable with standard nitrogen ion
implantation~\cite{pezzagna2010_implant}.
Coherence parameters:
$T_1 = \SI{1}{\milli\second}$~\cite{jarmola2012_t1},
$T_{1\rho} = \SI{500}{\micro\second}$~\cite{bar-gill2013_t1rho}
(isotopically purified ${}^{12}$C diamond with enrichment
${>}99.9\%$~\cite{balasubramanian2009_isotope,ishikawa2012_isotope}),
and $T_2^* \sim \SI{10}{\micro\second}$.

\section{Extended parametric studies}
\label{sec:extended_sweeps_sm}
\label{app:extended_sweeps_sm}

This section presents the full parametric sweeps not included in the
main text.  All simulations use the composite NGQC + SATD protocol
with 200 surface-noise trajectories unless otherwise noted.

\subsection{Fidelity vs.\ mechanical Rabi frequency (Sweep~1)}
Sweep~1 evaluates fidelity versus mechanical Rabi frequency
$\Omm = 0.1$--$\SI{2.5}{\mega\hertz}$ under worst-case noise
($\tauc = \SI{10}{\nano\second}$).
With DRAG active, the fidelity is nearly flat:
$F = 99.40\%$--$99.49\%$, a variation of $0.09\%$
[Fig.~\ref{fig:hardware_sm}(a)], confirming that counter-diabatic
correction decouples the fidelity from the adiabatic parameter.

\subsection{Frequency detuning tolerance (Sweep~3)}
Sweep~3 scans residual frequency detuning
$\Delta f = 0$--$\SI{500}{\kilo\hertz}$
[Fig.~\ref{fig:hardware_sm}(b)]: fidelity is flat to $0.03\%$,
confirming that the DC tuning mechanism does not require extreme
precision.

\subsection{Boundary-condition sensitivity (Sweep~4)}
Sweep~4 repeats Sweep~1 under clamped boundary conditions
($\kmech^{\mathrm{clamp}} = \SI{9530}{\newton\per\meter}$,
$\Omm^{\max} = \SI{1.004}{\mega\hertz}$;
Fig.~\ref{fig:hardware_sm}(d)):
with DRAG active, the clamped sweep remains within
$99.43\%$--$99.47\%$, about $0.02$--$0.06$ percentage points below
the simply-supported optimum.
Sweep~4.1 confirms that without DRAG, the SS membrane shows
oscillatory fidelity ($F = 8\%$--$99\%$) while the clamped
membrane suffers catastrophic monotonic decay from $96\%$ to
$28\%$; counter-diabatic correction is therefore a non-negotiable
prerequisite for clamped geometries
[Fig.~\ref{fig:hardware_sm}(e,\,f)].

\begin{figure*}[t]
  \centering
  \includegraphics[width=\textwidth]{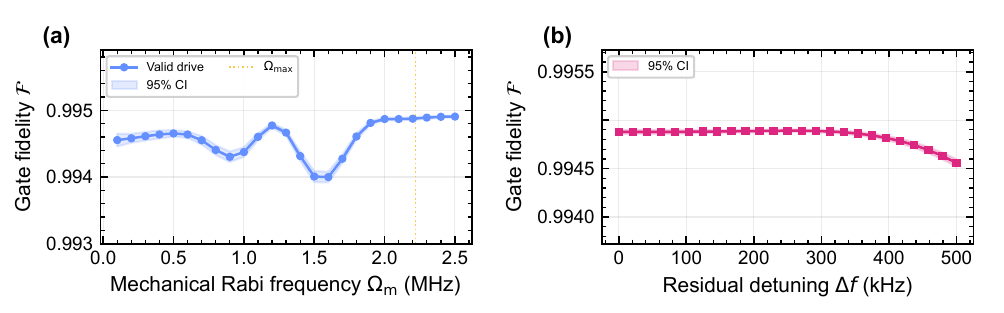}\\[4pt]
  \includegraphics[width=0.48\textwidth]{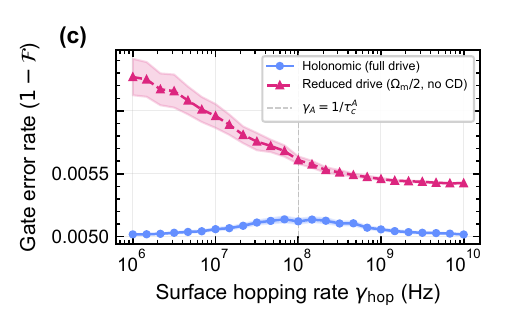}%
  \hfill
  \includegraphics[width=0.48\textwidth]{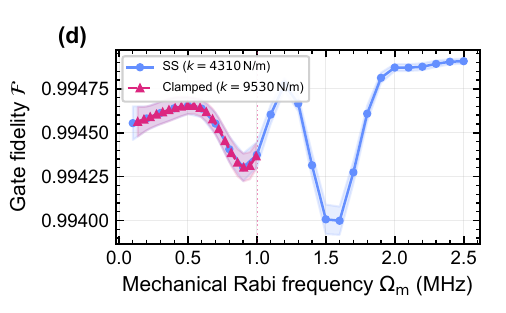}\\[4pt]
  \includegraphics[width=\textwidth]{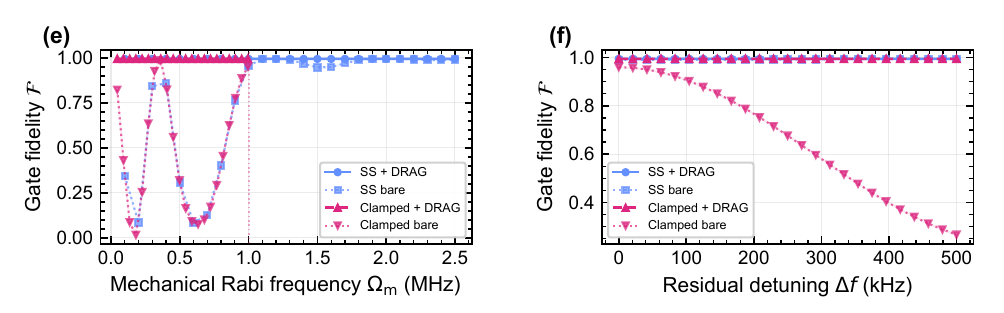}
  \caption{\label{fig:hardware_sm}%
  Hardware and noise tolerance characterization.
  (a,\,b)~Sweep~1, fidelity vs.\ $\Omm$;
  Sweep~3, fidelity vs.\ detuning $\Delta f$.
  (c)~Sweep~2, holonomic (blue) vs.\ Rabi (pink) gate
  error vs.\ surface hopping rate; the holonomic gate forms a
  noise-immune plateau, $7\times$ more stable and consistently
  lower.
  (d)~Sweep~4, SS vs.\ clamped boundary conditions under DRAG.
  (e,\,f)~Sweep~4.1, DRAG vs.\ bare adiabatic for
  headroom and tuning tolerance; without DRAG both membranes show
  oscillatory or monotonic fidelity collapse.
  }
\end{figure*}

\subsection{GHz HBAR extension (Sweep~9)}
Sweep~9 extends the gate-time analysis to GHz bulk acoustic
resonator Rabi frequencies (Appendix~\ref{app:ghz}):
conservative ($\Omm = \SI{2.83}{\kilo\hertz}$),
moderate ($\SI{28.3}{\kilo\hertz}$),
and optimistic ($\SI{141.5}{\kilo\hertz}$).
SATD eliminates the adiabatic-speed constraint regardless
of~$\Omm$: the process fidelity is $\Favg=99.50$--$99.85\%$ across
the three scenarios at $\tgate = \SI{2}{\micro\second}$
(Fig.~\ref{fig:ghz_gatetime_sm}).

\begin{figure}[!htbp]
  \centering
  \includegraphics[width=\columnwidth]{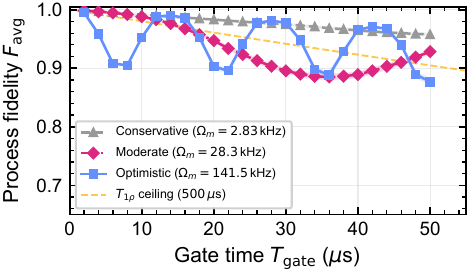}
  \caption{\label{fig:ghz_gatetime_sm}%
  GHz HBAR gate-time compression (Sweep~9):
  $\Favg$ versus $\tgate$ for three
  achievable strain amplitudes (Table~\ref{tab:ghz_scenarios});
  the dashed curve is the $T_{1\rho}$ ceiling.
  SATD eliminates the adiabatic-speed constraint at all $\Omm$.
  }
\end{figure}

\subsection{Quadrature drive robustness (Sweep~10)}
Sweep~10 scans amplitude ratio $r \in [0.90,\,1.10]$ and phase error
$\delta\varphi \in [-10^\circ,\,{+}10^\circ]$ at
$\tgate = \SI{1.833}{\micro\second}$ (Fig.~\ref{fig:ellipticity_sm}).
The gate maintains $\Favg \geq 99.5\%$ over the central plateau
($|r - 1| \lesssim 4\%$, $|\delta\varphi| \lesssim 4^\circ$).
Over the broader $|r-1|\leq 6\%$,
$|\delta\varphi|\leq 6^\circ$ box the minimum sampled value is
$99.32\%$, and the full scan degrades gracefully to $98.0\%$ at
the worst corner.

\begin{figure}[ht]
  \centering
  \includegraphics[width=0.95\columnwidth]{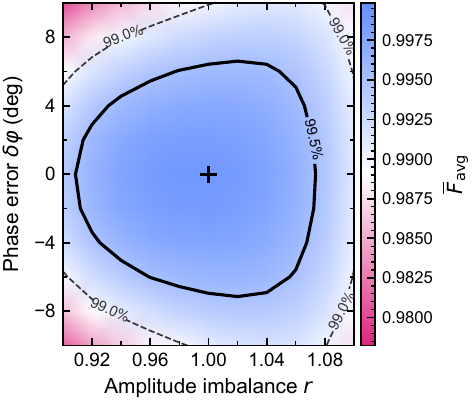}
  \caption{\label{fig:ellipticity_sm}Quadrature drive robustness
  (Sweep~10).  Process-level average gate fidelity $\Favg$ as a
  function of amplitude imbalance $r$ and phase error
  $\delta\varphi$, 200 trajectories per point.
  The solid black contour marks
  $\Favg = 99.5\%$; dashed and dotted contours indicate $99.0\%$
  and $99.9\%$, respectively.
  The gate maintains $\Favg \geq 99.5\%$ over the central plateau
  ($|r - 1| \lesssim 4\%$, $|\delta\varphi| \lesssim 4^\circ$);
  the broader $6\%$--$6^\circ$ box has a minimum sampled value
  of $99.32\%$.}
\end{figure}

\subsection{Floquet multitone validation (Sweep~12)}
\label{sec:floquet_multitone_sm}

To test the rotating-wave and multitone assumptions directly, we added
a Floquet-corrected lab-frame validation beyond the ten core
parametric sweeps.  It is retained in the validation ledger as
Sweep~12 rather than counted among Sweeps~1--10.  The calculation
propagates the Regime-A unitary using the same two $\Lambda$-leg
envelopes, the resonant DQ SATD envelope, the off-resonant DQ Stark
term, and the dynamic DC compensation waveform.  The leading
counter-rotating Bloch--Siegert/Floquet shifts from the
single-quantum tones and the DQ SATD tone are retained.  We also apply
a first-order transducer bandwidth filter, colored phase noise, and
arm-amplitude imbalance.
The reference is the nominal rotating-frame unitary, so the values in
Table~\ref{tab:floquet_multitone_sm} quantify the additional error
from effects omitted in the RWA simulation.

\begin{table*}[t]
\caption{\label{tab:floquet_multitone_sm}Floquet/lab-frame multitone
validation at $\tgate=\SI{1.833}{\micro\second}$ and
$\Omm=\SI{2.22}{\mega\hertz}$.  $F_{\rm avg}$ is computed relative to
the nominal rotating-frame unitary on the computational subspace.}
\begin{ruledtabular}
\begin{tabular}{lcc}
  Case & $F_{\rm avg}$ vs.\ RWA & Added leakage \\
  \hline
  Counter-rotating terms only & $0.999976$ & $3.7\times10^{-6}$ \\
  \SI{200}{\mega\hertz} BW, $0.2^\circ$ phase noise, $0.2\%$ imbalance
    & $0.999899$ & $6.0\times10^{-5}$ \\
  \SI{50}{\mega\hertz} BW, $1^\circ$ phase noise, $1\%$ imbalance
    & $0.998247$ & $1.4\times10^{-3}$ \\
  Unpredistorted $Q=10^4$ HBAR BW ($\SI{0.287}{\mega\hertz}$)
    & $0.414075$ & $4.2\times10^{-1}$ \\
\end{tabular}
\end{ruledtabular}
\end{table*}

The pure counter-rotating correction is therefore small
($1-F_{\rm avg}=2.4\times10^{-5}$), and the high-strain benchmark
remains within $1.0\times10^{-4}$ of the RWA unitary when the control
envelopes are delivered with \SI{200}{\mega\hertz} bandwidth and modest
phase/amplitude errors.  The bandwidth scan gives
$1-F_{\rm avg}=4.2\times10^{-5}$, $1.7\times10^{-4}$, and
$8.0\times10^{-4}$ for ideal controls filtered at 200, 100, and
\SI{50}{\mega\hertz}, respectively.  Conversely, a bare
$Q=10^4$ GHz-HBAR linewidth cannot track the \SI{1.833}{\micro\second}
benchmark envelope without predistortion or a slower waveform.  This is
why the HBAR Regime-B projection is separated from the high-strain
benchmark and why HBAR-specific envelope tracking is treated as an
implementation requirement rather than a group-theoretic consequence.
For the decoder-facing erasure analysis, this same residual is treated
as an explicit contribution to the imposed $p_{XY}$ floor.  Thus the
HBAR check tests transferability of the control package; it does not
replace the Regime-A open-system channel extraction.

\subsection{Matched-environment 3C-SiC C3v benchmark}
\label{sec:sic_regime_a_sm}
\label{app:sic_regime_a_sm}

The 3C-SiC calculation is a controlled platform-design check, not a
measured-device forecast.  We keep the Regime-A SATD echo-lune
trajectory, surface-bath traces, $T_1$, $T_{1\rho}$, erasure
bookkeeping, and trajectory seeds fixed at the NV channel-benchmark values,
and change only the $C_{3v}$ spin-strain parameters entering the DQ
Stark scale.  This isolates the effect of smaller
$|h_{16}/h_{26}|$.
Figure~\ref{fig:sic_regime_a} summarizes the matched-environment
comparison.

\begin{figure*}[t]
  \centering
  \includegraphics[width=0.86\textwidth]{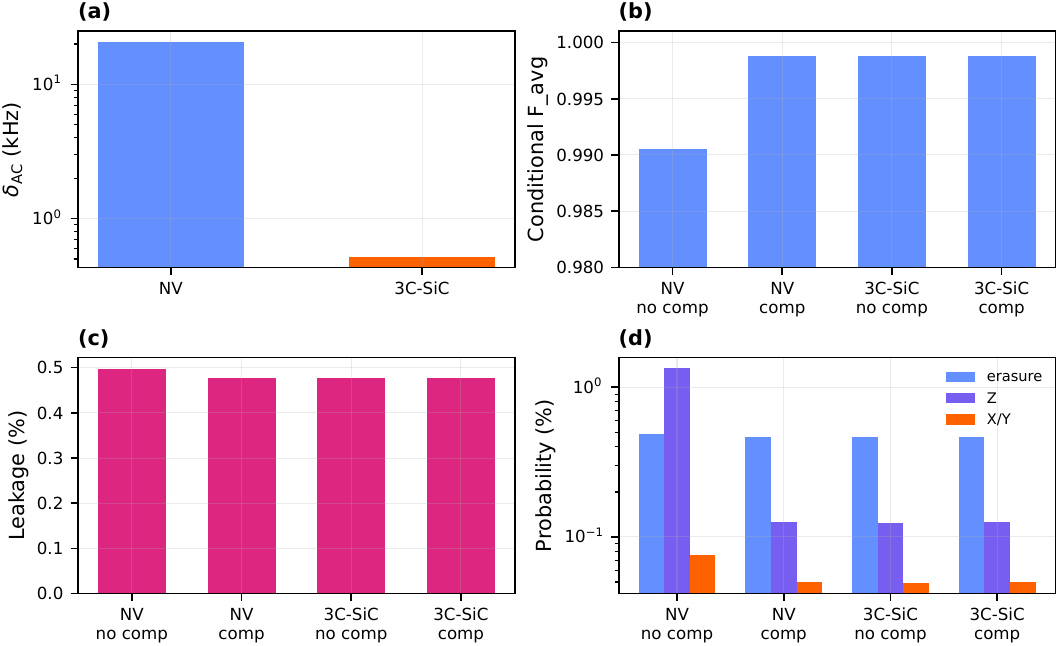}
  \caption{\label{fig:sic_regime_a}%
  Matched-environment 3C-SiC $C_{3v}$ comparison.  (a)~At fixed
  $\Omm=\SI{2.22}{\mega\hertz}$, the assumed 3C-SiC divacancy
  parameters reduce the parasitic DQ Stark scale by $39.76\times$.
  (b--d)~Using matched surface-bath traces and relaxation settings,
  uncompensated 3C-SiC lies on the compensated baseline, whereas
  uncompensated NV shows a visible Stark penalty.}
\end{figure*}

The platform inputs are:
\begin{center}
\scriptsize
\begin{tabular}{lcccc}
  Platform & $|h_{16}/h_{26}|$ & $\epsilon_{\rm SQ}$ &
  $\epsilon_{\rm DQ\mbox{-}CD}$ & $\delta_{\mathrm{AC}}$ \\
  \hline
  NV reference & $6.947$ & $7.845\times10^{-4}$ &
  $8.718\times10^{-5}$ & $20.718$ kHz \\
  3C-SiC assumed & $0.750$ & $1.233\times10^{-3}$ &
  $1.270\times10^{-3}$ & $0.521$ kHz
\end{tabular}
\end{center}

The full open-system rows use 96 matched stochastic traces and a
1600-step midpoint propagator:
\begin{table*}[t]
\caption{\label{tab:sic_regime_a_sm}Matched-environment
3C-SiC matched-environment benchmark.  $F_{\mathrm{cond}}$ is conditioned
on global survival; $F_{\mathrm{eff}}$ counts leakage as loss.}
\begin{ruledtabular}
\begin{tabular}{lcccccc}
  Scenario & $F_{\mathrm{cond}}$ & $F_{\mathrm{eff}}$ & Leakage &
  $p_{\mathrm{era}}$ & $p_Z$ & $p_{XY}$ \\
  \hline
  NV no comp & $0.990568$ & $0.985641$ & $0.4974\%$ &
  $0.4850\%$ & $1.3394\%$ & $0.0753\%$ \\
  NV comp & $0.998826$ & $0.994063$ & $0.4769\%$ &
  $0.4649\%$ & $0.1261\%$ & $0.0500\%$ \\
  3C-SiC no comp & $0.998846$ & $0.994079$ & $0.4773\%$ &
  $0.4654\%$ & $0.1237\%$ & $0.0493\%$ \\
  3C-SiC comp & $0.998826$ & $0.994063$ & $0.4769\%$ &
  $0.4649\%$ & $0.1261\%$ & $0.0500\%$
\end{tabular}
\end{ruledtabular}
\end{table*}

The differential comparison makes the Stark effect explicit:
\begin{center}
\footnotesize
\begin{tabular}{lcc}
  Comparison & $\Delta F_{\mathrm{cond}}$ & $\Delta p_Z$ \\
  \hline
  NV no-comp $-$ NV comp & $-8.258\times10^{-3}$ & $+1.2133$ pp \\
  3C no-comp $-$ 3C comp & $+2.046\times10^{-5}$ & $-0.0024$ pp \\
  3C no-comp $-$ NV no-comp & $+8.278\times10^{-3}$ & $-1.2158$ pp
\end{tabular}
\end{center}
The coherent isolation rows, with surface bath and Lindblad collapse
turned off, show the same pattern:
\begin{center}
\footnotesize
\begin{tabular}{lccc}
  Scenario & $F_{\mathrm{cond}}$ & $F_{\mathrm{eff}}$ & Leakage \\
  \hline
  NV Stark & $0.99151421$ & $0.98752089$ & $0.4027\%$ \\
  3C-SiC Stark & $0.99997888$ & $0.99615165$ & $0.3827\%$ \\
  3C-SiC no Stark & $0.99996251$ & $0.99614006$ & $0.3823\%$
\end{tabular}
\end{center}

The correct conclusion is a platform-design rule: $C_{3v}$ hosts with
small $|h_{16}/h_{26}|$ can preserve the same $\Lambda$-sector
symmetry-to-channel mechanism while suppressing the parasitic DQ Stark
channel before active compensation.  The calculation does not prove a
deployable 3C-SiC architecture; it still requires platform-specific
surface noise, relaxation, strain limits, mechanical mode shapes,
optomechanical coupling, and auxiliary-state detection.
The $p_Z$ and $p_{XY}$ values in Table~\ref{tab:sic_regime_a_sm} are
conditional Pauli-twirl diagnostics of this compact projected lossy
superoperator and should not be substituted for the final Regime-A
code-capacity channel rates used in Appendix~\ref{app:qec_montecarlo}.

\subsection{Error reduction summary}
Figure~\ref{fig:error_reduction_sm} compares the baseline DRAG
protocol with the composite NGQC + SATD protocol across the full
gate-time range.
The high-strain benchmark composite protocol achieves
$\Favg = 99.88\%$ at $\tgate = \SI{1.833}{\micro\second}$, a
substantial error reduction over the baseline ceiling, confirming that
the three mechanisms
(singularity neutralization, exact counter-diabatic driving, and
gate-time compression) act in concert.

\begin{figure}[ht]
  \centering
  \includegraphics[width=\columnwidth]{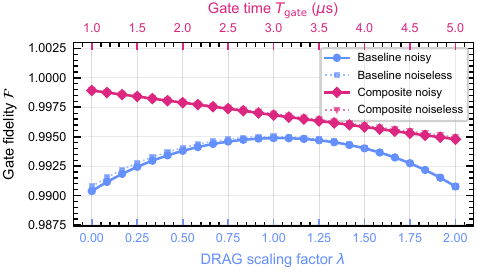}
  \caption{\label{fig:error_reduction_sm}Error reduction summary.
  Fidelity comparison of the baseline DRAG protocol (blue) and the
  composite NGQC + SATD protocol (pink) as a function of
  DRAG scaling factor $\lambda$ (bottom axis) and gate time
  $\tgate$ (top axis).
  The high-strain benchmark composite achieves
  $\Favg = 99.88\%$ at $\tgate = \SI{1.833}{\micro\second}$, a
  substantial error reduction over the baseline ceiling.}
\end{figure}

\section{Monte Carlo QEC simulation methods and full data}
\label{app:qec_montecarlo}

This section details the Monte Carlo QEC simulations summarized in
Sec.~VI\,G of the main text.  Each data qubit independently
experiences four error channels per round, with physical error rates
drawn from the Regime-A channel extraction in Sec.~VI\,F and the error
budget in Table~VII of the main text.  All simulations use $5000$
independent trials per data point with minimum-weight perfect matching
(MWPM) decoding via PyMatching~2~\cite{higgott2023_pymatching}.
The microscopic sector-to-channel origin of the extracted noise model
is verified independently in Appendix~\ref{app:noise_sectors} by
sector-injection diagnostics; this appendix uses the extracted
Regime-A channel as input and does not recompute microscopic dynamics.

\subsection{Noise model}

The Regime-A/SATD noise model is:
\begin{center}
\footnotesize
\setlength{\tabcolsep}{2pt}
\renewcommand{\arraystretch}{1.08}
\newcommand{\qectabcell}[2]{\parbox[t]{#1\columnwidth}{#2}}
\begin{tabular}{@{}l@{\hspace{0.5em}}l@{\hspace{0.5em}}l@{}}
  \qectabcell{0.20}{Channel} &
  \qectabcell{0.25}{Probability} &
  \qectabcell{0.47}{Origin} \\
  \hline
  \qectabcell{0.20}{Erasure} &
  \qectabcell{0.25}{$p_{\mathrm{era}} = 0.47\%$} &
  \qectabcell{0.47}{Detected $|0\rangle$ leakage:\\
  noise-mediated CD mismatch + $T_1$} \\
  \qectabcell{0.20}{$Z$ dephasing} &
  \qectabcell{0.25}{$p_Z = 0.168\%$} &
  \qectabcell{0.47}{$T_{1\rho}$ + $A_2$-sector residual} \\
  \qectabcell{0.20}{Depolarizing} &
  \qectabcell{0.25}{$p_{\mathrm{dep}} = 0.012\%$} &
  \qectabcell{0.47}{Undetected leakage} \\
  \qectabcell{0.20}{$X/Y$ flip} &
  \qectabcell{0.25}{numerical floor in the nominal extraction} &
  \qectabcell{0.47}{Echo-suppressed $E$-sector} \\
\end{tabular}
\end{center}
All rates are uniformly scaled by a common factor
$s \in \{1,\,2,\,5,\,10,\,20,\,50,\,75,\,100\}$ to probe
behavior near and above threshold.
Erasure events are injected by replacing the data qubit with a
random Pauli ($X$, $Y$, or $Z$ with equal probability) and
setting the decoder edge weight to zero (effective error
probability $0.5$), corresponding to the known-location
erasure model~\cite{stace2010_pra}.

\textit{Scope and scheduled stress diagnostic.}---The code-capacity
simulations isolate the response of CSS and XZZX decoders to the
extracted Regime-A biased-erasure channel.  Because the nominal
transverse component is a model-extracted numerical floor rather than
a symmetry constant, we also promote finite $p_{XY}$ floors to an
explicit decoder stress axis.
To test the first circuit layer without introducing a hardware-specific
optical/mechanical layout, we run a scheduled two-sector
detector-model diagnostic.  This diagnostic includes repeated syndrome
rounds, erasure-aware weights, measurement/reset faults, explicit
transverse $X/Y$ faults, finite erasure-detection efficiency, leakage
persistence, delayed flags, and local crosstalk.
The nominal code-capacity point gives the $d=9$ / $64\%$ model-channel
proxy; finite transverse floors $p_{XY}=10^{-6}$--$10^{-4}$ move the
estimate to the $d=11$ / $46.2\%$ envelope; $p_{XY}=10^{-3}$ moves the
code-capacity estimate to $d=13$ / $24.9\%$ and the scheduled
detector-model proxy to $d=11$; combined missed-erasure and
finite-$p_{XY}$ stress remains in the $d=11/d=13$ envelope; and
$2\times$ local crosstalk remains above target even at $d=15$.
Thus the scheduled diagnostic supports the biased-erasure mechanism
and identifies crosstalk as the primary hardware-specific target for a
future calibrated detector-error model.
MWPM is not the optimal decoder for biased noise; bias-tailored
decoders (e.g.\ union-find weighted, tensor-network) would further
improve the XZZX advantage, so the numerical distances reported here
should be read as model-channel and stress-diagnostic estimates rather
than full architecture-level thresholds.

\begin{table*}[t]
\caption{\label{tab:qec_scheduled_stress}
Compact scheduled two-sector XZZX stress diagnostic for the extracted
Regime-A biased-erasure channel.  The table summarizes the first
circuit-layer validation envelope used in the main text.}
\begin{ruledtabular}
\begin{tabular}{lll}
Diagnostic & Result & Interpretation \\
\hline
Code-capacity nominal XZZX &
$d=9$, $64\%$ saving &
Nominal model-channel estimate \\
Scheduled nominal, flags &
$p_L(d=9)=1.00\times10^{-3}$ &
First circuit layer preserves nominal proxy \\
$p_{XY}=10^{-3}$ &
Code-capacity $d=13$; scheduled proxy $d=11$ &
High transverse floor sets stress envelope \\
$\eta_{\rm det}=0.9$ plus finite $p_{XY}$ &
$d=11/d=13$ regime &
Missed erasures reduce margin \\
$2\times$ crosstalk &
$p_L(d=15)=2.6\times10^{-3}$ &
Unresolved hardware-specific risk \\
\end{tabular}
\end{ruledtabular}
\end{table*}

\subsection{Code constructions}

\paragraph{CSS toric code.}
Standard $d \times d$ periodic lattice with $X$-stabilisers on
faces and $Z$-stabilisers on vertices, encoding one logical
qubit in $2d^2$ data qubits.
Distances tested: $d \in \{3,\,5,\,7,\,9,\,11\}$.

\paragraph{XZZX toric code.}
Same lattice with Hadamard rotations applied to vertical edges,
converting each $XXXX$ face stabiliser to $XZZX$ and each
$ZZZZ$ vertex stabiliser to
$ZXXZ$~\cite{bonilla_ataides2021_xzzx}.
Distances tested: $d \in \{3,\,5,\,7,\,9,\,11\}$.

\paragraph{XZZX rectangular planar code.}
Open-boundary rotated planar code with column distance $d_c$
(protecting the dominant $Z$-error direction) and row distance
$d_r$ (protecting the rare $X$-error direction).
Tested configurations: $d_c \in \{7,\,9,\,11\}$,
$d_r \in \{3,\,5,\,7,\,9\}$, at scales
$s \in \{1,\,10,\,50,\,100\}$.
Face-type-aware Hadamard assignments ensure every column of the
check matrix has exactly $2$ (boundary) or $4$ (bulk)
nonzero entries.

Logical error is detected by computing the parity of the
recovery operator along a nontrivial homology cycle (toric)
or logical representative (planar).

\subsection{Full data: toric codes}

Table~\ref{tab:qec_css_toric} gives CSS toric code results.
$p_L$ increases with $d$ for $s \geq 5$, confirming
above-threshold operation.

\begin{table*}[t]
\caption{\label{tab:qec_css_toric}CSS toric code logical error
rate $p_L$ (failures / 5000 trials) as a function of code
distance $d$ and noise scale factor $s$.}
\begin{ruledtabular}
\begin{tabular}{lcccccccc}
  $d$ & $s{=}1$ & $s{=}2$ & $s{=}5$ & $s{=}10$ & $s{=}20$
  & $s{=}50$ & $s{=}75$ & $s{=}100$ \\
  \hline
  $3$  & $4.0\!\times\!10^{-4}$ & $8.0\!\times\!10^{-4}$
       & $4.4\!\times\!10^{-3}$ & $1.42\!\times\!10^{-2}$
       & $7.38\!\times\!10^{-2}$ & $0.370$ & $0.617$ & $0.778$ \\
  $5$  & $0$      & $8.0\!\times\!10^{-4}$
       & $6.6\!\times\!10^{-3}$ & $1.48\!\times\!10^{-2}$
       & $8.52\!\times\!10^{-2}$ & $0.476$ & $0.699$ & $0.836$ \\
  $7$  & $0$      & $6.0\!\times\!10^{-4}$
       & $6.6\!\times\!10^{-3}$ & $2.62\!\times\!10^{-2}$
       & $0.116$ & $0.564$ & $0.736$ & $0.840$ \\
  $9$  & $0$      & $1.4\!\times\!10^{-3}$
       & $5.6\!\times\!10^{-3}$ & $3.24\!\times\!10^{-2}$
       & $0.143$ & $0.624$ & $0.747$ & $0.843$ \\
  $11$ & $6.0\!\times\!10^{-4}$ & $2.2\!\times\!10^{-3}$
       & $1.06\!\times\!10^{-2}$ & $4.06\!\times\!10^{-2}$
       & $0.177$ & $0.671$ & $0.739$ & $0.841$ \\
\end{tabular}
\end{ruledtabular}
\end{table*}

Table~\ref{tab:qec_xzzx_toric} gives XZZX toric code results.
$p_L$ decreases monotonically with $d$ at every tested scale
including $s = 100$, confirming deeply sub-threshold operation.

\begin{table*}[t]
\caption{\label{tab:qec_xzzx_toric}XZZX toric code logical error
rate $p_L$ (failures / 5000 trials) as a function of code
distance $d$ and noise scale factor $s$.}
\begin{ruledtabular}
\begin{tabular}{lcccccccc}
  $d$ & $s{=}1$ & $s{=}2$ & $s{=}5$ & $s{=}10$ & $s{=}20$
  & $s{=}50$ & $s{=}75$ & $s{=}100$ \\
  \hline
  $3$  & $2.0\!\times\!10^{-4}$ & $0$
       & $2.8\!\times\!10^{-3}$ & $6.8\!\times\!10^{-3}$
       & $2.62\!\times\!10^{-2}$ & $0.149$ & $0.303$ & $0.471$ \\
  $5$  & $0$      & $0$
       & $4.0\!\times\!10^{-4}$ & $2.0\!\times\!10^{-4}$
       & $3.0\!\times\!10^{-3}$ & $4.98\!\times\!10^{-2}$
       & $0.146$ & $0.300$ \\
  $7$  & $0$      & $0$      & $0$      & $0$
       & $4.0\!\times\!10^{-4}$ & $2.42\!\times\!10^{-2}$
       & $7.48\!\times\!10^{-2}$ & $0.194$ \\
  $9$  & $0$      & $0$      & $0$      & $0$
       & $0$      & $7.0\!\times\!10^{-3}$
       & $4.40\!\times\!10^{-2}$ & $0.137$ \\
  $11$ & $0$      & $0$      & $0$      & $0$
       & $0$      & $3.4\!\times\!10^{-3}$
       & $2.34\!\times\!10^{-2}$ & $9.32\!\times\!10^{-2}$ \\
\end{tabular}
\end{ruledtabular}
\end{table*}

The CSS-to-XZZX advantage ratio at $d = 11$ grows from a
factor of $\sim\!200$ at $s = 50$ ($p_L^{\mathrm{CSS}} /
p_L^{\mathrm{XZZX}} = 0.671 / 0.0034 = 197$) to $9.0$ at
$s = 100$.
The advantage increases with $d$ at every fixed scale,
confirming that the two codes lie on opposite sides of a
threshold boundary.

\subsection{Full data: rectangular XZZX planar codes}

Table~\ref{tab:qec_rect} gives rectangular XZZX planar code
results.
Every asymmetric code ($d_r < d_c$) achieves
\emph{zero} logical failures in 5000 trials at all tested
scales including $s = 100$.
Only the square codes ($7 \times 7$ and $9 \times 9$)
exhibit nonzero failures, and only at $s \geq 50$.
The Wilson score $95\%$ confidence-level upper bound for
a zero-failure result is
$p_L < 7.7 \times 10^{-4}$.
Thus the rectangular-code data support only a finite-statistics
code-capacity bound at the $10^{-4}$ level; they are not extrapolated
to $p_L=10^{-10}$ and are not used for the headline $64\%$ square-code
overhead estimate.

\begin{table*}[t]
\caption{\label{tab:qec_rect}Rectangular XZZX planar code
logical error rate $p_L$ (failures / 5000 trials).
All asymmetric configurations ($d_r < d_c$) show zero
failures at all scales; only the square codes ($d_r = d_c$,
bold) have nonzero failures at high scale.}
\begin{ruledtabular}
\begin{tabular}{lccccc}
  $d_r \!\times\! d_c$ & Qubits & $s{=}1$ & $s{=}10$
  & $s{=}50$ & $s{=}100$ \\
  \hline
  $3 \times 7$   & $21$  & $0$ & $0$ & $0$ & $0$ \\
  $5 \times 7$   & $35$  & $0$ & $0$ & $0$ & $0$ \\
  $\mathbf{7 \times 7}$ & $\mathbf{49}$  & $0$ & $0$
  & $\mathbf{7.8\!\times\!10^{-3}}$ & $\mathbf{8.84\!\times\!10^{-2}}$ \\
  $3 \times 9$   & $27$  & $0$ & $0$ & $0$ & $0$ \\
  $5 \times 9$   & $45$  & $0$ & $0$ & $0$ & $0$ \\
  $7 \times 9$   & $63$  & $0$ & $0$ & $0$ & $0$ \\
  $\mathbf{9 \times 9}$ & $\mathbf{81}$  & $0$ & $0$
  & $\mathbf{4.6\!\times\!10^{-3}}$ & $\mathbf{6.70\!\times\!10^{-2}}$ \\
  $3 \times 11$  & $33$  & $0$ & $0$ & $0$ & $0$ \\
  $5 \times 11$  & $55$  & $0$ & $0$ & $0$ & $0$ \\
  $7 \times 11$  & $77$  & $0$ & $0$ & $0$ & $0$ \\
  $9 \times 11$  & $99$  & $0$ & $0$ & $0$ & $0$ \\
\end{tabular}
\end{ruledtabular}
\end{table*}
\clearpage

\subsection{Conservative erasure-only baseline}

As a conservative comparison, we also analyze the same leakage
conversion mechanism without exploiting the strong $Z$ bias of the
Regime-A channel.  All non-erasure faults are grouped into an
isotropic depolarizing residual, so the only structured resource
retained by the decoder is known-location erasure.
The resulting erasure-only baseline is shown in
Fig.~\ref{fig:erasure_baseline}.

\begin{figure}[ht]
  \centering
  \includegraphics[width=\columnwidth]{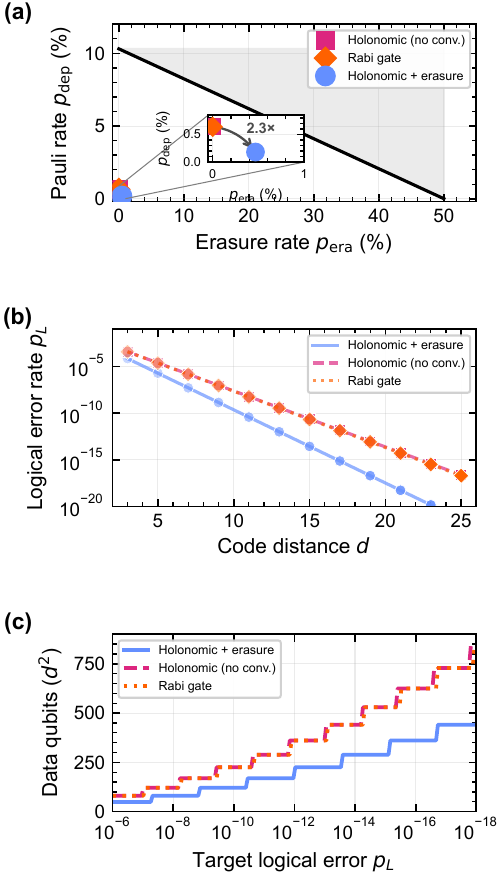}
  \caption{\label{fig:erasure_baseline}%
  Conservative erasure-only analysis (isotropic depolarizing
  non-erasure residual, no Z-bias exploitation).
  \textbf{(a)}~Code-capacity threshold boundary in the
  $p_{\mathrm{era}}$--$p_{\mathrm{dep}}$ plane.
  Blue circle: holonomic gate with erasure conversion
  ($p_{\mathrm{eff}} = 0.28\%$); pink square: without
  conversion; orange diamond: Rabi gate.
  \textbf{(b)}~$p_L$ versus code distance $d$
  for the conservative erasure-only model of Sec.~VI\,F.
  \textbf{(c)}~Physical qubit overhead versus target $p_L$.
  This baseline is superseded by the bias-aware analysis
  of Sec.~VI\,G and Fig.~6 of the
  main text.}
\end{figure}

\subsection{Transverse-floor validation envelope}
\label{app:qec_sensitivity}

The square-code overhead quoted in the main text is a nominal
code-capacity estimate for the extracted Regime-A channel.  Because
the nominal $p_{XY}$ component is a model-extracted numerical floor
rather than a group-theoretic constant, we propagate explicit $X/Y$
floors and reduced erasure-detection efficiency through the same
fit-extrapolated distance model used for Table~XI of the main text.
This diagnostic is not a circuit-level simulation; it asks how the
nominal $d=9$ XZZX estimate moves when transverse floors or
missed-erasure channels are added.

The extrapolation is calibrated to reproduce the two main-text
code-capacity anchors: the nominal XZZX square-code point
($d=9$ at $p_L=10^{-10}$) and the conservative erasure-only CSS
baseline ($d=11$).  A finite $p_{XY}$ floor interpolates the effective
scaling from the strongly biased XZZX limit toward the erasure-only
baseline, while missed erasures are treated as unheralded residual
faults, consistent with surface-code analyses of imperfect erasure
checks~\cite{chang2024_imperfect_erasure}.  Figure~\ref{fig:qec_sensitivity} shows the resulting
required square-code distance and overhead saving relative to the
Rabi CSS baseline ($d=15$).

At the nominal detection efficiency
$\eta_{\rm det}=97.5\%$, the fit-extrapolated XZZX distance is
$d=9$ for the nominal extracted floor.  Adding floors of
$p_{XY}=10^{-6}$--$10^{-4}$ increases the estimate to $d=11$,
giving a $46.2\%$ saving relative to the $d=15$ Rabi/CSS baseline,
while $p_{XY}=10^{-3}$ gives $d=13$ and a $24.9\%$ saving.
Reducing $\eta_{\rm det}$ from $97.5\%$ to $95\%$ similarly moves the
nominal-floor point from $d=9$ to $d=11$.  Thus the XZZX advantage
degrades continuously under finite transverse floors; the headline
$64\%$ reduction should be read as a code-capacity result for the
nominal extracted channel.

\begin{figure}[ht]
  \centering
  \includegraphics[width=\columnwidth]{Figure_TransverseFloor_Validation.pdf}
  \caption{\label{fig:qec_sensitivity}%
  Transverse-floor validation envelope for the square-code XZZX
  overhead estimate.  Each cell gives the required fit-extrapolated
  XZZX distance for $p_L=10^{-10}$ and, underneath, the corresponding
  data-qubit saving relative to the Rabi CSS baseline with $d=15$.
  The axes scan the imposed transverse floor $p_{XY}^{\rm floor}$ and
  erasure-detection efficiency $\eta_{\rm det}$.
  The plot uses the same code-capacity scaling model as
  the main-text QEC table and is intended as a model-channel
  sensitivity test, not a circuit-level threshold calculation.}
\end{figure}

\subsection{Perturbative physical-error-to-\texorpdfstring{$p_{XY}$}{pXY} map}
\label{app:qec_physical_pxy_map}

To connect the decoder stress axis to calibration knobs, we construct
a compact perturbative transverse-floor map.  Each physical
imperfection is represented by a normalized small parameter
$\epsilon$ and converted to an effective transverse probability
$p_{XY}=p_{XY}^{\rm nom}+c\epsilon^2$.  This captures the expected
``first-order transverse amplitude, second-order probability''
scaling and defines calibration targets for the decoder envelope.  It
is a perturbative map, not a microscopic qutrit/open-system channel
extraction.

The map includes residual $E$-sector strain noise, quadrature
amplitude imbalance, quadrature phase error, transverse magnetic-field
misalignment, transverse hyperfine components, SATD amplitude and
phase errors, and HBAR envelope-filtering residuals.
Figure~\ref{fig:qec_physical_pxy_map} shows the resulting effective
$p_{XY}$ values.  For example, a $3\%$ quadrature amplitude imbalance,
a $3^\circ$ quadrature phase error, a $1\%$ SATD amplitude error, or
a normalized HBAR filtering residual of $10^{-2}$ reaches the
$10^{-4}$ transverse-floor envelope; stronger residuals approach the
$10^{-3}$ high-floor stress regime.

\begin{figure}[ht]
  \centering
  \includegraphics[width=\columnwidth]{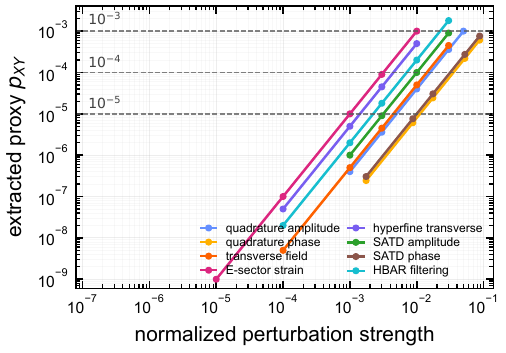}
  \caption{\label{fig:qec_physical_pxy_map}%
  Perturbative physical-error-to-$p_{XY}$ map.
  Representative transverse-error mechanisms are converted into an
  effective decoder stress floor using
  $p_{XY}=p_{XY}^{\rm nom}+c\epsilon^2$.
  Horizontal guide lines mark the $10^{-5}$, $10^{-4}$, and
  $10^{-3}$ transverse-floor envelopes used in the decoder analysis.
  The plot should be read as a calibration-target map rather than a
  microscopic channel extraction.}
\end{figure}

\FloatBarrier

\section{Single-shot bright-state compiler details}
\label{app:single_shot}

This appendix collects the details that are intentionally kept out of
the main text.  They support the Regime-D universal-gate validation and
are not used in the SATD Regime-A XZZX overhead estimate.

\subsection{Holonomy derivation}

Use the cycle-frequency Hamiltonian $K=H/h$ on
$\mathcal H_\Lambda=\mathrm{span}\{\ket{0_L},\ket{1_L},\ket a\}$:
\begin{align}
K(t)=&\;
\Delta(t)\ket a\bra a
\nonumber\\
&+\frac{1}{2}
\left[
\Omega_0(t)\ket a\bra{0_L}
+\Omega_1(t)\ket a\bra{1_L}
+{\rm h.c.}
\right].
\end{align}
The proportional-control command is
\begin{align}
\Omega_0(t)&=\Omega(t)\cos\alpha\cos\frac{\vartheta}{2},\\
\Omega_1(t)&=\Omega(t)\cos\alpha e^{-i\phi}
              \sin\frac{\vartheta}{2},\\
\Delta(t)&=\Omega(t)\sin\alpha .
\end{align}
With
\begin{align}
\ket{b_{\mathbf n}}&=
\cos\frac{\vartheta}{2}\ket{0_L}
+e^{i\phi}\sin\frac{\vartheta}{2}\ket{1_L},\\
\ket{d_{\mathbf n}}&=
\sin\frac{\vartheta}{2}\ket{0_L}
-e^{i\phi}\cos\frac{\vartheta}{2}\ket{1_L},
\end{align}
the Hamiltonian factorizes as
\begin{equation}
K(t)=\Omega(t)M_{\alpha,\mathbf n},
\end{equation}
where
\begin{equation}
M_{\alpha,\mathbf n}
=\sin\alpha\ket a\bra a
+\frac{\cos\alpha}{2}
\left(\ket a\bra{b_{\mathbf n}}+
\ket{b_{\mathbf n}}\bra a\right).
\end{equation}
In the ordered basis
$(\ket{b_{\mathbf n}},\ket a)$,
\begin{equation}
M_\alpha=
\begin{pmatrix}
0 & \cos\alpha/2\\
\cos\alpha/2 & \sin\alpha
\end{pmatrix},
\qquad
\lambda_\pm=\frac{\sin\alpha\pm1}{2}.
\end{equation}
The dark state is exactly decoupled and
$[K(t),K(t')]=0$.  When
\begin{equation}
\int_0^T\Omega(t)\,dt=1,
\end{equation}
the bright/auxiliary block evolves as
$\exp[-i2\pi M_\alpha]=e^{-i\pi(1+\sin\alpha)}I$.
The induced logical gate is therefore
\begin{equation}
U_L(\mathbf n,\gamma)=
\ket{d_{\mathbf n}}\bra{d_{\mathbf n}}
+e^{-i\gamma}\ket{b_{\mathbf n}}\bra{b_{\mathbf n}},
\qquad
\gamma=\pi(1+\sin\alpha),
\end{equation}
or, up to a global phase,
$U_L\doteq\exp[-i\gamma\,\mathbf n\cdot\boldsymbol\sigma/2]$.

\subsection{Non-Abelian diagnostic}

The commutator follows directly from the Pauli representation:
\begin{equation}
[U(\mathbf n_1,\gamma_1),U(\mathbf n_2,\gamma_2)]
=
-2i\sin\frac{\gamma_1}{2}\sin\frac{\gamma_2}{2}
(\mathbf n_1\times\mathbf n_2)\cdot\boldsymbol\sigma .
\end{equation}
Thus nonparallel nontrivial pulses do not commute.  The numerical
diagnostic applies $X_{\pi/2}Z_{\pi/2}$ and
$Z_{\pi/2}X_{\pi/2}$ to the same test states.  Each ordered sequence
has unit fidelity to its own target in the ideal compiler, while the
two composed unitaries have average gate fidelity $0.5$ relative to
one another, Frobenius commutator norm $\sqrt2$, and operator
commutator norm $1$.

\subsection{Transfer matrix and detuning channel}

The exact theorem requires the physical controls to preserve a fixed
direction in command space,
\begin{equation}
\mathbf u(t)=(\Omega_0,\Omega_1,\Delta)^T=\Omega(t)\mathbf u_0 .
\end{equation}
We model the acoustic and detuning actuator by a baseband transfer
matrix,
\begin{equation}
\mathbf u_{\rm act}(\omega)=G(\omega)\mathbf u_{\rm cmd}(\omega).
\end{equation}
Common-envelope filtering is harmless when
\begin{equation}
G(\omega)\mathbf u_0\simeq g(\omega)\mathbf u_0
\end{equation}
over the pulse bandwidth.  Relative delay, arm imbalance, phase skew,
reflections, or detuning-channel mismatch rotate $\mathbf u_0$ and
therefore produce gate errors.  The compact design criterion used in
the Regime-D validation is $B\tgate\gtrsim10$, where $B$ is the usable
calibrated waveform bandwidth.  The baseline detuning implementation
is IQ-programmed rotating-frame detuning synchronized to the two
$\Lambda$-leg envelopes; Stark or DC detuning can be substituted only
after being included as the third measured waveform channel.

\subsection{Scope relative to the SATD stack}

The SATD echo-lune stack and the single-shot compiler answer different
questions.  SATD uses the composite lune, phase echo, and resonant
lower-doublet counterdiabatic actuator to engineer the strongly
biased-erasure Regime-A channel.  Single-shot uses one cyclic
bright-state pulse with two $\Lambda$ legs plus scalar detuning to
generate universal $\mathrm{SU}(2)$ gates.  Its residual channel is
moderately biased and gate-dependent, so it is not used for
architecture-level fault-tolerance claims.

\subsection{SiV single-shot bright-state benchmark (Regime E)}
\label{app:siv_single_shot}

Regime~E applies the same bright-state compiler to the SiV orbital
$\Lambda$ manifold.
The benchmark uses only the two orbit-strain $\Lambda$ legs and a
synchronized scalar detuning; no lower-doublet SATD actuator is
assumed.
The orbital $T_1$ model sends auxiliary-state population equally into
the two logical states, producing in-subspace depolarization rather
than erasure-convertible leakage.
Orbital dephasing, charge noise, and actuator transfer-function
errors are not included.
The results should therefore be read as orbital-$T_1$-limited
universal-control benchmarks, not as a SiV biased-erasure or XZZX
channel claim.

Table~\ref{tab:siv_single_shot_sensitivity_sm} gives the full
sensitivity table generated by
\texttt{code/siv\_single\_shot.py}.
The primary millikelvin point is the
$\Omega_{\mathrm{peak}}=\SI{300}{\mega\hertz}$ row used in the main
text.
The \SI{100}{\mega\hertz} and \SI{500}{\mega\hertz} rows test the
expected speed dependence of the orbital-$T_1$ error, while the
\SI{4}{\kelvin} row uses the measured
$T_{1,\mathrm{orb}}=\SI{40}{\nano\second}$ temperature-stress value.

\begin{table*}[t!]
\caption{\label{tab:siv_single_shot_sensitivity_sm}
Supplemental Regime~E SiV single-shot sensitivity table.
All rows use the bright-state compiler with two orbit-strain
$\Lambda$ legs plus synchronized scalar detuning and no lower-doublet
SATD actuator.}
\begin{ruledtabular}
\begin{tabular}{llccccc}
Scenario & Gate & $\Omega_{\mathrm{peak}}$ & $\tgate$ &
$F_{\mathrm{eff}}$ & $F_{\mathrm{cond}}$ & Leakage \\
\colrule
mK, primary & $Z_\pi$ & \SI{300}{\mega\hertz} &
\SI{6.667}{\nano\second} & $99.5221\%$ & $99.7488\%$ &
$2.275\times10^{-3}$ \\
mK, primary & $X_\pi$ & \SI{300}{\mega\hertz} &
\SI{6.667}{\nano\second} & $99.5221\%$ & $99.7488\%$ &
$2.275\times10^{-3}$ \\
mK, primary & $X_{\pi/2}$ & \SI{300}{\mega\hertz} &
\SI{6.667}{\nano\second} & $99.6699\%$ & $99.7979\%$ &
$1.283\times10^{-3}$ \\
mK, primary & Generic $0.73\pi$ & \SI{300}{\mega\hertz} &
\SI{6.667}{\nano\second} & $99.5671\%$ & $99.7621\%$ &
$1.957\times10^{-3}$ \\
mK, slow & $Z_\pi$ & \SI{100}{\mega\hertz} &
\SI{20.000}{\nano\second} & $98.5894\%$ & $99.2362\%$ &
$6.540\times10^{-3}$ \\
mK, slow & $X_\pi$ & \SI{100}{\mega\hertz} &
\SI{20.000}{\nano\second} & $98.5894\%$ & $99.2362\%$ &
$6.540\times10^{-3}$ \\
mK, slow & $X_{\pi/2}$ & \SI{100}{\mega\hertz} &
\SI{20.000}{\nano\second} & $99.0231\%$ & $99.3906\%$ &
$3.707\times10^{-3}$ \\
mK, slow & Generic $0.73\pi$ & \SI{100}{\mega\hertz} &
\SI{20.000}{\nano\second} & $98.7214\%$ & $99.2788\%$ &
$5.631\times10^{-3}$ \\
mK, fast & $Z_\pi$ & \SI{500}{\mega\hertz} &
\SI{4.000}{\nano\second} & $99.7123\%$ & $99.8497\%$ &
$1.377\times10^{-3}$ \\
mK, fast & $X_\pi$ & \SI{500}{\mega\hertz} &
\SI{4.000}{\nano\second} & $99.7123\%$ & $99.8497\%$ &
$1.377\times10^{-3}$ \\
mK, fast & $X_{\pi/2}$ & \SI{500}{\mega\hertz} &
\SI{4.000}{\nano\second} & $99.8014\%$ & $99.8788\%$ &
$7.757\times10^{-4}$ \\
mK, fast & Generic $0.73\pi$ & \SI{500}{\mega\hertz} &
\SI{4.000}{\nano\second} & $99.7395\%$ & $99.8576\%$ &
$1.184\times10^{-3}$ \\
\SI{4}{\kelvin} stress & $Z_\pi$ & \SI{300}{\mega\hertz} &
\SI{6.667}{\nano\second} & $98.3299\%$ & $99.0883\%$ &
$7.685\times10^{-3}$ \\
\SI{4}{\kelvin} stress & $X_\pi$ & \SI{300}{\mega\hertz} &
\SI{6.667}{\nano\second} & $98.3299\%$ & $99.0883\%$ &
$7.685\times10^{-3}$ \\
\SI{4}{\kelvin} stress & $X_{\pi/2}$ & \SI{300}{\mega\hertz} &
\SI{6.667}{\nano\second} & $98.8425\%$ & $99.2743\%$ &
$4.363\times10^{-3}$ \\
\SI{4}{\kelvin} stress & Generic $0.73\pi$ &
\SI{300}{\mega\hertz} & \SI{6.667}{\nano\second} &
$98.4859\%$ & $99.1397\%$ & $6.619\times10^{-3}$ \\
\end{tabular}
\end{ruledtabular}
\end{table*}

\section{Bright-projector bus validation and phase-cycled biased-erasure closure}
\label{app:2q_projector_extension}

This appendix records the Regime-F architecture diagnostic.  It tests
whether the same symmetry-generated bright-state projector used in the
single-qubit compilers can also define an effective phonon-bus
entangler whose extracted channel remains phase-biased and compatible
with heralded leakage.  Regime~F is an effective $\Lambda$-level
validation of the projector-force model, not a microscopic
hardware-level two-qubit theorem.

\subsection{Projector-force identity}

For two defects coupled to a shared mechanical mode, consider
\begin{equation}
K_I(t) =
\left(f_1P_{b1}+f_2P_{b2}\right)
\left(ae^{-i2\pi\delta t}+a^\dagger e^{i2\pi\delta t}\right),
\end{equation}
where $K=H/h$ and $P_{bi}=|b_i\rangle\langle b_i|$ is the logical bright-state
projector.  Since $P_{b1}$ and $P_{b2}$ commute, the Magnus expansion
closes exactly.  At $T=1/\delta$ the displacement vanishes and
\begin{equation}
U(T)=\exp\left[
i\frac{2\pi}{\delta^2}(f_1P_{b1}+f_2P_{b2})^2
\right].
\end{equation}
After removing single-qubit phases, the nonlocal part is
\begin{equation}
U_{\rm ent}\sim
\exp\left[
i\frac{4\pi f_1f_2}{\delta^2}P_{b1}P_{b2}
\right].
\end{equation}
For $f_1=f_2=f$, the CZ-class condition is $f/\delta=1/2$, giving
$T_{\rm CZ}=1/(2f)$ in ordinary-frequency units.  Choosing
$P_b=(I+Z)/2$ gives
\begin{equation}
P_{b1}P_{b2}=\frac{1}{4}(I+Z_1+Z_2+Z_1Z_2),
\end{equation}
so the nonlocal part is locally equivalent to CZ after single-qubit
$Z$ corrections.

\subsection{\texorpdfstring{$\Lambda$}{Lambda}-level CZ extraction}

The validation embeds the two logical qubits in a two-qutrit
$\{|0\rangle,|1\rangle,|a\rangle\}^{\otimes2}$ space and couples the
logical bright projectors to a truncated oscillator.  The model
includes bus damping, logical dephasing, auxiliary-state relaxation,
and detected auxiliary leakage.  The main Regime-F point uses
$f=\SI{2.0}{\mega\hertz}$, $\delta=\SI{4.0}{\mega\hertz}$,
$Q=10^7$, $T=\SI{0.25}{\micro\second}$, $n_{\rm Fock}=16$,
$T_1=\SI{1}{\milli\second}$, $T_{1\rho}=\SI{0.5}{\milli\second}$,
and detection efficiency $\eta_{\rm det}=0.975$.

\begin{table}[t]
\caption{\label{tab:regime_f_channel_sm}Regime-F $\Lambda$-level
CZ-class channel extraction.  The Pauli components are the
target-frame conditional Pauli-twirl diagnostics after heralded
auxiliary leakage is separated.}
\begin{ruledtabular}
\begin{tabular}{lcc}
Quantity & Value & Percent \\
\colrule
Conditional $\Favg$ & $0.9980961930$ & $99.8096\%$ \\
Mean logical survival & $0.9995001250$ & $99.9500\%$ \\
$p_{\mathrm{era}}$ & $4.873782{\times}10^{-4}$ & $0.04874\%$ \\
$p_Z$ & $2.377468{\times}10^{-3}$ & $0.23775\%$ \\
$p_{ZZ}$ & $2.290797{\times}10^{-6}$ & $0.000229\%$ \\
$p_{XY}$ & $0$ & $0$ \\
Missed leakage & $1.249688{\times}10^{-5}$ & $0.00125\%$ \\
Bus nonvacuum mean & $9.458929{\times}10^{-5}$ & $0.00946\%$ \\
\end{tabular}
\end{ruledtabular}
\end{table}

The extracted channel is strongly phase-biased: in the tested
projector-force model the measurable non-erasure faults are $Z$ and
$ZZ$, while $XY$ terms vanish within numerical precision.  It is,
however, not erasure-dominated at this operating point, since
$p_Z/p_{\mathrm{era}}=4.88$.

\begin{table}[t]
\tiny
\caption{\label{tab:regime_f_cross_checks_sm}Regime-F cross-checks.
The nearly lossless row checks oscillator truncation; the $Q=10^6$
row is a bus-damping stress test.}
\begin{ruledtabular}
\begin{tabular}{lccccc}
Case & $\Favg$ & $p_{\mathrm{era}}$ & $p_Z$ & $p_{ZZ}$ & $p_{XY}$ \\
\colrule
Lossless & $0.99999784$ & $1.89{\times}10^{-16}$ &
$3.60{\times}10^{-6}$ & $0$ & $3.47{\times}10^{-17}$ \\
$Q=10^6$ & $0.98501275$ & $4.87{\times}10^{-4}$ &
$1.85{\times}10^{-2}$ & $2.57{\times}10^{-4}$ &
$2.78{\times}10^{-17}$ \\
\end{tabular}
\end{ruledtabular}
\end{table}

\subsection{Phase-cycled E-sector diagnostic}

The two-lune echo used for the SATD channel cancels the leading
$E$-sector geometric term.  Regime~F also tests the roots-of-unity
generalization
\begin{equation}
\phi_j=\phi_0+\frac{2\pi j}{N},\qquad j=0,\ldots,N-1 .
\end{equation}
Low azimuthal harmonics cancel according to
$\sum_j e^{im\phi_j}=0$ unless $m$ is divisible by $N$.  The numerical
open-lune diagnostic gives the log--log slopes in
Table~\ref{tab:phase_cycle_sm}.

\begin{table}[t]
\caption{\label{tab:phase_cycle_sm}E-sector phase-cycle slopes.  The
$E_{Sx}$, $N=4$ row is below the numerical fitting floor over the
sampled perturbation range.}
\begin{ruledtabular}
\begin{tabular}{lcccc}
Sector & $N=1$ & $N=2$ & $N=3$ & $N=4$ \\
\colrule
$E_{Sx}$ & $0.995$ & $2.995$ & $4.939$ & below floor \\
$E_{Sy}$ & $0.988$ & $2.994$ & $5.050$ & $6.986$ \\
\end{tabular}
\end{ruledtabular}
\end{table}

Thus the two-lune result reproduces cubic suppression, while the
three- and four-lune cycles suppress at least as strongly as the
simple $N+1$ heuristic in this diagnostic.  A general analytic law
requires an explicit harmonic inventory for the chosen path family.

\subsection{Repeated-syndrome XZZX proxy}

Finally, we ran a scheduled repeated-syndrome XZZX proxy using the
Regime-A single-qubit channel and the extracted Regime-F two-qubit
channel.  The proxy includes noisy repeated syndrome rounds,
measurement faults, known-location erasure weights, and scheduled
correlated $ZZ$ pair faults.  It is circuit-level only in this limited
sense; it is not a full Stim gate-by-gate detector-error-model circuit.

At scale one the physical rates used by the proxy are
$p_Z^{\rm data}=2.908734\times10^{-3}$,
$p_X^{\rm data}=4.0\times10^{-5}$,
$p_{\mathrm{era}}^{\mathrm{data}}=4.943689\times10^{-3}$,
$p_{ZZ}^{\rm pair}=2.290797\times10^{-6}$, and
$p_{\rm meas}=5.0\times10^{-4}$.  With $400$ trials per point, the
base-scale runs show no logical failures for distances $3$, $5$, and
$7$; the Wilson upper bound is $9.5\times10^{-3}$ for each zero-count
row.  Stress points at scale $30$ give $p_L=0.2725$, $0.2125$, and
$0.1750$ for distances $3$, $5$, and $7$, respectively.  At scale
$50$ the proxy is near the high-noise failure regime
($p_L\simeq0.5$).

\subsection{Force-budget reminder}

A simple off-resonant $\Lambda$ elimination gives the desired force but
also exposes the main feasibility constraint.  With
\begin{equation}
\frac{H}{\hbar}
=\Delta |a\rangle\langle a|
+\left[\frac{\Omega_c}{2}+g_0 a e^{-i\delta t}\right]
|a\rangle\langle b|+\mathrm{h.c.},
\end{equation}
adiabatic elimination of $|a\rangle$ gives
\begin{equation}
\frac{H_{\rm eff}}{\hbar}
=-\frac{|\Omega_c|^2}{4\Delta}P_b
-\frac{|g_0|^2}{\Delta}a^\dagger aP_b
-\left[
\frac{\Omega_c^\ast g_0}{2\Delta}ae^{-i\delta t}
+\mathrm{h.c.}\right]P_b .
\end{equation}
Thus $f=\Omega_c g_0/(2\Delta)$.  In the Rabi-rate convention used in
the main text, $g_0=\Omega_{\rm zpf}/2$, so
\begin{equation}
f=\frac{\Omega_c\Omega_{\rm zpf}}{4\Delta}
=\sqrt{p_a}\frac{\Omega_{\rm zpf}}{2},\qquad
p_a\simeq\left(\frac{\Omega_c}{2\Delta}\right)^2 .
\end{equation}

\begin{table}[t]
\tiny
\caption{\label{tab:2q_force_budget_sm}Zero-point-strain feasibility
estimates for the dispersive projector-force route using the SiV strain
susceptibility scale.}
\begin{ruledtabular}
\begin{tabular}{lcccc}
Geometry & $V_{\rm eff}$ & $\epsilon_{\rm zpf}$ &
$\Omega_{\rm zpf}$ & $F_{\max}$ \\
\colrule
HBAR disk & $3.14{\times}10^{-15}$ & $9.36{\times}10^{-11}$ & 0.122 MHz & 0.006 MHz \\
Ext.\ nanobeam & $2.50{\times}10^{-19}$ & $1.05{\times}10^{-8}$ & 13.6 MHz & 0.682 MHz \\
Conf.\ nanobeam & $3.00{\times}10^{-20}$ & $3.03{\times}10^{-8}$ & 39.4 MHz & 1.97 MHz \\
Aggr.\ nanobeam & $3.00{\times}10^{-21}$ & $9.58{\times}10^{-8}$ & 125 MHz & 6.23 MHz \\
\end{tabular}
\end{ruledtabular}
\end{table}

Here $F_{\max}$ assumes $p_a=1\%$ and $V_{\rm eff}$ is in
${\rm m}^3$.  Thus the fast $F=10$ MHz effective-model point is an
aspirational limit for extremely confined SiV nanophononics, larger
tolerated auxiliary admixture, or a different coupling mechanism.
HBAR-scale zero-point strain is too weak for this dispersive
projector-force route.

\subsection{Scope and future work}

Regime~F closes the architecture at the effective projector-force
level and defines the next validation targets.  The immediate future
work is to derive the projector force from a microscopic actuator and
then validate auxiliary leakage, off-resonant carrier excitation,
actuator-transfer distortion, bus damping, thermal occupation,
crosstalk, and regenerated transverse error channels.  The scheduled
two-sector diagnostic in Sec.~O tests the first detector-model layer
for the Regime-A biased-erasure channel; a calibrated
architecture-level detector-error model with microscopic actuator
timing remains the next step before making a device-level threshold
claim.

\FloatBarrier

\section*{Code and reproducibility statement}

The public reproduction package for this work is available at
\url{https://github.com/E-zClap/PhononQ}.  The repository contains the
curated manuscript source, figure PDFs, Python simulation modules,
control-pulse CSV, saved JSON sweep data in \texttt{m5/data/}, and
compact validation outputs in \texttt{code/build/}.  These files cover
the data products used for the main-text figures, supplemental tables,
biased-erasure channel extraction, QEC Monte Carlo diagnostics,
sector-injection tests, single-shot checks, SiC comparison, lab-frame
multitone validation, and the prospective two-qubit bus model.

The recommended lightweight reproduction path is to regenerate figures
from the saved data.  From the repository's \texttt{code/} directory,
the relevant commands are \texttt{python replot\_figures.py},
\texttt{python qec\_full\_simulation.py --plot-only}, and
\texttt{python qec\_channel\_sensitivity.py}.  Full simulation
re-execution is also supported through commands such as
\texttt{python main.py --quick}, selected \texttt{python main.py --sweep
1} jobs, and \texttt{python qec\_full\_simulation.py --toric}; these
jobs are slower and small Monte Carlo differences may occur across
library versions, hardware, or random seeds.  The repository file
\texttt{REPRODUCIBILITY.md} lists the principal replotting and
re-execution commands.

\clearpage
\twocolumngrid


\end{document}